\newcommand{\denselistA}{ \itemsep -1pt\topsep-20pt\partopsep-20pt }
\newcommand{\hide}[1]{}
\newcommand{\xhdr}[1]{\vspace{1.7mm}\noindent{{\bf #1.}}}
\newcommand{\casc}{{\mathbf{t}}}
\newcommand{\alphs}{{\mathbf{A}}}
\newcommand{\netrate}{{\textsc{Net\-Rate}}\xspace}
\newcommand{\expo}{{\textsc{Exp}}\xspace}
\newcommand{\pow}{{\textsc{Pow}}\xspace}
\newcommand{\ray}{{\textsc{Ray}}\xspace}
\newcommand{\unobs}{{\infty}}
\begin{document}

\icmltitlerunning{Estimating Diffusion Network Structures}

\twocolumn[

\icmltitle{Estimating Diffusion Network Structures: Recovery Conditions, \\ Sample Complexity \& Soft-thresholding Algorithm}

\icmlauthor{Hadi Daneshmand$^1$}{hadi.daneshmand@tue.mpg.de}

\icmlauthor{Manuel Gomez-Rodriguez$^1$}{manuelgr@tue.mpg.de}

\icmlauthor{Le Song$^2$}{lsong@cc.gatech.edu}

\icmlauthor{Bernhard Sch\"{o}lkopf$^1$}{bs@tue.mpg.de}

\icmladdress{$^1$MPI for Intelligent Systems and $^2$Georgia Institute of Technology}

\icmlkeywords{inferring networks, diffusion}

\vskip 0.3in
]

\begin{abstract}
Information spreads across social and technological networks, but often the network structures are hidden from us and we only observe the traces left by the diffusion 
processes, called \emph{cascades}. Can we recover the hidden network structures from these observed cascades? What kind of cascades and how many cascades 
do we need? Are there some network structures which are more difficult than others to recover? Can we design efficient inference algorithms with pro\-va\-ble 
guarantees?

Despite the increasing availability of cascade\- data and methods for inferring networks from these data, a thorough theo\-re\-ti\-cal un\-ders\-tan\-ding of the above questions 
remains largely unexplored in the literature. In this paper, we investigate the network structure inference pro\-blem for a general fami\-ly of continuous-time diffu\-sion models 
using an $\ell_1$-regularized likelihood ma\-xi\-mi\-zation framework. 
We show that, as long as the cascade sampling process satisfies a na\-tu\-ral inco\-herence condition, our framework can recover the correct network structure with high 
pro\-ba\-bi\-li\-ty if we observe $O(d^3 \log N)$ cascades, where $d$ is the maximum number of parents of a node and $N$ is the total number of nodes. Moreover, we develop a 
simple and efficient soft-thresholding inference algorithm, which we use to illustrate the consequences of our theoretical results, and show that our framework 
outperforms other alternatives in practice.

%


\end{abstract}

\vspace{-3mm}
\section{Introduction}
\vspace{-2mm}

\setlength{\abovedisplayskip}{4pt}
\setlength{\abovedisplayshortskip}{1pt}
\setlength{\belowdisplayskip}{4pt}
\setlength{\belowdisplayshortskip}{1pt}
\setlength{\jot}{3pt}

\setlength{\textfloatsep}{3ex}

\label{sec:intro}
Diffusion of information, behaviors, diseases, or more ge\-ne\-rally, \emph{contagions} can be naturally mo\-deled as a stochastic process that occur over the edges
of an underlying network~\cite{rogers95diffusion}.
In this scenario, we often observe the temporal traces that the diffusion generates, called \emph{cascades}, but the edges of the network that gave rise to the diffu\-sion 
remain unobservable~\cite{adar05epidemics}.
For example, blogs or media sites often publish a new piece of information without explicitly citing their sources.
Marketers may note when a social media user decides to adopt a new behavior but cannot tell which neighbor in the social network influenced them to do so.
Epidemiologist observe\- when a person gets sick but usually cannot tell who infected her.
In all these cases, given a set of cascades and a diffusion model, the network inference problem consists of inferring the edges (and model parameters) of
the unobserved underlying network~\cite{phdmanuelgr2013}. 

The network inference problem has attracted significant attention in recent years~\cite{saito2009learning, manuel10netinf, manuel11icml, snowsill2011kdd, nandu12nips},
since it is essential to reconstruct and predict the paths over which information can spread, and to ma\-xi\-mize sales of a product or stop infections.
Most previous work has focused on de\-ve\-lo\-ping network inference algorithms and evaluating their performance experimentally on different synthetic and real networks, and a 
ri\-go\-rous theoretical analysis of the problem has been missing.
However, such analysis is of outstanding interest since it would enable us to answer many fundamental open questions. For example,
which conditions are sufficient to gua\-ran\-tee that we can recover a network given a large number of cascades?
If these conditions are satisfied, how many cascades are sufficient to infer the network with high pro\-ba\-bi\-li\-ty?
Until recently, there has been a paucity of work along this direction~\cite{netrapalli12,abrahaotrace} which provide only partial views of the problem. None of them is able to identify the recovery condition relating to the interaction between the network structure and the cascade sampling process, which we will make precise in our paper.

\xhdr{Overview of results} We consider the network inference problem under the continuous-time diffusion model recently introduced by~\citet{manuel11icml}.
%
We identify a natural incoherence condition for such a model which depends on both the network structure, the diffu\-sion parameters and the sampling process of the cascades. This condition 
captures the intuition that we can recover the network structure if the co-occurrence of a node and its non-parent nodes is small in the cascades. Furthermore, we show that, if this condition holds 
for the po\-pu\-lation case, we can recover the network structure \-using\- an $\ell_1$-regularized ma\-xi\-mum like\-li\-hood estimator and $O(d^3 \log N)$ cascades, and the probability of suc\-cess is 
approa\-ching 1 in a rate exponential in the number of cascades. Importantly, if this condition also holds for the finite sample case, then the guarantee can be improved to $O(d^2 \log N)$ 
cascades.
Beyond theoretical results, we also propose a new, efficient and simple pro\-xi\-mal gradient algorithm to solve the $\ell_1$-regularized maxi\-mum likelihood estimation. The algorithm is especially 
well-suited for our problem since it is highly scalable and naturally finds sparse estimators, as desired, by using soft-thresholding.
Using this algorithm, we perform various experiments illus\-tra\-ting the consequences of our theoretical results and demonstrating that it typically outperforms other state-of-the-art algorithms.

{\bf Related work.}
%
\citet{netrapalli12} propose a maximum likelihood network inference method for a variation of the discrete-time independent cascade model~\cite{kempe03maximizing} and show that,
for general net\-works satisfying a \emph{correlation decay}, the estimator recovers the network structure given $O(d^2 \log N)$ cascades, and the probability of success is approaching 1 in
a rate exponential in the number of cascades. The rate they obtained is on a par with our results. However, their discrete diffusion model is less realistic in practice, and the correlation decay
condition is rather restricted: essentially, on average each node can only infect one single node per cascade.
Instead, we use a general continuous-time diffusion model~\cite{manuel11icml}, which has been extensively validated in real diffusion data and extended in various ways by different
authors~\cite{wang2012feature,nandu12nips,du13aistats}.

%

\citet{abrahaotrace} propose a simple network in\-fe\-rence\- method, First-Edge, for a slightly different continuous-time independent cascade model~\cite{manuel10netinf}, and show that, for
general networks, if the cascade sources are chosen uniformly at random, the algorithm needs $O(N d \log N)$ cascades to recover the network structure and the probability of success is
approaching 1 only in a rate polynomial in the number of cascades. Additionally, they study trees and bounded-degree networks and show that, if the cascade sources are chosen
uniformly at random, the error decreases polynomially as long as $O(\log N)$ and $\Omega(d^9 \log^2 d \log N)$ cascades are recorded respectively.
In our work, we show that, for general networks satisfying a natural incoherence condition, our method outperforms the First-Edge algorithm and the algorithm for bounded-degree
networks in terms of rate and sample complexity.

\citet{gripon2013reconstructing} propose a network inference method for unordered cascades, in which nodes that are infected together in the same cascade are connected by a path
containing exactly the nodes in the trace, and give necessary and sufficient conditions for network inference.
However, they consider a restrictive, unrealistic scenario in which cascades are all three nodes long.

\vspace{-3mm}
\section{Continuous-Time Diffusion Model}
\label{sec:model}
\vspace{-2mm}
%
\begin{figure}[t] %
  \vspace{-3mm}
\centering
  \includegraphics[width=0.45\textwidth]{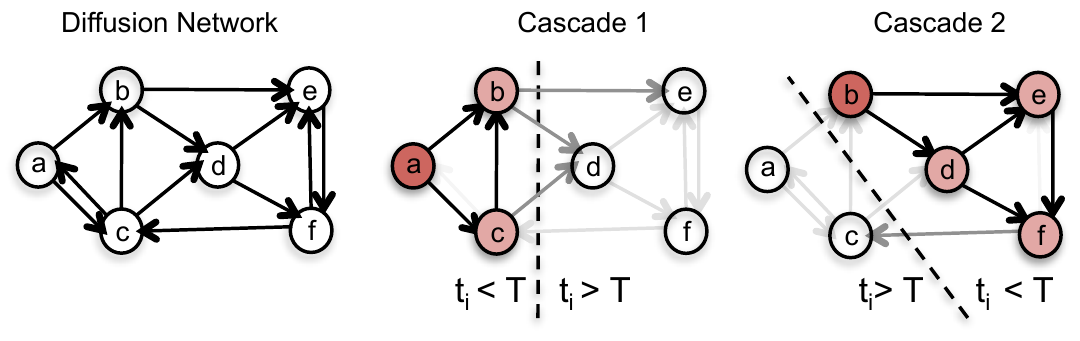}
  \vspace{-4mm}
  \caption{The diffusion network structure (left) is unknown and we only observe cascades, which are $N$-dimensional vectors recording the times when nodes get
  infected by contagions that spread (right).
  Cascade 1 is $(t_a, t_b, t_c, \infty, \infty, \infty)$, where $t_a < t_c < t_b$, and cascade 2 is $(\infty, t_b, \infty, t_d, t_e, t_f)$, where $t_b < t_d < t_e < t_f$.
  Each cascade contains a source node (dark red), drawn from a source distribution $\PP(s)$, as well as infected (light red) and uninfected (white) nodes,
  and it provides information on black and dark gray edges but does not on light gray edges.
  %
  } \label{fig:model}
  \vspace{-3mm}
\end{figure}

In this section, we revisit the continuous-time generative model for cascade data introduced by~\citet{manuel11icml}. The model associates each edge $j\rightarrow i$ with a transmission
function, $f(t_i|t_j ; \alpha_{ji})=f(t_i-t_j ; \alpha_{ji})$, a density over time parameterized by $\alpha_{ji}$. This is in contrast to previous discrete-time models which associate each edge with
a fixed infection probability~\cite{kempe03maximizing}. Moreover, it also differs from discrete-time models in the sense that events in a cascade are not generated iteratively in rounds, but
event timings are sampled directly from the transmission functions in the continuous-time model.
\begin{table*}[t]
    \vspace{-4mm}
    \caption{Functions.}
    \label{tab:functions}
    \begin{center}
    \begin{tabular*}{\textwidth}{@{\extracolsep{\fill}} l c c}
	  \hline
          \textbf{Function} & \textbf{ Infected node ($t_i < T$)} & \textbf{ Uninfected node ($t_i > T$)} \\
	$g_i(\casc; \boldsymbol{\alpha})$ &
	$\log h(\mathbf{t};\boldsymbol{\alpha})+\sum_{j :
	t_j < t_i} y(t_i | t_j ; \alpha_{j})$ & $\sum_{j : t_j < T} y(T | t_j ; \alpha_{j})$ \\

	$[\nabla y(\casc ; \boldsymbol{\alpha})]_k$ &
	$-y^{\prime}(t_i|t_k;\alpha_k)$ &
	$-y^{\prime}(T|t_k;\alpha_k)$ \\

	$[D(\casc ; \boldsymbol{\alpha})]_{kk}$ &
	$ - y^{\prime \prime} (t_i|t_k;\alpha_k) -
	h(\mathbf{t};\boldsymbol{\alpha})^{-1}H^{\prime \prime}(t_i|t_k;\alpha_k)
	$ &
	$ - y^{\prime \prime} (T|t_k;\alpha_k)$ \\
	\hline
    \end{tabular*}
    \end{center}
	\vspace{-4mm}
\end{table*}

\vspace{-3mm}
\subsection{Cascade generative process} \label{sec:generative-model}
\vspace{-2mm}

%
Given a \emph{directed} contact network, $\Gcal = (\Vcal,\Ecal)$ with $N$ nodes, the process begins with an infected source node, $s$, initially adopting certain \emph{contagion}
at time zero, which we draw from a source distribution $\PP(s)$. The contagion is transmitted from the source along her out-going edges to her direct neighbors. Each transmission
through an edge entails a \emph{random} transmission time, $\tau=t_j-t_j$, drawn from an associated transmission function $f(\tau ; \alpha_{ji})$.
We assume transmission times are independent, possibly distributed differently across edges, and, in some cases, can be arbitrarily large,  $\tau \rightarrow \infty$.
Then, the infected neighbors transmit the contagion to their respective neighbors, and the process continues.
We assume that an infected node remains infected for the entire diffusion process. Thus, if a node $i$ is infected by multiple neighbors, only the neighbor that first infects node $i$ will be the \emph{true
parent}. 
Figure~\ref{fig:model} illustrates the process.

%
Observations from the model are recorded as a set $C^{n}$ of cascades $\{\casc^1,\ldots,\casc^{n}\}$.
Each cascade $\casc^c$ is an $N$-dimensional vector $\casc^c:=(t^c_1,\ldots,t^c_N)$ recording when nodes are infected,
$t^c_k\in [0,T^c]\cup\{\unobs\}$. Symbol $\unobs$ labels nodes that are not infected during observation window $[0,T^c]$ --
it does not imply they are never infected. The `clock' is reset to 0 at the start of each cascade. We assume $T^c = T$ for all cascades;
the results generalize trivially.
%

%

\vspace{-3mm}
\subsection{Likelihood of a cascade}
\vspace{-2mm}
%
\citet{manuel11icml} showed that the likelihood of a cascade $\casc$ under the continuous-time independent cascade model is
\begin{multline} \label{eq:loglikelihood}
	f(\casc ; \alphs) = \prod_{t_i \leq T} \prod_{t_m > T} S(T | t_i ; \alpha_{i m}) \times \\
	\prod_{k : t_k < t_i} S(t_i | t_k ; \alpha_{ki}) \sum_{j : t_j < t_i}
	H(t_i | t_j ; \alpha_{ji}),
\end{multline}
where $\alphs=\cbr{\alpha_{ji}}$ denotes the collection of parameters, $S(t_i | t_j ; \alpha_{ji}) = 1-\int_{t_j}^{t_i} f(t | t_j ; \alpha_{ji})\, d t$ is the survival function and
$H(t_i | t_j ; \alpha_{ji})=f(t_i | t_j ; \alpha_{ji}) / S(t_i | t_j ; \alpha_{ji})$ is the hazard function.
The survival terms in the first line account for the probability that uninfected nodes survive to all infected nodes in the cascade up to $T$ and the survival and hazard terms
in the second line account for the likelihood of the infected nodes.
Then, assuming cascades are sampled independently, the likelihood of a set of cascades is the product of the likelihoods of individual cascades given by
Eq.~\ref{eq:loglikelihood}. For notational simplicity, we define $y(t_i | t_k ; \alpha_{ki}) := \log S(t_i | t_k ; \alpha_{k i})$, and
$h(\boldsymbol{t};\boldsymbol{\alpha}_i) := \sum_{k:t_k \leq t_i} H(t_i|t_k;\alpha_{ki})$ if $t_i \leq T$ and 0 otherwise.
%

\vspace{-3mm}
\section{Network Inference Problem}
\vspace{-2mm}
Consider an instance of the continuous-time diffusion model defined above with a contact network $\Gcal^{*} = (\Vcal^{*}, \Ecal^{*})$ and associated parameters $\cbr{\alpha^{*}_{ji}}$.
We denote the set of pa\-rents of node $i$ as $\Ncal^{-}(i) = \{ j \in \Vcal^{*} : \alpha^{*}_{ji} > 0 \}$ with cardinality $d_i = |\Ncal^{-}(i)|$ and the minimum positive transmission rate as
$\alpha^*_{\min,i} = \min_{j \,:\, \alpha^{*}_{ji}>0} \alpha^{*}_{ji}$.
Let $C^n$ be a set of $n$ cascades sampled from
the model, where the source $s \in \Vcal^*$ of each cascade is drawn from a source distribution $\PP(s)$. Then, the network inference problem consists of fin\-ding the directed edges and the
asso\-cia\-ted parameters using only the temporal information from the set of cascades $C^n$.

This problem has been cast as a maximum likelihood estimation problem~\cite{manuel11icml}
\begin{equation}
	\label{eq:opt-problem}
	\begin{array}{ll}
		\mbox{minimize$_{\boldsymbol{\alphs}}$} & - \frac{1}{n} \sum_{c \in C^n} \log f(\casc^c;\alphs) \\
		\mbox{subject to} & \alpha_{ji} \geq 0,\, i, j=1,\ldots,N, i \neq j,
	\end{array}
\end{equation}
where the inferred edges in the network correspond to those pairs of nodes with non-zero parameters,~\ie~$\hat{\alpha}_{ji} > 0$.

In fact, the problem in Eq.~\ref{eq:opt-problem} decouples into a set of independent smaller subproblems, one per node, where we infer the parents of each node and the parameters
asso\-cia\-ted with these incoming edges. Without loss of generality, for a particular node $i$, we solve the problem
\begin{equation}
	\begin{array}{ll}
		\mbox{minimize$_{\boldsymbol{\alpha}_i}$} & \ell^{n}(\boldsymbol{\alpha}_i) \\
		\mbox{subject to} & \alpha_{ji} \geq 0,\, j=1,\ldots,N, i \neq j,
	\end{array}
	\label{eq:opt-problem-one-node}
\end{equation}
where $\boldsymbol{\alpha}_i := \{\alpha_{ji}\,|\, j=1,\ldots,N, i \neq j\}$ are the relevant variables, and $\ell^{n}(\boldsymbol{\alpha}_i)  = - \frac{1}{n} \sum_{c \in C^n} g_i(\casc^c; \boldsymbol{\alpha}_i)$
corres\-ponds to the terms in Eq.~\ref{eq:opt-problem} involving $\boldsymbol{\alpha}_i$ (also see Table~\ref{tab:functions} for the definition of $g(\, \cdot \,; \boldsymbol{\alpha}_i)$).
In this subproblem, we only need to consider a super-neighborhood $\Vcal_i = \Rcal_i \cup \Ucal_i$ of $i$, with cardinality $p_i = |\Vcal_i| \leq N$, where $ \Rcal_i$ is the set of upstream nodes from 
which $i$ is reachable, $\Ucal_i$ is the set of nodes which are reachable from at least one node $j\in \Rcal_i$. 
Here, we consider a node $i$ to be reachable from a node $j$ if and only if there is a directed path from $j$ to $i$. 
We can skip all nodes in $\Vcal \backslash \Vcal_i$ from our analysis because they will never be infected in a cascade before $i$, and thus, the maximum likelihood estimation of the associated transmission 
rates will always be zero (and correct).


Below, we show that, 
as $n \rightarrow \infty$, the solution, $\hat{\boldsymbol{\alpha}}_i$, of the problem in Eq.~\ref{eq:opt-problem-one-node} is a consistent estimator of the true parameter $\boldsymbol{\alpha}^{*}_i$. 
%
However, it is not clear whether it is possible to recover the true network structure with this approach given a finite amount of cascades and, if so, how many cascades are needed.
We will show that by adding an $\ell_1$-regularizer to the objective function and solving instead the following optimization problem
\begin{equation}
	\label{eq:opt-problem-regularized-one-node}
	\begin{array}{ll}
		\mbox{minimize$_{\boldsymbol{\alpha}_i}$} & \ell^{n}(\boldsymbol{\alpha}_i) + \lambda_{n} ||\boldsymbol{\alpha}_i||_{1} \\
		\mbox{subject to} & \alpha_{ji} \geq 0,\, j=1,\ldots,N, i \neq j,
	\end{array}
\end{equation}
we can provide finite sample guarantees for recovering the network structure (and parameters).
Our analysis also shows that by se\-lec\-ting an appropriate value for the re\-gu\-la\-ri\-zation parameter $\lambda_n$, the solution of Eq.~\ref{eq:opt-problem-regularized-one-node} successfully
recovers the network structure with probability approa\-ching 1 exponentially fast in $n$.

In the remainder of the paper, we will focus on estimating the parent nodes of a particular node $i$. For simplicity, we will use $\boldsymbol{\alpha} = \boldsymbol{\alpha}_i$, $\alpha_{j} = \alpha_{ji}$, $\Ncal^{-} = \Ncal^{-}(i)$, $\Rcal = \Rcal_i$, $\Ucal = \Ucal_i$, $d = d_i$, $p_i = p$ and $\alpha^*_{\min}= \alpha^*_{\min,i}$.

\vspace{-3mm}
\section{Consistency}
\label{sec:consistency}
\vspace{-2mm}
\emph{Can we recover the hidden network structures from the observed cascades?} The answer is yes. We will show this by proving that the estimator provided by Eq.~\ref{eq:opt-problem-one-node} is consistent, meaning that as the number of cascades goes to {\bf infinity}, we can always recover the true network structure.

More specifically, \citet{manuel11icml} showed that the network inference problem defined in Eq.~\ref{eq:opt-problem-one-node} is convex in $\boldsymbol{\alpha}$ if the survival functions are log-concave
and the hazard functions are concave in 
$\boldsymbol{\alpha}$. Under these conditions, the Hessian matrix, $\mathcal{Q}^n = \nabla^2 \ell^n(\boldsymbol{\alpha})$,
can be expressed as the sum of a nonnegative diagonal matrix $\boldsymbol{D}^n$ and the outer product of a matrix $\boldsymbol{X}^n (\boldsymbol{\alpha})$ with itself,~\ie,
\begin{equation} \label{eq:consistency-hessian-matrix-factorization}
	\hspace{-2mm}
	\mathcal{Q}^n =  \boldsymbol{D}^n(\boldsymbol{\alpha}) + \smallfrac{1}{n} \boldsymbol{X}^n(\boldsymbol{\alpha})[\boldsymbol{X}^n(\boldsymbol{\alpha})]^\top.
\end{equation}
Here the diagonal matrix $\boldsymbol{D}^n(\boldsymbol{\alpha}) = \smallfrac{1}{n} \sum_{c} \boldsymbol{D}(\casc^c ; \boldsymbol{\alpha})$ is a sum over a set of diagonal matrices $\boldsymbol{D}(\casc^c ; \boldsymbol{\alpha})$, one for each cascade $c$ (see Table~\ref{tab:functions} for the definition of its entries); and $\boldsymbol{X}^n(\boldsymbol{\alpha})$ is the Hazard matrix%
\begin{equation}
	\boldsymbol{X}^n(\boldsymbol{\alpha}) = \left[\boldsymbol{X}(\casc^1 ; \boldsymbol{\alpha}) \,|\, \boldsymbol{X}(\casc^2 ; \boldsymbol{\alpha}) \,|\, \ldots \,|\, \boldsymbol{X}(\casc^n ; \boldsymbol{\alpha})\right],
\end{equation}
with each column $\boldsymbol{X}(\casc^c ; \boldsymbol{\alpha}) := h(\boldsymbol{t}^c;\boldsymbol{\alpha})^{-1} \nabla_{\boldsymbol{\alpha}} h(\boldsymbol{t}^c;\boldsymbol{\alpha})$.
Intuitively, the Hessian matrix captures the co-occurrence information of nodes in cascades.
Then, we can prove
\begin{theorem} \label{th:consistency}
	\vspace{-2mm}
	If the source probability $\PP(s)$ is strictly positive for all $s \in \Rcal$, then, the maximum likelihood estimator $\hat{\boldsymbol{\alpha}}$ given by the solution of
	Eq.~\ref{eq:opt-problem-one-node} is consistent.
	\vspace{-2mm}
\end{theorem}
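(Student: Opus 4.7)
The plan is to establish consistency by combining (i) uniform convergence of the empirical objective $\ell^n$ to its population limit $\ell(\boldsymbol{\alpha}) := \mathbb{E}[\ell^n(\boldsymbol{\alpha})]$, with (ii) the fact that $\boldsymbol{\alpha}^*$ is the unique minimizer of $\ell$ over the feasible cone $\{\boldsymbol{\alpha} \geq 0\}$. Since each $\ell^n$ is convex (from the Hessian decomposition in Eq.~\ref{eq:consistency-hessian-matrix-factorization}), argmin convergence then follows by the standard epi-convergence argument for convex functions.

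For step (i), I would first invoke the strong law of large numbers on the i.i.d.\ cascades to get pointwise convergence $\ell^n(\boldsymbol{\alpha}) \to \ell(\boldsymbol{\alpha})$. Because both sides are convex in $\boldsymbol{\alpha}$ on the open positive orthant, pointwise convergence automatically upgrades to uniform convergence on every compact subset. For step (ii), the Gibbs (information) inequality gives $\ell(\boldsymbol{\alpha}) \geq \ell(\boldsymbol{\alpha}^*)$ with equality only when the induced cascade distributions coincide, so uniqueness of the minimizer reduces to \emph{strict} convexity of $\ell$ at $\boldsymbol{\alpha}^*$, i.e.\ $\mathbb{E}[\mathcal{Q}^n(\boldsymbol{\alpha}^*)] \succ 0$ when restricted to coordinates in $\Vcal_i$.

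To verify this positive-definiteness I would use the decomposition $\mathcal{Q}^n = \boldsymbol{D}^n + \tfrac{1}{n}\boldsymbol{X}^n[\boldsymbol{X}^n]^\top$. Since $\mathbb{E}[\boldsymbol{D}^n(\boldsymbol{\alpha}^*)] \succeq 0$, it suffices to show that $M := \mathbb{E}[\boldsymbol{X}(\casc;\boldsymbol{\alpha}^*)\boldsymbol{X}(\casc;\boldsymbol{\alpha}^*)^\top]$ has trivial null space when projected onto $\Vcal_i$. This is precisely where the hypothesis $\PP(s)>0$ for all $s \in \Rcal$ enters: for any candidate null direction $v$ supported on $\Vcal_i$, I would exhibit a source $s \in \Rcal$ and a positive-measure event of cascade realizations (ordering and timings of infections) under which $v^\top \boldsymbol{X}(\casc;\boldsymbol{\alpha}^*) \neq 0$. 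Intuitively, starting the contagion at different upstream nodes $s$ produces enough diversity of Hazard vectors to span every coordinate in $\Vcal_i$.

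Combining the uniform convergence in (i) with the strict convexity/uniqueness in (ii) yields $\hat{\boldsymbol{\alpha}}^n \to \boldsymbol{\alpha}^*$ as $n\to\infty$. The main obstacle I anticipate is the positive-definiteness step: translating ``every source fires with positive probability'' into a genuine full-rank statement for $M$ requires a careful nondegeneracy argument on the transmission densities (ensuring the chosen orderings have positive likelihood) and a case analysis over directed paths from each $s \in \Rcal$ through $\Vcal_i$. Everything else is essentially classical M-estimator machinery specialized to the convex setting already set up in the paper.
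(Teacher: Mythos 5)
Your proposal is correct and follows essentially the same route as the paper: classical M-estimator consistency (continuity, compactness, uniform convergence, identification) reduced to positive-definiteness of the Hessian via the decomposition in Eq.~\ref{eq:consistency-hessian-matrix-factorization}, with the hypothesis that $\PP(s)>0$ for all $s\in\Rcal$ used to make the Hazard matrix full rank. The one step you flag as the main obstacle---turning positive source probabilities into a genuine full-rank statement---is precisely what the paper's Lemma~\ref{lem:nonsingularity} supplies, via an explicit ordering of nodes and cascades that puts $\boldsymbol{X}^n(\boldsymbol{\alpha})$ into the form $\left[T\,B\right]$ with $T$ upper triangular and nonzero on the diagonal.
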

\begin{proof}
We check the three criteria for consistency: continuity, compactness and identification of the objective function~\cite{newey94}. Continuity is ob\-vious. For compactness, since $L\rightarrow -\infty$
for both $\alpha_{ij}\rightarrow 0$ and $\alpha_{ij}\rightarrow \infty$ for all $i,j$ so we lose nothing imposing upper and lower bounds thus restricting to a compact subset. For the identification
condition, $\boldsymbol{\alpha} \neq \boldsymbol{\alpha}^* \Rightarrow \ell^n(\boldsymbol{\alpha}) \neq \ell^n(\boldsymbol{\alpha}^*)$,  we use Lemma~\ref{lem:hessian-positive-definite} and~\ref{lem:nonsingularity} (refer to Appendices~\ref{app:proof-hessian-positive-definite} and~\ref{app:proof-nonsingularity}), which establish that $\boldsymbol{X}^n(\boldsymbol{\alpha})$ has full
row rank as $n\rightarrow \infty$, and hence $\mathcal{Q}^n$ is positive definite.
\vspace{-4mm}
\end{proof}

\vspace{-3mm}
\section{Recovery Conditions}
\label{sec:conditions}
\vspace{-2mm}
In this section, we will find a set of sufficient conditions on the diffusion model and the cascade sampling process under which we can recover the network structure from 
{\bf finite samples}. These results allow us to address two questions:
\begin{itemize}[noitemsep,nolistsep]
  \item \emph{Are there some network structures which are more diffi\-cult than others to recover?}
  \item \emph{What kind of cascades are needed for the network structure recovery?}
\end{itemize}
The answers to these questions are intertwined. The diffi\-cul\-ty of finite-sample recovery depends crucially on an incoherence condition which is a function of both network structure, 
parameters of the diffusion model and the cascade sampling process. 
Intuitively, the sources of the cascades in a diffusion network have to be chosen in such a way that nodes without parent-child relation should co-occur less often compared to 
nodes with such relation. Many commonly used diffusion models and network structures can be naturally made to satisfy this condition.

More specifically, we first place two conditions on the Hessian of the population log-likelihood, $\mathbb{E}_c\sbr{\ell^n(\boldsymbol{\alpha}) } = \mathbb{E}_c \sbr{\log g(\casc^{c}; \boldsymbol{\alpha})}$, where the expectation here is taken over the distribution $\PP(s)$ of the source nodes, and the density $f(\casc^{c} | s)$ of the cascades $\casc^{c}$ given a source
node $s$. 
In this case, we will further denote the Hessian of $\mathbb{E}_c \sbr{\log g(\casc^{c}; \boldsymbol{\alpha})}$ evaluated at the true model parameter $\boldsymbol{\alpha}^*$ as $\Qcal^*$.
Then, we place two conditions on the Lipschitz continuity of $\boldsymbol{X}( \casc^c ; \boldsymbol{\alpha})$, and the boundedness of $\boldsymbol{X}( \casc^c ; \boldsymbol{\alpha}^*)$
and $\nabla g(\mathbf{t}^c;\boldsymbol{\alpha}^*)$ at the true model parameter $\boldsymbol{\alpha}^*$.
For simpli\-city, we will denote the subset of indexes associated to node $i$'{}s true parents as $S$, and its complement as $S^c$. Then, we use $\mathcal{Q}^{*}_{SS}$ to denote
the sub-matrix of $\mathcal{Q}^{*}$ indexed by $S$ and $\boldsymbol{\alpha}^{*}_{S}$ the set of parameters indexed by $S$.
%

{\bf Condition 1 (Dependency condition):} There exists constants $C_{min} > 0$ and $C_{max} > 0$ such that $\Lambda_{min}\left(\mathcal{Q}^{*}_{SS}\right) \geq C_{min}$ and
$\Lambda_{max}\left(\mathcal{Q}^{*}_{SS}\right) \leq C_{max}$ where $\Lambda_{min}(\cdot)$ and $\Lambda_{max}(\cdot)$ return the leading and the bottom eigenvalue of its
argument respectively. This assumption ensures that two connected nodes co-occur reasonably frequently in the cascades but are not deterministically related.

{\bf Condition 2 (Incoherence condition):} There exists $\varepsilon \in (0, 1]$ such that $||| \mathcal{Q}^{*}_{S^c S} \left(\mathcal{Q}^{*}_{SS}\right)^{-1} |||_{\infty} \leq 1 - \varepsilon$,
where $||| A |||_{\infty} = \max_{j} \sum_{k} |A_{jk}|$. This assumption captures the intuition that, node $i$ and any of its neighbors should get infected together in a cascade more
often than node $i$ and any of its non-neighbors.

{\bf Condition 3 (Lipschitz Continuity):} For any feasible cascade $\casc^c$, the Hazard vector $\boldsymbol{X}(\casc^c ; \boldsymbol{\alpha})$ is Lip\-schitz con\-ti\-nuous in the domain
$\{ \boldsymbol{\alpha} : \boldsymbol{\alpha}_S \geq \alpha^*_{\min}/2 \}$,
\begin{equation*}
	\|\boldsymbol{X}(\casc^c ; \boldsymbol{\beta}) - \boldsymbol{X}( \casc^c ; \boldsymbol{\alpha})\|_2 \leq k_1 \|\boldsymbol{\beta} - \boldsymbol{\alpha}\|_2,
\end{equation*}
where $k_1$ is some positive constant. As a consequence, the spectral norm of the difference, $n^{-1/2} (\boldsymbol{X}^n(\boldsymbol{\beta}) - \boldsymbol{X}^n(\boldsymbol{\alpha}))$,
is also bounded (refer to appendix~\ref{app:proof-bounded_hazard}),~\ie,
\begin{align}
	|||n^{-1/2}\big(\boldsymbol{X}^n(\boldsymbol{\beta})-\boldsymbol{X}^n(\boldsymbol{\alpha})\big)|||_2
	\leq 	k_1 \|\boldsymbol{\beta} -\boldsymbol{\alpha}\|_2.
	\label{eq:bounded_hazard}
\end{align}
Furthermore, for any feasible cascade $\casc^c$, $\boldsymbol{D}(\boldsymbol{\alpha})_{jj}$ is Lipschitz continuous for all $j \in \Vcal $,
\begin{equation*}
	|\boldsymbol{D}(\casc^c ; \boldsymbol{\beta})_{jj} - \boldsymbol{D}(\casc^c ; \boldsymbol{\alpha})_{jj}| \leq k_2 \|\boldsymbol{\beta} - \boldsymbol{\alpha}\|_2,
\end{equation*}
where $k_2$ is some positive constant.

{\bf Condition 4 (Boundedness):}  For any feasible cascade $\casc^c$, the absolute value of each entry in the gradient of its log-likelihood and in the Hazard vector, as evaluated at the true model parameter $\boldsymbol{\alpha}^*$, is bounded,
\begin{align*}
	\| \nabla g(\mathbf{t}^c;\boldsymbol{\alpha}^*) \|_{\infty} \leq k_3,\quad
	\|\boldsymbol{X}(\mathbf{t}^c;\boldsymbol{\alpha}^*)\|_{\infty} \leq k_4,
\end{align*}
where $k_3$ and $k_4$ are positive constants. Then the absolute value of each entry in the Hessian matrix $\mathcal{Q}^\ast$, is also bounded $|||\mathcal{Q}^*|||_{\infty} \leq k_5$.

{\bf Remarks for condition 1} As stated in Theorem~\ref{th:consistency}, as long as the source probability $\PP(s)$ is strictly positive for all $s \in \Rcal$, the maximum likelihood formulation
is strictly convex and thus there exists $C_{min} >0 $ such that $\Lambda_{min}\left(\mathcal{Q}^{*}\right) \geq C_{min}$.
%
%
Moreover, 
condition 4 implies that there exists $C_{max}>0$ such that $\Lambda_{max}\left(\mathcal{Q}^{*}\right) \leq C_{max}$.

{\bf Remarks for condition 2}
The incoherence condition depends, in a non-trivial way, on the network structure, diffusion parameters, observation window and source node distribution.
Here, we give some intuition by studying three small canonical examples.

First, consider the chain graph in Fig.~\ref{fig:chain} and assume that we would like to find the incoming edges to node $3$ when $T \rightarrow \infty$. Then, it is easy to show that the incoherence
condition is satisfied if $(P_0+P_1) / (P_0+P_1+P_2) < 1-\varepsilon$ and $P_0 / (P_0+P_1+P_2) < 1-\varepsilon$, where $P_i$ denotes the pro\-ba\-bi\-li\-ty of a node $i$ to be the source of a
cascade. Thus, for example, if the source of each cascade is chosen uniformly at random, the inequality is satisfied. Here, the incoherence condition depends on the source
node distribution.

Second, consider the directed tree in Fig.~\ref{fig:tree} and assume that we would like to find the incoming edges to node $0$ when $T \rightarrow \infty$. Then, it can be shown that the
incoherence condition is satisfied as long as (1) $P_1 > 0$, (2) ($P_2 > 0$) or ($P_5 > 0$ and $P_6 > 0$), and (3) $P_3 > 0$. As in the chain, the condition depends on the source node
distribution.

Finally, consider the star graph in Fig.~\ref{fig:star}, with exponential edge transmission functions, and assume that we would like to find the incoming edges to a leave node $i$ when $T < \infty$. 
Then, as long as the root node has a nonzero probability $P_0 > 0$ of being the source of a cascade, it can be shown that the incoherence condition reduces to the inequalities 
$\left(1-\frac{\alpha_{0j}}{\alpha_{0i}+\alpha_{0j}}\right) e^{-(\alpha_{0i}+\alpha_{0j}) T} + \frac{\alpha_{0j}}{\alpha_{0i}+\alpha_{0j}} < 1 - \varepsilon(1+e^{-\alpha_{0i} T})$, $j=1,\ldots,p\, :\, j \neq i$, 
which always holds for some $\varepsilon > 0$.
If $T \rightarrow \infty$, then the condition holds whenever $\varepsilon < \alpha_{0i}/(\alpha_{0i} + \max_{j : j \neq i} \alpha_{0j})$.
Here, the larger the ratio $\max_{j : j \neq i} \alpha_{0j} / \alpha_{0i}$ is, the smaller the maximum value of $\varepsilon$ for which the incoherence condition holds.
To summarize, as long as $P_0 > 0$, there is always some $\varepsilon > 0$ for which the condition holds, and such $\varepsilon$ value depends on the time
window and the parameters $\alpha_{0j}$.
\begin{figure}[t] %
	\vspace{-4mm}
\centering
  \subfigure[Chain]{\makebox[2cm][c]{\includegraphics[width=0.035\textwidth]{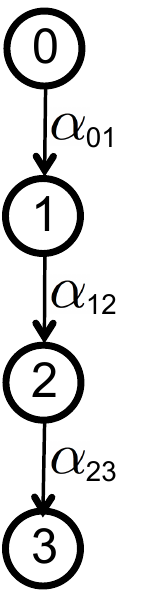} \label{fig:chain}}} \hspace{2mm}
  \subfigure[Tree]{\includegraphics[width=0.12\textwidth]{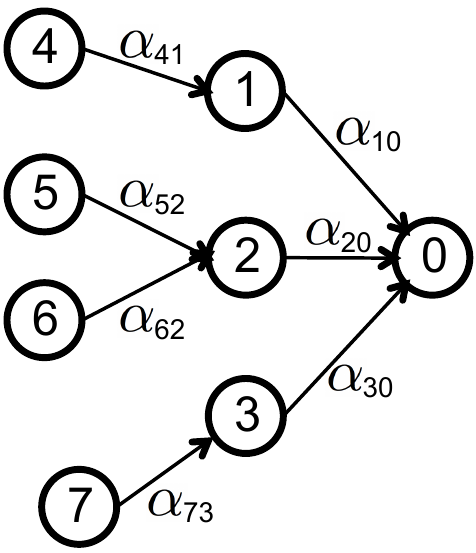} \label{fig:tree}} \hspace{2mm}
  \subfigure[Star]{\includegraphics[width=0.10\textwidth]{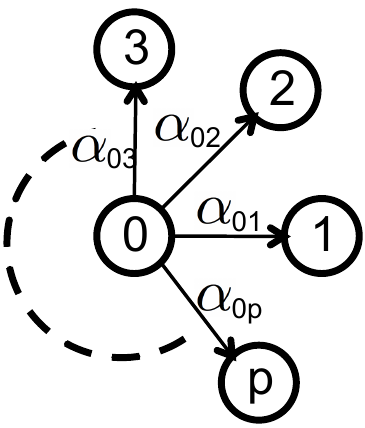} \label{fig:star}}
  \vspace{-4mm}
  \caption{Example networks.} \label{fig:canonical-networks}
  \vspace{-3mm}
\end{figure}

{\bf Remarks for conditions 3 and 4} Well-known pairwise transmission likelihoods such as exponential, Rayleigh or Power-law, used in previous
work~\cite{manuel11icml}, satisfy conditions 3 and 4.

\vspace{-3mm}
\section{Sample Complexity}
\label{sec:results}
\vspace{-2mm}

\emph{How many cascades do we need to recover the network structure?} We will answer this question by providing a sample complexity analysis of the optimization in 
Eq.~\ref{eq:opt-problem-regularized-one-node}. Given the conditions spelled out in Section~\ref{sec:conditions}, we can show that the number of cascades needs to grow 
polynomially in the number of true parents of a node, and depends only lo\-ga\-rith\-mi\-ca\-lly on the size of the network. This is a positive result, since the network size can 
be very large (millions or billions), but the number of parents of a node is usually small compared the network size. More specifically, for each individual node, we have
the following result:
\begin{theorem} \label{th:main-result}
\vspace{-2mm}
Consider an instance of the continuous-time diffusion model with parameters $\alpha^{*}_{ji}$ and associated edges $\Ecal^{*}$ such that
the model satisfies condition 1-4, and let $C^n$ be a set of
$n$ cascades drawn from the model. Suppose that the regularization parameter $\lambda_n$ is selected to satisfy
\begin{equation}
\lambda_n \geq  8 k_3 \frac{2-\varepsilon}{\varepsilon} \sqrt{\frac{\log p}{n}}.
\end{equation}
Then, there exist positive constants $L$ and $K$, independent of $(n, p, d)$, such that if
\begin{equation}
n > L d^3 \log p, 
\end{equation}
then the following properties hold with probability at least $1-2\exp(-K\lambda^2_n n)$:
\begin{enumerate}[noitemsep,nolistsep]
\item For each node $i \in \Vcal$, the $\ell_1$-regularized network in\-fe\-rence problem defined in Eq.~\ref{eq:opt-problem-regularized-one-node} has a unique solution,
and so uniquely specifies a set of incoming edges of node $i$.
\item For each node $i \in \Vcal$, the estimated set of incoming edges does not include any false edges and include all true edges.
\end{enumerate}
Furthermore, suppose that the finite sample Hessian matrix $\mathcal{Q}^{n}$ satisfies conditions 1 and 2. Then there exist positive constants $L$ and $K$, independent of $(n, p, d)$, such that the sample complexity can be improved to $n > L d^2 \log p$ with other statements remain the same.
\vspace{-2mm}
\end{theorem}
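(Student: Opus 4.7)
The plan is to establish support recovery via the primal-dual witness (PDW) construction, which is the standard device for analyzing $\ell_1$-regularized M-estimators in this setting. First I would construct a candidate primal-dual pair $(\tilde{\boldsymbol{\alpha}}, \tilde{z})$: set $\tilde{\boldsymbol{\alpha}}_{S^c} = 0$, solve the restricted program $\tilde{\boldsymbol{\alpha}}_S = \arg\min_{\boldsymbol{\alpha}_S \geq 0}\{\ell^n(\boldsymbol{\alpha}_S, 0) + \lambda_n \|\boldsymbol{\alpha}_S\|_1\}$, read off a subgradient $\tilde{z}_S$ from the KKT conditions of the restricted problem, and finally define $\tilde{z}_{S^c}$ through the stationarity condition $\nabla \ell^n(\tilde{\boldsymbol{\alpha}}) + \lambda_n \tilde{z} = 0$. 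If strict dual feasibility $\|\tilde{z}_{S^c}\|_\infty < 1$ holds, then convexity (guaranteed by positive-definiteness of $\mathcal{Q}^n_{SS}$ via Condition~1) together with the non-negativity constraint forces $\tilde{\boldsymbol{\alpha}}$ to be the unique optimum of the full program with exactly zero entries on $S^c$, yielding statement~1 and the "no false edges" half of statement~2.

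To prove strict dual feasibility, I would perform a mean-value expansion of the gradient around $\boldsymbol{\alpha}^*$, writing $\nabla \ell^n(\tilde{\boldsymbol{\alpha}}) = \nabla \ell^n(\boldsymbol{\alpha}^*) + \mathcal{Q}^n(\boldsymbol{\alpha}^*)\Delta + R^n$ with $\Delta = \tilde{\boldsymbol{\alpha}} - \boldsymbol{\alpha}^*$ and a second-order remainder $R^n$ controlled through the Lipschitz bounds of Condition~3 as $\|R^n\|_\infty \lesssim \|\Delta\|_2^2$. Splitting into $S$ and $S^c$ blocks, eliminating $\Delta_S$ through the $S$-block equation, and using $\Delta_{S^c} = 0$ gives
$$\tilde{z}_{S^c} \;=\; \mathcal{Q}^n_{S^c S}(\mathcal{Q}^n_{SS})^{-1}\bigl[\tilde{z}_S + \lambda_n^{-1}\nabla_S \ell^n(\boldsymbol{\alpha}^*) + \lambda_n^{-1} R^n_S\bigr] \;-\; \lambda_n^{-1}\bigl[\nabla_{S^c} \ell^n(\boldsymbol{\alpha}^*) + R^n_{S^c}\bigr].$$
The population incoherence condition bounds $\|\mathcal{Q}^*_{S^c S}(\mathcal{Q}^*_{SS})^{-1}\|_\infty$ by $1-\varepsilon$; I then absorb the gap between $\mathcal{Q}^n$ and $\mathcal{Q}^*$ to obtain $1-\varepsilon/2$ on the finite-sample operator, and show each of the remaining four terms is at most $\varepsilon/8$, which yields $\|\tilde{z}_{S^c}\|_\infty \leq 1 - \varepsilon/4$.

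The three concentration ingredients driving these bounds are: (i) $\mathbb{E}[\nabla \ell^n(\boldsymbol{\alpha}^*)] = 0$ by the score identity, and the entries are bounded by $k_3$ (Condition~4), so Hoeffding combined with a union bound over $p$ coordinates gives $\|\nabla \ell^n(\boldsymbol{\alpha}^*)\|_\infty \leq c\sqrt{\log p / n}$ with probability $1 - 2\exp(-K\lambda_n^2 n)$, which dictates the stated choice of $\lambda_n$; (ii) concentration of each entry of $\mathcal{Q}^n - \mathcal{Q}^*$ in $\ell_\infty$ norm via matrix Hoeffding/Bernstein using Condition~4, from which the $\ell_\infty$-operator-norm deviation of the $d \times d$ sub-block $\mathcal{Q}^n_{SS}$ (and of $\mathcal{Q}^n_{S^c S}$) inherits the usual $\sqrt{d}$ penalty when passing from spectral to $\ell_\infty$ norm; (iii) an a priori bound $\|\Delta_S\|_2 \leq c\sqrt{d}\lambda_n / C_{\min}$ from inverting the $S$-block system, which controls $\|R^n\|_\infty$ via Condition~3.

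The final step upgrades correct zero-pattern to full support recovery: converting $\|\Delta_S\|_2$ to an $\ell_\infty$ bound and demanding $\|\Delta_S\|_\infty < \alpha^*_{\min}$ forces all coordinates in $S$ to remain strictly positive, catching every true edge. Tracking the $\sqrt{d}$ factors accumulated in steps (ii) and (iii) yields the $n \gtrsim d^3 \log p$ scaling; if Conditions~1--2 hold for $\mathcal{Q}^n$ directly, step (ii) is unnecessary, one factor of $d$ is saved, and the rate improves to $n \gtrsim d^2 \log p$. The principal obstacle I anticipate is the $\ell_\infty$-operator-norm concentration in step (ii): the columns of the Hazard matrix are continuous-time quantities with intricate within-cascade dependence across coordinates, so care is needed to apply matrix concentration under Conditions~3--4 and still close a union bound over $p^2$ entries without losing the desired polynomial dependence on $d$.
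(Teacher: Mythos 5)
Your proposal is the paper's own argument: the same primal--dual witness construction with the restricted program on $S$, the same Taylor/mean-value expansion and block elimination to express $\tilde{z}_{S^c}$, the same three concentration ingredients (Hoeffding on the gradient dictating $\lambda_n$, entrywise concentration of $\mathcal{Q}^n$ around $\mathcal{Q}^*$ to transfer the dependency and incoherence conditions at the cost of one factor of $d$, and the a priori $\ell_2$ bound $\|\Delta_S\|_2 \leq 3\sqrt{d}\lambda_n/C_{\min}$ controlling the remainder), and the same mechanism for the $d^3\log p$ versus $d^2\log p$ distinction.

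There is one genuine issue, however: you drop the non-negativity constraints from the KKT system and then invoke the score identity to claim $\mathbb{E}[\nabla \ell^n(\boldsymbol{\alpha}^*)] = 0$. That identity fails here on $S^c$, because the true parameter sits on the boundary $\alpha^*_j = 0$ for $j \in S^c$; the population KKT conditions give $\mathbb{E}[\nabla_j \ell^n(\boldsymbol{\alpha}^*)] = \mu^*_j \geq 0$, where $\boldsymbol{\mu}^*$ is the dual vector of the population problem, and $\mu^*_j$ can be strictly positive for non-parents. If you center the gradient at zero, the term $\lambda_n^{-1}\|\nabla_{S^c}\ell^n(\boldsymbol{\alpha}^*)\|_\infty$ in your dual-feasibility bound concentrates around $\lambda_n^{-1}\mu^*_{S^c}$, which need not be $O(\varepsilon)$, and strict dual feasibility does not follow. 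The fix is exactly the step your construction omits: carry the dual variables $\boldsymbol{\mu}$ through the stationarity condition, set $\hat{\boldsymbol{\mu}}_{S^c} = \boldsymbol{\mu}^*_{S^c}$ in the witness, and apply Hoeffding to $\|\nabla\ell^n(\boldsymbol{\alpha}^*) - \boldsymbol{\mu}^*\|_\infty$ (this is the paper's Lemma~\ref{lemma:gradient}); the cancellation of $\hat{\boldsymbol{\mu}}_{S^c}$ against the mean of the gradient is what makes the $S^c$ block small. The rest of your outline, including the handling of the within-cascade dependence you flag as the main obstacle (which is resolved by Conditions~3--4 giving deterministic bounds per cascade plus independence across cascades), matches the paper.
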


{\bf Remarks.} The above sample complexity is proved for each node separately for re\-co\-ve\-ring its parents. Using a union bound, we can provide the sample complexity for re\-co\-ve\-ring the entire network structure by joining these parent-child relations together. The resulting sample complexity and the choice of regularization parameters will remain largely the same, except that the dependency on $d$ will change from $d$ to $d_{max}$ (the largest number of parents of a node), and the dependency on $p$ will change from $\log p$ to $2 \log N$ ($N$ the number of nodes in the network).

\vspace{-3mm}
\subsection{Outline of Analysis}
\vspace{-2mm}
The proof of Theorem~\ref{th:main-result} uses a technique called primal-dual witness method, previously used in the proof of sparsistency of Lasso~\cite{wainwright2009sharp} and high-dimensional Ising model selection~\cite{ravikumar2010high}.
To the best of our knowledge, the present work is the first that uses this technique in the context of diffusion network in\-fe\-rence.
First, we show that the optimal solutions to Eq.~\ref{eq:opt-problem-regularized-one-node} have shared sparsity pattern, and under a further condition, the solution is unique (proven in Appendix~\ref{app:uniqueness}):
\begin{lemma} \label{lemma:uniqueness}
\vspace{-2mm}
Suppose that there exists an optimal primal-dual solution $(\boldsymbol{\hat{\alpha}}, \boldsymbol{\hat{\mu}})$ to Eq.~\ref{eq:opt-problem-regularized-one-node} with an associated subgradient vector $\mathbf{\hat{z}}$ such that $||\mathbf{\hat{z}}_{S^c}||_{\infty} < 1$. Then, any optimal primal solution $\boldsymbol{\tilde{\alpha}}$ 
must have $\boldsymbol{\tilde{\alpha}}_{S^c} = 0$. Moreover, if the Hessian sub-matrix $\Qcal^n_{SS}$ is strictly positive definite, then $\boldsymbol{\hat{\alpha}}$ is the unique optimal
solution.
\vspace{-2mm}
\end{lemma}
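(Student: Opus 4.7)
My plan is to argue via the standard primal-dual witness style identification, then invoke strict convexity on $S$ for uniqueness.

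First I would write down the KKT conditions for the convex program in Eq.~\ref{eq:opt-problem-regularized-one-node}. Since $(\boldsymbol{\hat{\alpha}},\boldsymbol{\hat{\mu}})$ is primal-dual optimal with subgradient $\mathbf{\hat{z}}\in\partial\|\boldsymbol{\hat{\alpha}}\|_1$, stationarity gives $\nabla\ell^n(\boldsymbol{\hat{\alpha}})+\lambda_n\mathbf{\hat{z}}-\boldsymbol{\hat{\mu}}=0$, with $\boldsymbol{\hat{\mu}}\geq 0$, $\boldsymbol{\hat{\mu}}^\top\boldsymbol{\hat{\alpha}}=0$, and $\hat{z}_j=1$ on $\{\hat{\alpha}_j>0\}$ with $|\hat{z}_j|\leq 1$ elsewhere. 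Now let $\boldsymbol{\tilde{\alpha}}\geq 0$ be any other optimal primal point, so $F(\boldsymbol{\tilde{\alpha}})=F(\boldsymbol{\hat{\alpha}})$ where $F(\boldsymbol{\alpha})=\ell^n(\boldsymbol{\alpha})+\lambda_n\|\boldsymbol{\alpha}\|_1$. Convexity of $\ell^n$ yields
\begin{equation*}
\ell^n(\boldsymbol{\tilde{\alpha}})-\ell^n(\boldsymbol{\hat{\alpha}})\;\geq\;\langle\nabla\ell^n(\boldsymbol{\hat{\alpha}}),\boldsymbol{\tilde{\alpha}}-\boldsymbol{\hat{\alpha}}\rangle\;=\;\langle -\lambda_n\mathbf{\hat{z}}+\boldsymbol{\hat{\mu}},\boldsymbol{\tilde{\alpha}}-\boldsymbol{\hat{\alpha}}\rangle.
\end{equation*}
Using $\langle\mathbf{\hat{z}},\boldsymbol{\hat{\alpha}}\rangle=\|\boldsymbol{\hat{\alpha}}\|_1$, $\boldsymbol{\hat{\mu}}^\top\boldsymbol{\hat{\alpha}}=0$, and $\boldsymbol{\hat{\mu}}^\top\boldsymbol{\tilde{\alpha}}\geq 0$ (both vectors are nonnegative), and equating $F(\boldsymbol{\tilde{\alpha}})=F(\boldsymbol{\hat{\alpha}})$, I obtain
\begin{equation*}
\lambda_n\|\boldsymbol{\tilde{\alpha}}\|_1\;\leq\;\lambda_n\langle\mathbf{\hat{z}},\boldsymbol{\tilde{\alpha}}\rangle.
\end{equation*}

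Next I would split the inner product by the partition $S\cup S^c$. Since $|\hat{z}_j|\leq 1$ everywhere and $\boldsymbol{\tilde{\alpha}}\geq 0$, we have $\langle\mathbf{\hat{z}}_S,\boldsymbol{\tilde{\alpha}}_S\rangle\leq\|\boldsymbol{\tilde{\alpha}}_S\|_1$, while the strict dual feasibility hypothesis $\|\mathbf{\hat{z}}_{S^c}\|_\infty<1$ gives $\langle\mathbf{\hat{z}}_{S^c},\boldsymbol{\tilde{\alpha}}_{S^c}\rangle\leq\|\mathbf{\hat{z}}_{S^c}\|_\infty\|\boldsymbol{\tilde{\alpha}}_{S^c}\|_1<\|\boldsymbol{\tilde{\alpha}}_{S^c}\|_1$ whenever $\boldsymbol{\tilde{\alpha}}_{S^c}\neq 0$. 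Adding these and comparing with the previous display forces a contradiction unless $\boldsymbol{\tilde{\alpha}}_{S^c}=0$, establishing the shared sparsity pattern.

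For uniqueness, once every optimum has $\boldsymbol{\alpha}_{S^c}=0$, the program reduces to minimizing $\ell^n(\boldsymbol{\alpha}_S,0)+\lambda_n\|\boldsymbol{\alpha}_S\|_1$ over $\boldsymbol{\alpha}_S\geq 0$. Its Hessian is exactly the sub-matrix $\mathcal{Q}^n_{SS}$, which is strictly positive definite by hypothesis, so $\ell^n$ restricted to this face is strictly convex. The $\ell_1$ penalty (linear on the nonnegative orthant) preserves strict convexity of the restricted objective, hence the minimizer $\boldsymbol{\hat{\alpha}}_S$ is unique, and combined with $\boldsymbol{\hat{\alpha}}_{S^c}=0$ this pins down the full optimum.

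I do not anticipate a real obstacle: the only mild care is in handling the nonnegativity multiplier $\boldsymbol{\hat{\mu}}$ alongside the $\ell_1$ subgradient $\mathbf{\hat{z}}$ so that the cross terms $\boldsymbol{\hat{\mu}}^\top\boldsymbol{\tilde{\alpha}}$ and $\boldsymbol{\hat{\mu}}^\top\boldsymbol{\hat{\alpha}}$ are controlled by complementary slackness and sign constraints; once those are tracked, the contradiction argument and the reduction to a strictly convex subproblem are routine.
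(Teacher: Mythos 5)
Your proof is correct, but it takes a genuinely different route from the paper's. The paper first passes, via Lagrangian duality, to the equivalent norm-constrained formulation and invokes the fact (citing Mangasarian) that $\nabla\ell^{n}$ is constant across the set of optimal solutions of a differentiable convex program over a convex set; it then compares the KKT stationarity conditions at $\boldsymbol{\hat{\alpha}}$ and at a putative optimum $\boldsymbol{\tilde{\alpha}}$ with $\tilde{\alpha}_j>0$ for some $j\in S^c$, obtaining the coordinatewise sign contradiction $\hat{\mu}_j=-\lambda_n(1-\hat{z}_j)<0$. You instead run the direct convexity/zero-gap calculation (the standard Lasso primal-dual witness argument of Wainwright, adapted to the nonnegativity constraint): the first-order inequality at $\boldsymbol{\hat{\alpha}}$ combined with equality of objective values yields $\|\boldsymbol{\tilde{\alpha}}\|_1\le\langle\mathbf{\hat{z}},\boldsymbol{\tilde{\alpha}}\rangle$, and strict dual feasibility on $S^c$ then forces $\boldsymbol{\tilde{\alpha}}_{S^c}=0$; your handling of the multiplier terms ($\boldsymbol{\hat{\mu}}^\top\boldsymbol{\hat{\alpha}}=0$ by complementary slackness, $\boldsymbol{\hat{\mu}}^\top\boldsymbol{\tilde{\alpha}}\ge 0$ by sign constraints) is exactly what is needed and is correct. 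Your version is more self-contained, needing neither the constrained reformulation nor the constant-gradient lemma, whereas the paper's version localizes the contradiction to a single coordinate of the KKT system. The uniqueness step is identical in both: restrict to the face $\boldsymbol{\alpha}_{S^c}=0$ and use strict positive definiteness of $\mathcal{Q}^n_{SS}$. (Both you and the paper gloss over the point that positive definiteness of the Hessian at a single point does not by itself give strict convexity of the restricted problem over its whole domain, but this matches the paper's own level of rigor.)
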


Next, we will construct a primal-dual vector $(\boldsymbol{\hat{\alpha}}, \boldsymbol{\hat{\mu}})$ along with an associated subgradient vector $\mathbf{\hat{z}}$.
Furthermore, we will show that, under the
assumptions on $(n, p, d)$ stated in Theorem~\ref{th:main-result}, our constructed solution satisfies the KKT optimality conditions to Eq.~\ref{eq:opt-problem-regularized-one-node}, and the primal vector has the same sparsity pattern as the true parameter $\boldsymbol{\alpha}^*$,~\ie,
\begin{align}
\hat{\alpha}_j > 0,\ \forall j : \alpha^{*}_j > 0, \label{eq:correctness-1} \\
\hat{\alpha}_j = 0,\ \forall j : \alpha^{*}_j = 0. \label{eq:correctness-2}
\end{align}
Then, based on Lemma~\ref{lemma:uniqueness},  we can deduce that the optimal solution to Eq.~\ref{eq:opt-problem-regularized-one-node} correctly recovers the sparsisty pattern of $\boldsymbol{\alpha}^*$, and thus the incoming edges to node $i$.

More specifically, we start by realizing that a primal-dual optimal solution $(\boldsymbol{\tilde{\alpha}},\boldsymbol{\tilde{\mu}})$ to Eq.~\ref{eq:opt-problem-regularized-one-node} must satisfy 
the ge\-ne\-ra\-lized Karush-Kuhn-Tucker (KKT)
conditions~\cite{boyd2004convex}:
\begin{align}
0 \in \nabla \ell^{n}(\boldsymbol{\tilde{\alpha}}) + \lambda_{n} \mathbf{\tilde{z}} - \boldsymbol{\tilde{\mu}}, \label{eq:kkt-1} \\
\tilde{\mu}_j \tilde{\alpha}_j = 0, \label{eq:kkt-2} \\
\tilde{\mu}_j \geq 0, \label{eq:kkt-3} \\
\tilde{z}_j = 1,\ \forall \tilde{\alpha}_j > 0, \label{eq:kkt-4} \\
|\tilde{z}_j| \leq 1,\ \forall \tilde{\alpha}_j = 0, \label{eq:kkt-5}
\end{align}
where $\ell^{n}(\boldsymbol{\tilde{\alpha}}) = - \frac{1}{n} \sum_{c \in C^n} \log g(\casc^c; \boldsymbol{\tilde{\alpha}})$ and $\mathbf{\tilde{z}}$ denotes the sub\-gra\-dient of the $\ell_1$-norm.

Suppose the true set of parent of node $i$ is $S$. We construct the primal-dual vector $(\boldsymbol{\hat{\alpha}}, \boldsymbol{\hat{\mu}})$ and the associated sub\-gra\-dient vector 
$\mathbf{\hat{z}}$ in the following way
\begin{enumerate}[noitemsep,nolistsep]
\item We set $\boldsymbol{\hat{\alpha}}_S$ as the solution to the partial regularized maximum likelihood problem
\begin{equation} \label{eq:restricted-optimization}
\boldsymbol{\hat{\alpha}}_S = \argmin_{(\boldsymbol{\alpha}_S, 0), \boldsymbol{\alpha}_S \geq 0} \ \{ \ell^{n}(\boldsymbol{\alpha}) + \lambda_{n} ||\boldsymbol{\alpha}_S||_1 \}. \\
\end{equation}
Then, we set $\boldsymbol{\hat{\mu}}_S \geq 0$ as the dual solution associated to the primal solution $\boldsymbol{\hat{\alpha}}_S$.
\item We set $\boldsymbol{\hat{\alpha}}_{S^c} = 0$, so that condition~\eqref{eq:correctness-2} holds, and $\boldsymbol{\hat{\mu}}_{S^c} = \boldsymbol{\mu^*}_{S^c} \geq 0$, where
$\boldsymbol{\mu^*}$ is the optimal dual solution to the following problem:
\begin{equation}
	\label{eq:opt-problem-one-node-population}
	\begin{array}{ll}
		\mbox{minimize$_{\boldsymbol{\alpha}}$} & \EE_c\sbr{\ell^{n}(\boldsymbol{\alpha})} \\
		\mbox{subject to} & \alpha_{j} \geq 0,\, j=1,\ldots,N, i \neq j.
	\end{array}
\end{equation}
Thus, our construction satisfies condition~\eqref{eq:kkt-3}.
%
\item We obtain $\mathbf{\hat{z}}_{S^c}$ from~\eqref{eq:kkt-1} by substituting in the constructed $\boldsymbol{\hat{\alpha}}$,  $\boldsymbol{\hat{\mu}}$ and $\mathbf{\hat{z}}_S$.
\end{enumerate}
Then, we only need to prove that, under the stated sca\-lings of $(n, p, d)$, with high-probability, the remaining KKT
conditions~\eqref{eq:correctness-1},~\eqref{eq:kkt-2},~\eqref{eq:kkt-4} and~\eqref{eq:kkt-5} hold.

For simplicity of exposition, we first assume that the dependency and incoherence conditions hold for the finite sample Hessian matrix $\mathcal{Q}^{n}$. Later we will lift this restriction and only place these conditions on the population Hessian matrix $\mathcal{Q}^*$. The following lemma show that our constructed solution satisfies condition~\eqref{eq:correctness-1}:

\begin{lemma} \label{lemma:alphas-s}
\vspace{-2mm}
Under condition 3, if the regularization pa\-ra\-me\-ter is selected to satisfy
\begin{equation*}
\sqrt{d} \lambda_n \leq \frac{C_{\min}^2}{6 (k_2 + 2 k_1 \sqrt{C_{\max}})},
\end{equation*}
and $\| \nabla_s \ell^n(\boldsymbol{\alpha}^*) \|_{\infty} \leq \frac{\lambda_n}{4}$, then,
\begin{equation*}
\| \boldsymbol{\hat{\alpha}}_S - \boldsymbol{\alpha}^{*}_S  \|_2 \leq 3 \sqrt{d} \lambda_n / C_{\min} \leq \alpha^*_{\min}/2,
\end{equation*}
as long as $\alpha^*_{\min} \geq 6 \sqrt{d} \lambda_n / C_{\min}$.
\vspace{-2mm}
\end{lemma}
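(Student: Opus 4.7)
The plan is to apply a ball-and-convexity argument that is standard in the analysis of $\ell_2$-error bounds for regularized M-estimators. Set $\hat\Delta_S = \boldsymbol{\hat{\alpha}}_S - \boldsymbol{\alpha}^*_S$ and introduce the shifted partial objective
\begin{equation*}
F(\Delta_S) = \ell^n(\boldsymbol{\alpha}^*_S + \Delta_S, 0) - \ell^n(\boldsymbol{\alpha}^*_S, 0) + \lambda_n\bigl(\|\boldsymbol{\alpha}^*_S + \Delta_S\|_1 - \|\boldsymbol{\alpha}^*_S\|_1\bigr)
\end{equation*}
on the feasible set $\{\Delta_S : \boldsymbol{\alpha}^*_S + \Delta_S \geq 0\}$. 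Optimality of $\boldsymbol{\hat{\alpha}}_S$ in Eq.~\ref{eq:restricted-optimization} gives $F(\hat\Delta_S) \leq 0 = F(0)$, and $F$ is convex because $\ell^n$ is. If I can show that $F$ is strictly positive on the sphere $\|\Delta_S\|_2 = B := 3\sqrt{d}\lambda_n/C_{\min}$, convexity forces the minimizer inside the open ball, which is exactly the claimed bound. The extra hypothesis $\alpha^*_{\min} \geq 6\sqrt{d}\lambda_n/C_{\min}$ ensures $B \leq \alpha^*_{\min}/2$, so $\boldsymbol{\alpha}^*_S + \Delta_S$ is coordinate-wise positive on the ball and the $\ell_1$-difference simplifies to the linear term $\mathbf{1}^\top\Delta_S$.

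To verify $F > 0$ on the sphere, a Taylor expansion around $\boldsymbol{\alpha}^*$ yields
\begin{equation*}
F(\Delta_S) = \nabla_S \ell^n(\boldsymbol{\alpha}^*)^\top \Delta_S + \tfrac{1}{2}\, \Delta_S^\top \mathcal{Q}^n_{SS}(\bar{\boldsymbol{\alpha}})\, \Delta_S + \lambda_n\, \mathbf{1}^\top \Delta_S
\end{equation*}
for some $\bar{\boldsymbol{\alpha}}$ on the segment from $\boldsymbol{\alpha}^*$ to $\boldsymbol{\alpha}^*_S + \Delta_S$. The first-order term is bounded in absolute value by $\tfrac{\lambda_n}{4}\sqrt{d}\|\Delta_S\|_2$ using the hypothesis $\|\nabla_S\ell^n(\boldsymbol{\alpha}^*)\|_\infty \leq \lambda_n/4$ together with H\"older and $\|\Delta_S\|_1 \leq \sqrt{d}\|\Delta_S\|_2$, and the linear $\ell_1$ contribution is similarly bounded by $\lambda_n\sqrt{d}\|\Delta_S\|_2$. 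The whole argument then reduces to a uniform coercivity lower bound on the quadratic term.

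The main obstacle is certifying $\mathcal{Q}^n_{SS}(\bar{\boldsymbol{\alpha}}) \succeq \tfrac{C_{\min}}{2} I$ uniformly over $\bar{\boldsymbol{\alpha}}$ in the ball. Starting from $\Lambda_{\min}(\mathcal{Q}^n_{SS}(\boldsymbol{\alpha}^*)) \geq C_{\min}$ (condition 1 applied at the finite-sample level), I plan to perturb via the factorization $\mathcal{Q}^n = \boldsymbol{D}^n + \tfrac{1}{n}\boldsymbol{X}^n(\boldsymbol{X}^n)^\top$ from Eq.~\ref{eq:consistency-hessian-matrix-factorization}. Condition 3 supplies a Lipschitz bound with constant $k_2$ on $\boldsymbol{D}^n$ and Eq.~\ref{eq:bounded_hazard} a Lipschitz bound with constant $k_1$ on $n^{-1/2}\boldsymbol{X}^n$. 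Combining these with the algebraic identity $AA^\top - BB^\top = (A-B)A^\top + B(A-B)^\top$ and the spectral bound $\|n^{-1/2}\boldsymbol{X}^n(\boldsymbol{\alpha}^*)\|_2 \leq \sqrt{C_{\max}}$ (which holds because $\boldsymbol{D}^n$ is positive semidefinite and $\Lambda_{\max}(\mathcal{Q}^n_{SS}) \leq C_{\max}$) yields
\begin{equation*}
|||\mathcal{Q}^n_{SS}(\bar{\boldsymbol{\alpha}}) - \mathcal{Q}^n_{SS}(\boldsymbol{\alpha}^*)|||_2 \leq (k_2 + 2 k_1 \sqrt{C_{\max}})\, \|\bar{\boldsymbol{\alpha}} - \boldsymbol{\alpha}^*\|_2 + k_1^2\, \|\bar{\boldsymbol{\alpha}} - \boldsymbol{\alpha}^*\|_2^2.
\end{equation*}
With $\|\bar{\boldsymbol{\alpha}} - \boldsymbol{\alpha}^*\|_2 \leq B$ and the hypothesis $\sqrt{d}\lambda_n \leq C_{\min}^2 / \bigl(6(k_2 + 2 k_1\sqrt{C_{\max}})\bigr)$, the leading perturbation is at most $C_{\min}/2$ and the quadratic remainder is of lower order, giving the required spectral lower bound.

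Putting the three bounds together produces $F(\Delta_S) \geq \tfrac{C_{\min}}{4}\|\Delta_S\|_2^2 - c\,\lambda_n\sqrt{d}\,\|\Delta_S\|_2$ for an explicit constant $c$, which is strictly positive on $\|\Delta_S\|_2 = B$ for the stated multiple of $\sqrt{d}\lambda_n/C_{\min}$, closing the ball-and-convexity argument. The Hessian-perturbation step is the delicate one, because it is precisely where the Lipschitz bounds of condition 3, the eigenvalue control from condition 1, and the additive factorization of $\mathcal{Q}^n$ must be stitched together; the remaining steps are essentially bookkeeping around H\"older and the reduction of the $\ell_1$ difference to a linear form on the positive orthant.
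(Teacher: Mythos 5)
Your proposal follows essentially the same route as the paper's proof in Appendix~\ref{app:proof-balltrick}: the same shifted objective $G(\mathbf{u}_S)$, the same ball-and-convexity argument with radius $B$ a multiple of $\sqrt{d}\lambda_n/C_{\min}$, the same bounds on the gradient and $\ell_1$ terms, and the same Hessian perturbation via the factorization $\mathcal{Q}^n = \boldsymbol{D}^n + n^{-1}\boldsymbol{X}^n(\boldsymbol{X}^n)^\top$ with the constants $k_1$, $k_2$ and $\sqrt{C_{\max}}$. The only substantive difference is that you retain the factor $\tfrac{1}{2}$ on the second-order Taylor term (which the paper omits), so your arithmetic would force the radius constant to be somewhat larger than the stated $3/C_{\min}$; this is a constant-bookkeeping discrepancy inherited from the paper's own expansion rather than a gap in your argument.
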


Based on this lemma, we can then further show that the KKT conditions~\eqref{eq:kkt-2} and~\eqref{eq:kkt-4} also hold for the constructed solution. This can be trivially deduced from condition~\eqref{eq:correctness-1} and ~\eqref{eq:correctness-2}, and our construction steps (a) and (b). Note that it also implies that $\boldsymbol{\hat{\mu}}_S = \boldsymbol{\mu}^{*}_S = 0$, and hence $\boldsymbol{\hat{\mu}} = \boldsymbol{\mu}^{*}$.

Proving condition~\eqref{eq:kkt-5} is more challenging. We first provide more details on how to construct $\mathbf{\hat{z}}_{S^c}$ mentioned in step (c). We start by using a Taylor expansion of Eq.~\ref{eq:kkt-1},
\begin{equation}
\mathcal{Q}^n (\boldsymbol{\hat{\alpha}} - \boldsymbol{\alpha}^{*}) = -\nabla \ell^n(\boldsymbol{\alpha}^*) - \lambda_n \boldsymbol{\hat{z}} + \boldsymbol{\hat{\mu}} - \mathbf{R}^{n},\label{eq:kkt-1-taylor}
\end{equation}
where $\mathbf{R}^n$ is a remainder term with its $j$-th entry
\begin{equation*}
	R^n_j = \big[\nabla^2 \ell^n(\boldsymbol{\bar{\alpha}}_j) - \nabla^2 \ell^n(\boldsymbol{\alpha^*}) \big]_j^T (\boldsymbol{\hat{\alpha}} - \boldsymbol{\alpha^*}),
\end{equation*}
and $\boldsymbol{\bar{\alpha}}_j = \theta_j \boldsymbol{\hat{\alpha}} + (1-\theta_j) \boldsymbol{\alpha^*}$ with $\theta_j \in [0, 1]$ according to the mean value theorem. Rewriting Eq.~\ref{eq:kkt-1-taylor} using block matrices
\begin{multline*}
	\begin{pmatrix}
		\mathcal{Q}_{SS}^n & \mathcal{Q}_{SS^c}^n \\
		\mathcal{Q}_{S^cS}^n & \mathcal{Q}_{S^cS^c}^n
	\end{pmatrix}
	\begin{pmatrix}
	\boldsymbol{\hat{\alpha}}_S - \boldsymbol{\alpha}^*_S  \\
	\boldsymbol{\hat{\alpha}}_{S_{c}} - \boldsymbol{\alpha}^*_{S^c}
	\end{pmatrix} \\
	= - \begin{pmatrix}
	\nabla_S	\ell^n(\boldsymbol{\alpha}^*)  \\
	\nabla_{S^c}	\ell^n(\boldsymbol{\alpha}^*)
	\end{pmatrix}
	- \lambda_n \begin{pmatrix}
	\boldsymbol{\hat{z}}_S  \\
	\boldsymbol{\hat{z}}_{S^c}
	\end{pmatrix}
	+ \begin{pmatrix}
	\mathbf{\hat{\mu}}_S  \\
	\mathbf{\hat{\mu}}_{S^c}
	\end{pmatrix}
	- \begin{pmatrix}
	\mathbf{R}_S^n  \\
	\mathbf{R}_{S^c}^n
	\end{pmatrix}
\end{multline*}
and, after some algebraic manipulation, we have
\begin{multline*}
	\lambda \mathbf{\hat{z}}_{S^c} = - \nabla_{S^c} \ell^n(\boldsymbol{\alpha}^*) + \boldsymbol{\hat{\mu}}_{S^c} - \mathbf{R}_{S^c}^n\\
	- \mathcal{Q}_{S^cS}^n (\mathcal{Q}_{SS}^n)^{-1} \big(- \nabla_{s} \ell^n(\boldsymbol{\alpha}^*) - \lambda \boldsymbol{\hat{z}}_S   + \boldsymbol{\hat{\mu}}_S - \mathbf{R}_S^n\big).
\end{multline*}
Next, we upper bound $\|\mathbf{\hat{z}}_{S^c}\|_{\infty}$ using the triangle inequality
\begin{eqnarray*}
	& \|\mathbf{\hat{z}}_{S^c} \|_{\infty} & \leq  \lambda_n^{-1} \|\boldsymbol{\mu}^{*}_{S^c}- \nabla_{S^c}
	 \ell^n(\boldsymbol{\alpha}^*) \|_{\infty}+ \lambda_n^{-1} \| \mathbf{R}^n_{S^c} \|_{\infty}	\\
	& & + \| \mathcal{Q}_{S^cS}^n (\mathcal{Q}_{SS}^n)^{-1} \|_{\infty} \times \big[ 1 +
	\lambda_n^{-1} \|\mathbf{R}^n_S\|_{\infty} \\
	& & + \lambda_n^{-1} \| \mu^*_S - \nabla_S \ell^n(\boldsymbol{\alpha}^*) \|_{\infty} \big],
\end{eqnarray*}
and we want to prove that this upper bound is smaller than $1$. This can be done with the help of the following two lemmas (proven in Appendices~\ref{app:proof-gradient-hoeffding} and~\ref{app:proof-taylor-error}):
\begin{lemma} \label{lemma:gradient}
\vspace{-2mm}
Given $\varepsilon \in (0, 1]$ from the incoherence condition, we have,
\begin{multline*}
	P\left(\frac{2-\varepsilon}{\lambda_n}\| \nabla \ell^n(\boldsymbol{\alpha}^*) - \boldsymbol{\mu}^*
	\|_{\infty} \geq 4^{-1} \varepsilon\right)
	\\ \leq 2 p \exp(- \frac{n \lambda^2_n \varepsilon^2}{32 k^2_3 \left(2-\varepsilon\right)^2}),
\end{multline*}
which converges to zero at rate $\exp(-c \lambda_n^2 n)$ as long as $\lambda_n \geq 8 k_3 \frac{2-\varepsilon}{\varepsilon} \sqrt{\frac{\log p}{n}}$.
\vspace{-1mm}
\end{lemma}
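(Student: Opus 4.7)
The plan is to recognize the quantity $\nabla \ell^n(\boldsymbol{\alpha}^*) - \boldsymbol{\mu}^*$ as a centered average of i.i.d.\ bounded random vectors and apply Hoeffding's inequality coordinate-wise together with a union bound over the $p$ coordinates. The key preliminary observation is that $\boldsymbol{\mu}^*$ equals the expectation of $\nabla \ell^n(\boldsymbol{\alpha}^*)$. Indeed, by construction $\boldsymbol{\mu}^*$ is the dual optimum of the population problem in Eq.~\eqref{eq:opt-problem-one-node-population}, so the KKT stationarity condition for that problem reads $\nabla \EE_c[\ell^n(\boldsymbol{\alpha}^*)] - \boldsymbol{\mu}^* = 0$. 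Since $\ell^n$ is an empirical average over i.i.d.\ cascades, $\EE_c[\nabla \ell^n(\boldsymbol{\alpha}^*)] = \nabla \EE_c[\ell^n(\boldsymbol{\alpha}^*)] = \boldsymbol{\mu}^*$. Hence coordinatewise, for each $j$,
\begin{equation*}
[\nabla \ell^n(\boldsymbol{\alpha}^*) - \boldsymbol{\mu}^*]_j = -\frac{1}{n}\sum_{c=1}^n \Bigl([\nabla g(\mathbf{t}^c;\boldsymbol{\alpha}^*)]_j - \EE_c[\nabla g(\mathbf{t}^c;\boldsymbol{\alpha}^*)]_j\Bigr),
\end{equation*}
which is an average of $n$ zero-mean i.i.d.\ terms.

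Next I would invoke Condition 4, which guarantees $\|\nabla g(\mathbf{t}^c;\boldsymbol{\alpha}^*)\|_\infty \leq k_3$, so each summand lies in a range of length at most $2k_3$. Hoeffding's inequality then yields, for any fixed coordinate $j$ and threshold $t>0$,
\begin{equation*}
P\bigl(|[\nabla \ell^n(\boldsymbol{\alpha}^*) - \boldsymbol{\mu}^*]_j| \geq t\bigr) \leq 2 \exp\!\left(-\frac{n t^2}{2 k_3^2}\right).
\end{equation*}
A union bound over the $p$ coordinates of $\boldsymbol{\alpha}$ upgrades this into a bound on the $\ell_\infty$ norm. To match the statement I would substitute $t = \lambda_n \varepsilon / (4(2-\varepsilon))$, since the event $\frac{2-\varepsilon}{\lambda_n}\|\nabla \ell^n(\boldsymbol{\alpha}^*) - \boldsymbol{\mu}^*\|_\infty \geq \varepsilon/4$ is exactly $\|\nabla \ell^n(\boldsymbol{\alpha}^*) - \boldsymbol{\mu}^*\|_\infty \geq t$. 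Plugging in gives the claimed exponent $n\lambda_n^2\varepsilon^2 / (32 k_3^2(2-\varepsilon)^2)$ and the prefactor $2p$.

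Finally, for the concluding rate statement, the choice $\lambda_n \geq 8 k_3 (2-\varepsilon)/\varepsilon \cdot \sqrt{\log p / n}$ implies $n \lambda_n^2 \varepsilon^2 / (32 k_3^2(2-\varepsilon)^2) \geq 2\log p$, so the right-hand side is bounded by $2\exp(-\tfrac12 \cdot n\lambda_n^2 \varepsilon^2/(32 k_3^2(2-\varepsilon)^2))$ after absorbing the $\log p$ factor, i.e.\ it decays as $\exp(-c\lambda_n^2 n)$ with $c = \varepsilon^2/(64 k_3^2(2-\varepsilon)^2)$. There is no real obstacle here; the only nontrivial step is the identification $\boldsymbol{\mu}^* = \EE_c[\nabla\ell^n(\boldsymbol{\alpha}^*)]$ via the population-KKT conditions, after which the argument reduces to a textbook Hoeffding-plus-union-bound concentration estimate enabled by the entrywise boundedness in Condition 4.
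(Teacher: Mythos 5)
Your proposal is correct and follows essentially the same route as the paper's proof: identify $\boldsymbol{\mu}^*$ with $\EE_c[\nabla g(\mathbf{t}^c;\boldsymbol{\alpha}^*)]$ via the population KKT conditions, use the entrywise bound $k_3$ from Condition 4, and apply Hoeffding's inequality coordinate-wise followed by a union bound over the $p$ coordinates. Your verification of the final rate statement is also consistent with the paper's claim.
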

\begin{lemma} \label{lemma:taylor-error}
\vspace{-1mm}
Given $\varepsilon \in (0, 1]$ from the incoherence condition, if conditions 3 and 4 holds, $\lambda_n$ is selected to satisfy
\begin{equation*}
\lambda_n d \leq C_{\min}^{2} \frac{\varepsilon}{36 K (2-\varepsilon)},
\end{equation*}
where $K =  k_1 + k_4 k_1 + k_1^2 + k_1 \sqrt{C_{\max}}$, and $\| \nabla_s \ell^n(\boldsymbol{\alpha}^*) \|_{\infty} \leq \frac{\lambda_n}{4}$, then,
$
\frac{\|\mathbf{R}^n\|_{\infty}}{\lambda_n} \leq \frac{\varepsilon}{4(2-\varepsilon)},
$
as long as $\alpha^*_{\min} \geq 6 \sqrt{d} \lambda_n / C_{\min}$.
\vspace{-2mm}
\end{lemma}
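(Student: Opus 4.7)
The plan is to bound each coordinate $R^n_j$ by exploiting the fact that $v := \hat{\boldsymbol{\alpha}} - \boldsymbol{\alpha}^*$ is supported on $S$, decomposing the Hessian difference via the factorization in Eq.~\ref{eq:consistency-hessian-matrix-factorization}, and then applying Lemma~\ref{lemma:alphas-s} to close the loop. Since $\hat{\alpha}_{S^c} = 0 = \alpha^*_{S^c}$ by construction, writing $M^j := \nabla^2 \ell^n(\boldsymbol{\bar{\alpha}}_j) - \nabla^2 \ell^n(\boldsymbol{\alpha}^*)$ gives $R^n_j = [M^j]_{j,S}^\top v_S$ and Cauchy--Schwarz yields $|R^n_j| \le \|[M^j]_{j,S}\|_2 \, \|v_S\|_2$. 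Lemma~\ref{lemma:alphas-s} immediately supplies $\|v_S\|_2 \le 3\sqrt{d}\lambda_n/C_{\min}$, and since $\boldsymbol{\bar{\alpha}}_j$ lies on the segment from $\boldsymbol{\alpha}^*$ to $\hat{\boldsymbol{\alpha}}$, the same bound applies to $\|\boldsymbol{\bar{\alpha}}_j - \boldsymbol{\alpha}^*\|_2$; the hypothesis $\alpha^*_{\min} \ge 6\sqrt{d}\lambda_n/C_{\min}$ keeps $\boldsymbol{\bar{\alpha}}_j$ in the domain where Condition~3 applies.

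The main work is controlling $\|[M^j]_{j,S}\|_2$ uniformly in $j$. Using Eq.~\ref{eq:consistency-hessian-matrix-factorization}, split $M^j$ into a diagonal piece $\boldsymbol{D}^n(\boldsymbol{\bar{\alpha}}_j) - \boldsymbol{D}^n(\boldsymbol{\alpha}^*)$ and an outer-product piece $n^{-1}[\boldsymbol{X}^n(\boldsymbol{\bar{\alpha}}_j)\boldsymbol{X}^n(\boldsymbol{\bar{\alpha}}_j)^\top - \boldsymbol{X}^n(\boldsymbol{\alpha}^*)\boldsymbol{X}^n(\boldsymbol{\alpha}^*)^\top]$. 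The diagonal piece is zero off the diagonal, so the $(j,S)$-row has at most one nonzero entry, bounded by $k_2\|\boldsymbol{\bar{\alpha}}_j - \boldsymbol{\alpha}^*\|_2$ via Condition~3 after averaging over cascades. For the outer-product piece, set $A = n^{-1/2}\boldsymbol{X}^n(\boldsymbol{\bar{\alpha}}_j)$, $B = n^{-1/2}\boldsymbol{X}^n(\boldsymbol{\alpha}^*)$ and use the identity $AA^\top - BB^\top = (A-B)A^\top + B(A-B)^\top$. The crucial step is to bound the $(j,S)$-row rather than taking full operator norms, giving $\|[(A-B)A^\top]_{j,S}\|_2 \le |||A_{S,\cdot}|||_2 \, \|(A-B)_{j,\cdot}\|_2$ and $\|[B(A-B)^\top]_{j,S}\|_2 \le |||(A-B)_{S,\cdot}|||_2 \, \|B_{j,\cdot}\|_2$. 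Eq.~\ref{eq:bounded_hazard} provides $|||A-B|||_2 \le k_1 \|\boldsymbol{\bar{\alpha}}_j - \boldsymbol{\alpha}^*\|_2$, Condition~4 gives $\|B_{j,\cdot}\|_2 \le k_4$, and the finite-sample dependency condition $\Lambda_{\max}(\mathcal{Q}^n_{SS}) \le C_{\max}$ combined with the PSD ordering $\tfrac{1}{n}\boldsymbol{X}^n_{S,\cdot}(\boldsymbol{\alpha}^*)\boldsymbol{X}^n_{S,\cdot}(\boldsymbol{\alpha}^*)^\top \preceq \mathcal{Q}^n_{SS}$ yields $|||B_{S,\cdot}|||_2 \le \sqrt{C_{\max}}$ and hence $|||A_{S,\cdot}|||_2 \le \sqrt{C_{\max}} + k_1 \|\boldsymbol{\bar{\alpha}}_j - \boldsymbol{\alpha}^*\|_2$. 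Assembling the pieces and absorbing the second-order correction into the overall constant produces $\|[M^j]_{j,S}\|_2 \le K \|\boldsymbol{\bar{\alpha}}_j - \boldsymbol{\alpha}^*\|_2$ with $K$ as in the statement.

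Multiplying the two bounds gives $|R^n_j| \le K \|\boldsymbol{\bar{\alpha}}_j - \boldsymbol{\alpha}^*\|_2 \cdot \|v_S\|_2 \le 9Kd\lambda_n^2 / C_{\min}^2$, and dividing by $\lambda_n$ and substituting the hypothesis $d\lambda_n \le C_{\min}^2\varepsilon/(36K(2-\varepsilon))$ delivers the claimed $\|\mathbf{R}^n\|_\infty/\lambda_n \le \varepsilon/(4(2-\varepsilon))$.

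The principal obstacle is the row-restricted bound in the second step. The crude estimate $|R^n_j| \le |||M^j|||_2 \, \|v\|_2$ would lose an extra factor of $\sqrt{d}$ and degrade the sample complexity from $O(d^2 \log p)$ to $O(d^3 \log p)$ in this finite-sample regime. Exploiting the $S$-sparsity of $v$ so as to work with $\|[M^j]_{j,S}\|_2$, and carrying the dependency constant $\sqrt{C_{\max}}$ into only one factor of $\|\boldsymbol{\bar{\alpha}}_j - \boldsymbol{\alpha}^*\|_2$ via the $S$-submatrix form of Condition~1 applied at the intermediate point $\boldsymbol{\bar{\alpha}}_j$ (not only at $\boldsymbol{\alpha}^*$), is the delicate ingredient that makes the budget balance.
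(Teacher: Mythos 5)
Your proposal is correct and follows essentially the same route as the paper's proof: decompose $\nabla^2\ell^n(\boldsymbol{\bar{\alpha}}_j)-\nabla^2\ell^n(\boldsymbol{\alpha}^*)$ via Eq.~\ref{eq:consistency-hessian-matrix-factorization} into a diagonal part and a Hazard outer-product part, bound the relevant row by Cauchy--Schwarz together with Conditions 1, 3 and 4 to get $\|\mathbf{R}^n\|_\infty \leq K\|\boldsymbol{\hat{\alpha}}-\boldsymbol{\alpha}^*\|_2^2$, and close with Lemma~\ref{lemma:alphas-s} and the stated choice of $\lambda_n d$. Your two-term identity $AA^\top-BB^\top=(A-B)A^\top+B(A-B)^\top$ is just a regrouping of the paper's three-term $\boldsymbol{\Lambda}_1+\boldsymbol{\Lambda}_2+\boldsymbol{\Lambda}_3$ split, and the explicit restriction to the $S$-columns is only making precise what the paper does implicitly when it invokes $C_{\max}$ (which is defined on the $SS$ block), so it is a presentational rather than a substantive difference.
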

Now, applying both lemmas and the incoherence condition on the finite sample Hessian matrix $\mathcal{Q}^{n}$, we have
\begin{eqnarray*}
	& \|\mathbf{\hat{z}}_{S^c} \|_{\infty} & \leq~~  (1-\varepsilon) + \lambda_n^{-1} (2-\varepsilon) \| \mathbf{R}^n \|_{\infty} \\
	& &~~ + \lambda_n^{-1} (2-\varepsilon) \| \boldsymbol{\mu}^* - \nabla \ell^n(\boldsymbol{\alpha}^*) \|_{\infty} \\
	& & \leq ~~(1-\varepsilon) + 0.25 \varepsilon + 0.25 \varepsilon =  1- 0.5 \varepsilon, 
\end{eqnarray*}
and thus condition~\eqref{eq:kkt-5} holds.

A possible choice of the regularization parameter $\lambda_n$ and cascade set size $n$ such that the conditions of the Lemmas~\ref{lemma:alphas-s}-\ref{lemma:taylor-error} are sa\-tis\-fied is
$\lambda_n = 8 k_3 (2-\varepsilon) \varepsilon^{-1} \sqrt{n^{-1} \log p}$ and $n > 288^2k_3^2 (2-\varepsilon)^4 C_{min}^{-4} \varepsilon^{-4} d^2 \log p + \left(48 k_3 (2-\varepsilon) C_{min}^{-1} (\alpha_{min}^*)^{-1} \varepsilon^{-1} \right)^2 d \log p$.

Last, we lift the dependency and incoherence conditions imposed on the finite sample Hessian matrix $\mathcal{Q}^n$. We show that if we only impose these conditions in the corres\-pon\-ding population matrix $\mathcal{Q}^*$, then they will also hold for $\mathcal{Q}^n$ with high probability (proven in Appendices~\ref{app:dependency-n} and~\ref{app:coherence-n}).
\begin{lemma} \label{lemma:dependency-n}
\vspace{-2mm}
If condition 1 holds for $\mathcal{Q}^*$, then, for any $\delta > 0$,
\begin{align*}
	P\left(\Lambda_{min}\left(\mathcal{Q}_{SS}^n\right) \leq C_{min} - \delta\right) \leq 2 d^{B_1} \exp(-A_1 \frac{\delta^2 n}{d^2}),
\end{align*}
\begin{align*}
	P\left(\Lambda_{max}\left(\mathcal{Q}_{SS}^n\right) \geq C_{max} + \delta\right) \leq 2 d^{B_2} \exp(-A_2 \frac{\delta^2 n}{d^2}),
\end{align*}
%
%
where $A_1$, $A_2$, $B_1$ and $B_2$ are constants independent of $(n, p, d)$.
\vspace{-1mm}
\end{lemma}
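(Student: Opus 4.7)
The plan is to reduce concentration of the extremal eigenvalues of $\mathcal{Q}^n_{SS}$ around those of $\mathcal{Q}^*_{SS}$ to concentration of the individual entries of $\mathcal{Q}^n_{SS}$, via a standard matrix perturbation argument. The starting point is Weyl's inequality, which, applied to the symmetric $d \times d$ matrices $\mathcal{Q}^n_{SS}$ and $\mathcal{Q}^*_{SS}$, yields
\begin{equation*}
|\Lambda_{\min}(\mathcal{Q}^n_{SS})-\Lambda_{\min}(\mathcal{Q}^*_{SS})| \leq |||\mathcal{Q}^n_{SS}-\mathcal{Q}^*_{SS}|||_2,
\end{equation*}
and analogously for $\Lambda_{\max}$. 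Hence it suffices to prove that $|||\mathcal{Q}^n_{SS}-\mathcal{Q}^*_{SS}|||_2 \leq \delta$ with the stated probability; both eigenvalue statements will then follow, with possibly different constants $(A_1,B_1)$ and $(A_2,B_2)$ reflecting how cleanly one tracks the bookkeeping.

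Next, I would translate this spectral-norm bound into an entrywise bound. Since the difference is symmetric, $|||\cdot|||_2 \leq |||\cdot|||_\infty \leq d \, \max_{j,k\in S}|[\mathcal{Q}^n_{SS}-\mathcal{Q}^*_{SS}]_{jk}|$, so it is enough to show that each of the at most $d^2$ entries is concentrated within $\delta/d$ of its mean. Using the factorization $\mathcal{Q}^n = \mathbf{D}^n(\boldsymbol{\alpha}^*) + \tfrac{1}{n}\mathbf{X}^n(\boldsymbol{\alpha}^*)[\mathbf{X}^n(\boldsymbol{\alpha}^*)]^\top$, each entry $[\mathcal{Q}^n_{SS}]_{jk}$ is an i.i.d.\ average over the $n$ cascades of a per-cascade contribution that is uniformly bounded by a constant depending only on $k_4$ and the per-cascade entries of $\mathbf{D}$; condition~4 controls the outer-product contribution through $\|\mathbf{X}(\mathbf{t}^c;\boldsymbol{\alpha}^*)\|_\infty \leq k_4$, while the diagonal terms are controlled via conditions~3 and~4. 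A standard Hoeffding inequality then gives
\begin{equation*}
P\!\left(|[\mathcal{Q}^n_{SS}-\mathcal{Q}^*_{SS}]_{jk}| \geq \delta/d\right) \leq 2\exp\!\left(-A' \, \delta^2 n / d^2\right)
\end{equation*}
for a constant $A'>0$ independent of $(n,p,d)$.

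Finally, a union bound over the $d^2$ entries (or $d(d+1)/2$ using symmetry) gives
\begin{equation*}
P\!\left(|||\mathcal{Q}^n_{SS}-\mathcal{Q}^*_{SS}|||_2 \geq \delta\right) \leq 2 d^{B}\exp\!\left(-A \, \delta^2 n / d^2\right)
\end{equation*}
for constants $A,B>0$, and combining with Weyl's inequality yields both claimed tail bounds. The main obstacle I anticipate is the per-cascade boundedness needed to invoke Hoeffding: condition~4 directly bounds $\mathbf{X}$ and $\nabla g$, but the diagonal entries of $\mathbf{D}(\mathbf{t}^c;\boldsymbol{\alpha}^*)$ involve expressions like $y''(\cdot|t_k;\alpha_k^*)$ and $h(\mathbf{t};\boldsymbol{\alpha}^*)^{-1} H''(\cdot|t_k;\alpha_k^*)$ (cf.\ Table~\ref{tab:functions}), whose uniform boundedness across feasible cascades must be extracted from the Lipschitz assumption in condition~3 evaluated at the fixed interior point $\boldsymbol{\alpha}^*$. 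Once this is made explicit, the remainder of the argument is routine entrywise concentration plus a union bound, as above.
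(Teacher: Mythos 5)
Your proposal follows essentially the same route as the paper's proof: reduce the eigenvalue deviations to a spectral-norm bound on $\mathcal{Q}^n_{SS}-\mathcal{Q}^*_{SS}$ (the paper uses the variational characterization with the extremal eigenvector where you cite Weyl, and the Frobenius norm where you use $|||\cdot|||_\infty$, but both reduce to requiring each of the $O(d^2)$ entries to concentrate within $\delta/d$), then apply Hoeffding entrywise using the boundedness from condition~4 and take a union bound. The per-cascade boundedness issue you flag for the diagonal $\mathbf{D}$ terms is handled in the paper simply by asserting $|z_{jk}^{c}|\leq 2k_5$ from condition~4, i.e.\ at the same level of rigor as your sketch.
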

\begin{lemma} \label{lemma:coherence-n}
\vspace{-1mm}
If $||| \mathcal{Q}^{*}_{S^c S} \left(\mathcal{Q}^{*}_{SS}\right)^{-1} |||_{\infty} \leq 1 - \varepsilon$, then,
\begin{align*}
	P\left(\| \mathcal{Q}_{S^cS}^n (\mathcal{Q}_{SS}^n)^{-1} \|_{\infty} \geq 1 - \varepsilon/2\right)
	\leq p\exp(-K \frac{n}{d^3}),
\end{align*}
where $K$ is a cons\-tant independent of $(n, p, d)$.
\vspace{-2mm}
\end{lemma}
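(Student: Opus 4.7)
\textbf{Proof proposal for Lemma~\ref{lemma:coherence-n}.} The plan is to mimic the finite-sample coherence argument used by Ravikumar et al.\ for the Ising model and Wainwright for the Lasso, adapted to the diffusion Hessian. I would start by writing the usual telescoping decomposition
\begin{align*}
\mathcal{Q}^n_{S^cS}(\mathcal{Q}^n_{SS})^{-1} - \mathcal{Q}^*_{S^cS}(\mathcal{Q}^*_{SS})^{-1}
&= T_1 + T_2 + T_3, \\
T_1 &= \mathcal{Q}^*_{S^cS}\bigl[(\mathcal{Q}^n_{SS})^{-1}-(\mathcal{Q}^*_{SS})^{-1}\bigr], \\
T_2 &= \bigl[\mathcal{Q}^n_{S^cS}-\mathcal{Q}^*_{S^cS}\bigr](\mathcal{Q}^*_{SS})^{-1}, \\
T_3 &= \bigl[\mathcal{Q}^n_{S^cS}-\mathcal{Q}^*_{S^cS}\bigr]\bigl[(\mathcal{Q}^n_{SS})^{-1}-(\mathcal{Q}^*_{SS})^{-1}\bigr],
\end{align*}
and using the incoherence assumption together with the triangle inequality reduce the target event $\{\|\mathcal{Q}^n_{S^cS}(\mathcal{Q}^n_{SS})^{-1}\|_\infty \geq 1-\varepsilon/2\}$ to bounding $\|T_i\|_\infty \le \varepsilon/6$ for each $i\in\{1,2,3\}$.

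The main tool for $T_1$ is the identity $(\mathcal{Q}^n_{SS})^{-1}-(\mathcal{Q}^*_{SS})^{-1} = (\mathcal{Q}^*_{SS})^{-1}[\mathcal{Q}^*_{SS}-\mathcal{Q}^n_{SS}](\mathcal{Q}^n_{SS})^{-1}$, combined with the submultiplicativity bound $\|AB\|_\infty \le \|A\|_\infty \|B\|_\infty$ and the conversion $\|M\|_\infty \le \sqrt{d}\,\|M\|_2$ for $d\times d$ matrices. Condition 1 (bottom eigenvalue $C_{\min}$) together with Lemma~\ref{lemma:dependency-n} controls $\|(\mathcal{Q}^n_{SS})^{-1}\|_2$ by $2/C_{\min}$ on a high-probability event; $\|(\mathcal{Q}^*_{SS})^{-1}\|_\infty$ is handled similarly via $\sqrt{d}/C_{\min}$. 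The incoherence hypothesis bounds $\|\mathcal{Q}^*_{S^cS}(\mathcal{Q}^*_{SS})^{-1}\|_\infty$, which after rearrangement yields $\|T_1\|_\infty \lesssim (\sqrt{d}/C_{\min})\|\mathcal{Q}^n_{SS}-\mathcal{Q}^*_{SS}\|_\infty$. For $T_2$, a direct application of $\|T_2\|_\infty \le \|\mathcal{Q}^n_{S^cS}-\mathcal{Q}^*_{S^cS}\|_\infty \cdot (\sqrt{d}/C_{\min})$ suffices, and $T_3$ is a higher-order term that is dominated once $T_1,T_2$ are controlled.

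The remaining step is an entrywise concentration of $\mathcal{Q}^n-\mathcal{Q}^*$. Each entry $[\mathcal{Q}^n]_{jk}$ is an average of $n$ independent cascade-level quantities $[\nabla^2\ell(\mathbf{t}^c;\boldsymbol{\alpha}^*)]_{jk}$, which by condition 4 (boundedness of the Hazard vector and of $\boldsymbol{D}$) are bounded by a constant depending on $k_1,k_2,k_4$. Hoeffding's inequality then yields $P(|[\mathcal{Q}^n-\mathcal{Q}^*]_{jk}|\ge t)\le 2\exp(-c\,n\,t^2)$. Converting from entrywise to the $\ell_\infty$ operator norm of a $d$-column (resp.\ $(p-d)\times d$) matrix costs an extra factor of $d$: $\|M\|_\infty \le d\max_{jk}|M_{jk}|$. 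Taking $t = \varepsilon\,C_{\min}/(c'd^{3/2})$ so that the pieces in $T_1,T_2$ are each below $\varepsilon/6$, and union-bounding over the at most $pd$ relevant entries, gives a failure probability of at most $2pd\exp(-K'n/d^3)$, which after absorbing $d$ into the polynomial prefactor is the claimed $p\exp(-Kn/d^3)$ bound. The event from Lemma~\ref{lemma:dependency-n} needed to invert $\mathcal{Q}^n_{SS}$ has failure probability at most $2d^{B_1}\exp(-A_1 n/d^2)$, which is dominated by the above rate.

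The main obstacle I anticipate is bookkeeping the factors of $d$ carefully enough to land on exactly $n/d^3$ in the exponent: each of the two $\ell_\infty$-to-entrywise conversions contributes a factor $d$, and matching this to the required threshold $t\asymp d^{-3/2}$ in the Hoeffding tail is what produces the cubic dependence. A secondary technical point is verifying that the entrywise random variables really are uniformly bounded (not merely sub-Gaussian), which requires using conditions 3--4 plus the fact that the Hessian entries in Table~\ref{tab:functions} are continuous functions of bounded Hazard and survival terms on the feasible cascade domain.
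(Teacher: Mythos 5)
Your proposal follows essentially the same route as the paper's proof: the paper decomposes $\mathcal{Q}^n_{S^cS}(\mathcal{Q}^n_{SS})^{-1}$ into four terms $A_1+A_2+A_3+A_4$ that coincide exactly with your $T_1, T_3, T_2$ plus the population term $\mathcal{Q}^*_{S^cS}(\mathcal{Q}^*_{SS})^{-1}$ handled by the incoherence hypothesis, and it controls each piece via the same inverse-difference identity, entrywise Hoeffding concentration under condition~4, the $\sqrt{d}$ norm conversions, and a union bound yielding the $n/d^3$ rate. The approach and the bookkeeping of the factors of $d$ match the paper's argument.
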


Note in this case the cascade set size need to increase to $n >  L d^3 \log p$, where $L$ is a sufficiently large positive constant independent of
$(n, p, d)$, for the error probabilities on these last two lemmas to converge to zero.

\vspace{-3mm}
\section{Efficient soft-thresholding algorithm}
\label{sec:algorithm}
\vspace{-2mm}
\emph{Can we design efficient algorithms to solve Eq.~\eqref{eq:opt-problem-regularized-one-node} for network recovery?}
Here, we will design a proximal gradient algorithm which is well suited for solving non-smooth, constrained, large-scale or high-dimensional convex optimization
problems~\cite{parikh2013proximal}. Moreover, they are easy to understand, derive, and implement.
\begin{algorithm}[t]
\caption{$\ell_1$-regularized network inference} \label{alg:proximal-gradient-regularized-netrate} 
\begin{algorithmic}
\REQUIRE $C^n, \lambda_n, K, L$
\FORALL{$i \in \Vcal$}
\STATE $k = 0$
\WHILE{$k < K$}
  \STATE $\boldsymbol{\alpha_{i}}^{k+1} = \left( \boldsymbol{\alpha_{i}}^{k} - L \nabla_{\boldsymbol{\alpha}_{i}} \ell^{n}(\boldsymbol{\alpha}_{i}^{k}) - \lambda_n L \right)_{+}$
  \STATE $k = k+1$
  \ENDWHILE
\STATE $\boldsymbol{\hat{\alpha}_{i}} = \boldsymbol{\alpha_{i}}^{K-1}$
\ENDFOR
\RETURN $\{\boldsymbol{\hat{\alpha}_{i}}\}_{i \in \Vcal}$
\end{algorithmic}
\vspace{-1mm}
\end{algorithm}
%
%
We first rewrite Eq.~\ref{eq:opt-problem-regularized-one-node} as an unconstrained optimization problem:
\begin{equation*}
	\label{eq:opt-problem-regularized-one-node-proximal-gradient}
	\begin{array}{ll}
		\mbox{minimize$_{\boldsymbol{\alpha}}$} & \ell^{n}(\boldsymbol{\alpha}) + g(\boldsymbol{\alpha}),
	\end{array}
\end{equation*}
where the non-smooth convex function $g(\boldsymbol{\alpha}) = \lambda_n ||\boldsymbol{\alpha}||_1$ if $\boldsymbol{\alpha} \geq 0$ and $+\infty$ otherwise.
%
Here, the general recipe from~\citet{parikh2013proximal} for designing proximal gra\-dient algorithm can be applied directly.

Algorithm~\ref{alg:proximal-gradient-regularized-netrate} summarizes the resulting algorithm.
In each ite\-ra\-tion of the algorithm, we need to compute $\nabla \ell^{n}$ (Table~\ref{tab:functions}) and the proximal operator
$\mathbf{prox}_{L^k g}(v)$, where $L^{k}$ is a step size that we can set to a constant value $L$ or find using a simple line search~\cite{beck2009gradient}.
Using Moreau'{}s decomposition and the conjugate function $g^{*}$, it is easy to show that the proximal operator for our particular function
$g(\cdot)$ is a soft-thresholding operator, $(v - \lambda_n L^{k})_{+}$, which leads to a sparse optimal solution $\boldsymbol{\hat{\alpha}}$, as desired.

\vspace{-3mm}
\section{Experiments}
\label{sec:experiments}
\vspace{-2mm}
\begin{figure}[t]
	\centering
	\vspace{-3mm}
	\subfigure[Super-neighborhood $p_i$]{\makebox[4.2cm][c]{\includegraphics[width=0.23\textwidth]{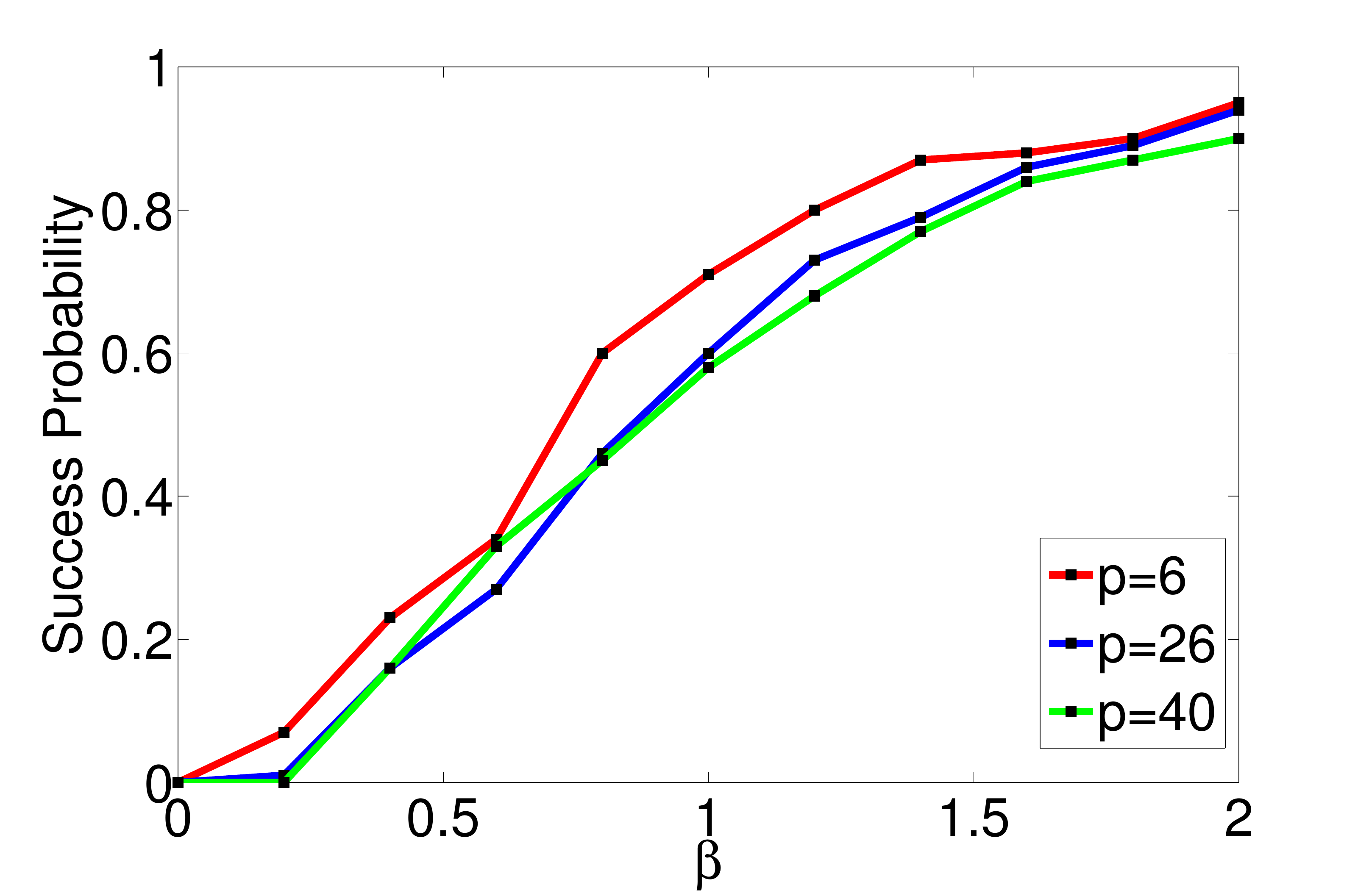}\label{fig:super-neighborhood}}}
	\subfigure[$\lambda_n$]{\includegraphics[width=0.23\textwidth]{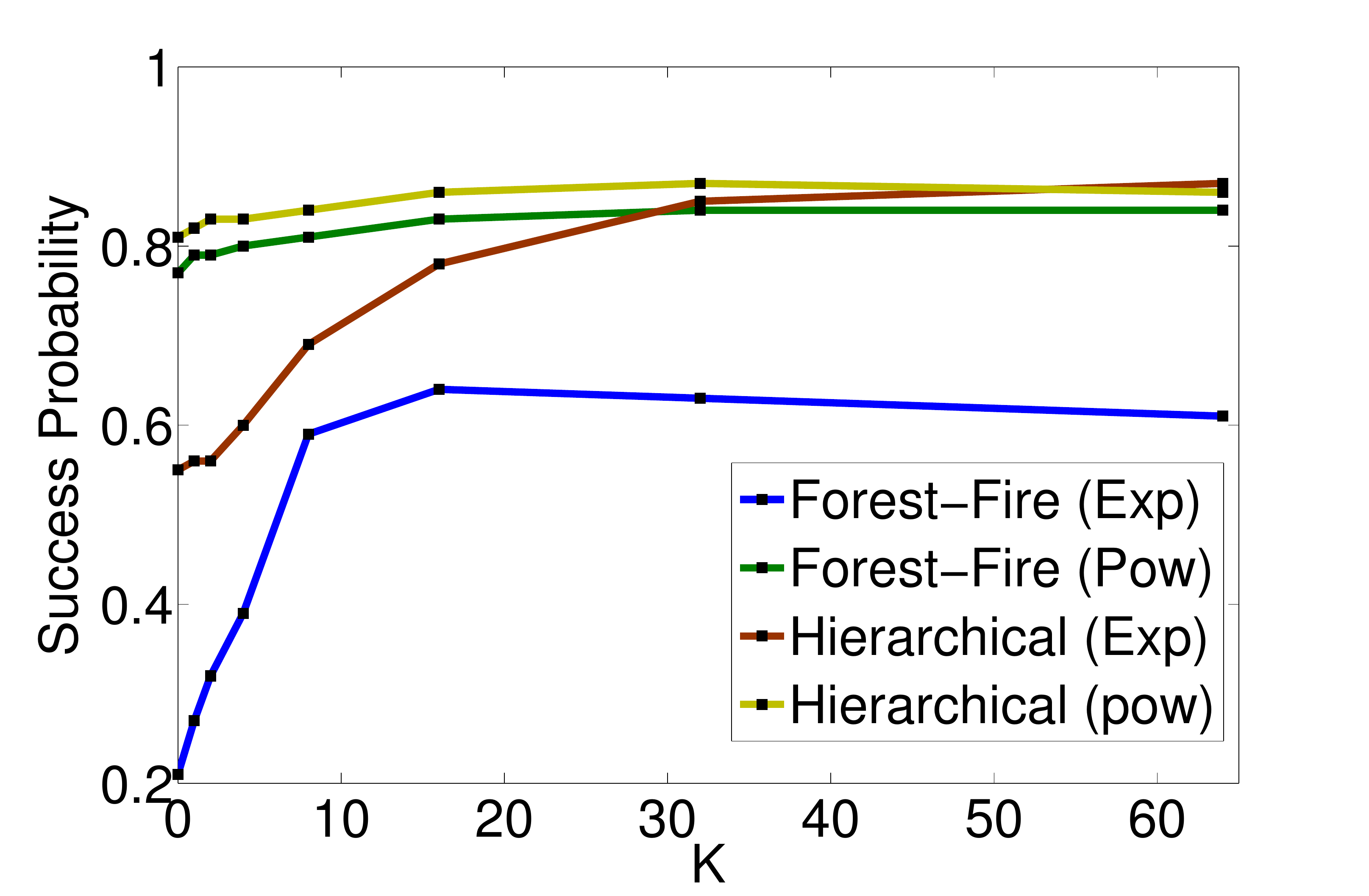}\label{fig:lambda-n}}
	\vspace{-4mm}
 	\caption{Success probability vs. \# of cascades.} \label{fig:consequences}
	\vspace{-3mm}
\end{figure}
%
%
In this section, we first illustrate some consequences of Th.~\ref{th:main-result} by applying our algorithm to several types of networks, parameters $(n, p, d)$, and
regularization parameter $\lambda_n$. Then, we compare our algorithm to two different state-of-the-art algorithms: \netrate~\cite{manuel11icml} and
First-Edge~\cite{abrahaotrace}.

{\bf Experimental Setup} We focus on synthetic networks that mimic the structure of real-world diffusion networks -- in particular, social networks. We consider two models of directed
real-world social networks: the Forest Fire model~\citep{barabasi99emergence} and the Kronecker Graph model~\citep{leskovec2010kronecker}, and use simple pairwise
transmission models such as exponential, power-law or Rayleigh. We use networks with $128$ nodes and, for each edge, we draw its asso\-cia\-ted
transmission rate from a uniform distribution $U(0.5, 1.5)$.
%
We proceed as follows: we generate a network $\Gcal^{*}$ and transmission rates $\alphs^{*}$, simulate a set of cascades and, for each cascade, record the node infection times.
Then, given the infection times, we infer a network $\hat{\Gcal}$.
Finally, when we illustrate the consequences of Th.~\ref{th:main-result}, we evaluate the accuracy of the inferred neighborhood of a node $\hat{\Ncal}^{-}(i)$ using probability of
success $P(\hat{\Ecal} = \Ecal^*)$, estimated by running our method of $100$ independent cascade sets. When we compare our algorithm to \netrate and First-Edge, we use the 
$F_{1}$ score, which is defined as $2 P R / (P+R)$, where precision (P) is the fraction of edges in the inferred network $\hat{\Gcal}$ present in the true network $\Gcal^{*}$, 
and recall (R) is the fraction of edges of the true network $\Gcal^{*}$ present in the inferred network $\hat{\Gcal}$.

{\bf Parameters $(n, p, d)$} According to Th.~\ref{th:main-result}, the number of cascades that are necessary to successfully infer the incoming edges of a node will increase polynomially
to the node'{}s neighborhood size $d_i$ and logarithmically to the super-neighborhood size $p_i$. Here, we infer the incoming links of nodes of a hierarchical
Kronecker network with the same in-degree ($d_i=3$) but different super-neighboorhod set sizes $p_i$ under different scalings $\beta$ of the number of cascades $n = 10 \beta d \log p$ and choose
the regularization pa\-ra\-me\-ter $\lambda_n$ as a constant factor of $\sqrt{\log (p)/ n}$ as suggested by Th.~\ref{th:main-result}.
We used an exponential transmission model and $T = 5$.
Fig.~\ref{fig:super-neighborhood}  summarizes the results, where, for each node, we used cascades which contained at least one node in the super-neighborhood of the node under study. As
predicted by Th.~\ref{th:main-result}, very different $p$ values lead to curves that line up with each other quite well.
%

{\bf Regularization parameter $\lambda_n$} Our main result indicates that the regularization parameter $\lambda_n$ should be a constant factor of $\sqrt{\log (p)/ n}$.
Fig.~\ref{fig:lambda-n} shows the success probability of our algorithm against different scalings $K$ of the regularization parameter $\lambda_n = K \sqrt{\log (p) / n}$ for different
types of networks using $150$ cascades and $T = 5$.
We find that for sufficiently large $\lambda_n$, the success probability flattens, as expected from Th.~\ref{th:main-result}.
It flattens at values smaller than one because we used a fixed number of cascades $n$, which may not satisfy the conditions of
Th.~\ref{th:main-result}.

{\bf Comparison with \netrate and First-Edge} Fig.~\ref{fig:comparison} compares the accuracy of our algorithm, \netrate and First-Edge against number of cascades for a hierarchical
Kronecker network with power-law transmission model and a Forest Fire network with exponential transmission model, with an observation window $T = 10$.
Our method outperforms both competitive methods, finding especially striking the competitive advantage with respect to First-Edge.
%
%
\begin{figure}[t]
	\centering
	\vspace{-3mm}
	\subfigure[Kronecker hierarchical, \pow]{\makebox[4.2cm][c]{\includegraphics[width=0.23\textwidth]{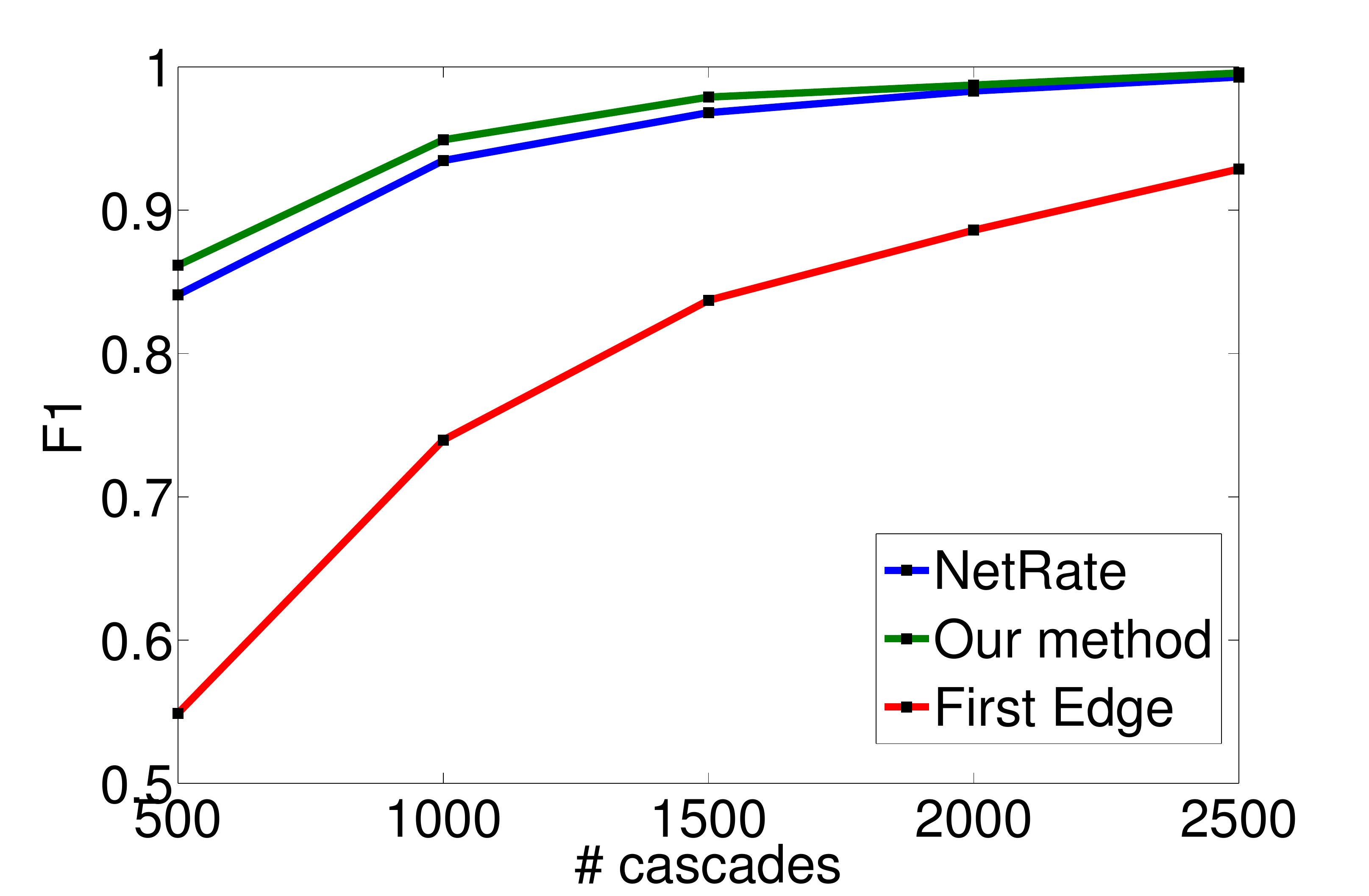}\label{fig:hierarchical-pow}}}
	\subfigure[Forest Fire, \expo]{\includegraphics[width=0.23\textwidth]{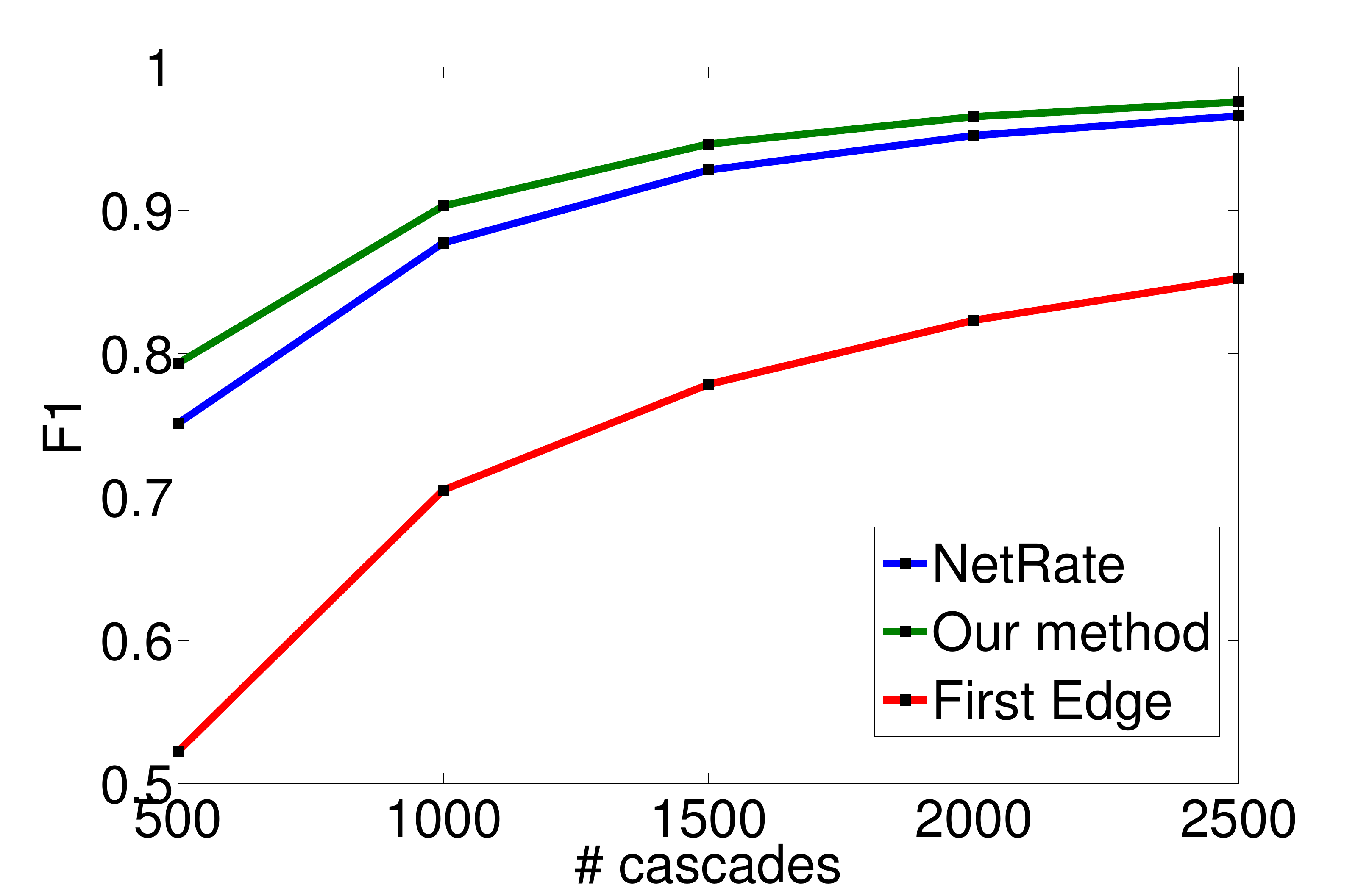}\label{fig:ff-exp}}
	\vspace{-4mm}
 	\caption{$F_{1}$-score vs. \# of cascades.} \label{fig:comparison}
	\vspace{-3mm}
\end{figure}

\vspace{-3mm}
\section{Conclusions}
\label{sec:conclusions}
\vspace{-2mm}
Our work contributes towards establishing a theoretical foundation of the network inference problem.
Specifically, we proposed a $\ell_1$-regularized maximum likelihood in\-fe\-rence method for a well-known continuous-time diffusion model and an efficient proximal gradient implementation, 
and then show that, for general networks satisfying a natural incoherence condition, our method achieves an exponentially decreasing error  with respect to the number of cascades 
as long as $O(d^3 \log N)$ cascades are recorded.

Our work also opens many interesting venues for future work.
For example, given a fixed number of cascades, it would be useful to provide confidence 
intervals on the inferred edges.
%
%
Further, given a network with arbitrary pairwise likelihoods, it is an open question whether there 
always exists at least one source distribution and time window value such that the incoherence 
condition is satisfied, and, and if so, whether there is an efficient way of finding this distribution.
Finally, our work assumes all activations occur due to network diffusion and are recorded.
It would be interesting to allow for missing observations, as well as activations due to exogenous factors.

\vspace{-3mm}
\section*{Acknowledgement}
\vspace{-2mm}
This research was supported in part by NSF/NIH BIGDATA 1R01GM108341-01, NSF IIS1116886, and a Raytheon faculty fellowship to L. Song.

\bibliographystyle{icml2014}
\bibliography{refs}

\clearpage
\newpage

\begin{appendix}
\section{Proof of Lemma~\ref{lem:hessian-positive-definite}} \label{app:proof-hessian-positive-definite}

\begin{lemma} \label{lem:hessian-positive-definite}
Given log-concave survival functions and concave hazard functions in the parameter(s) of the pairwise transmission likelihoods, then, a sufficient condition for the Hessian matrix $\mathcal{Q}^n$ to
be positive definite is that the hazard matrix $X^n(\boldsymbol{\alpha})$ is non-singular.
\end{lemma}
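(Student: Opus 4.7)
The plan is to exploit the factorization $\mathcal{Q}^n = \boldsymbol{D}^n(\boldsymbol{\alpha}) + n^{-1}\boldsymbol{X}^n(\boldsymbol{\alpha})[\boldsymbol{X}^n(\boldsymbol{\alpha})]^\top$ from Eq.~\ref{eq:consistency-hessian-matrix-factorization}, and verify positive definiteness by evaluating $v^\top \mathcal{Q}^n v$ for an arbitrary $v \neq 0$ and showing both summands make non-negative contributions, with at least one of them strictly positive.

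First I would show that $\boldsymbol{D}^n$ is positive semidefinite. Since $\boldsymbol{D}^n = n^{-1}\sum_c \boldsymbol{D}(\casc^c;\boldsymbol{\alpha})$ is a sum of diagonal matrices, it suffices to check that every diagonal entry of each summand is non-negative. Reading off Table~\ref{tab:functions}: for an uninfected node the diagonal entry is $-y''(T|t_k;\alpha_k)$, which is non-negative because $y = \log S$ is concave in $\alpha_k$ by the log-concavity of the survival function. For an infected node there is an additional contribution $-h(\mathbf{t};\boldsymbol{\alpha})^{-1}H''(t_i|t_k;\alpha_k)$, which is non-negative because $h(\mathbf{t};\boldsymbol{\alpha})>0$ for any feasible cascade and $H$ is concave in $\alpha_k$ by hypothesis. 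Thus $v^\top \boldsymbol{D}^n v \geq 0$ for every $v$.

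Second, the outer-product term is handled by a one-line calculation: for every $v$,
$$v^\top \left(n^{-1}\boldsymbol{X}^n (\boldsymbol{X}^n)^\top\right) v = n^{-1}\|(\boldsymbol{X}^n)^\top v\|_2^2 \geq 0,$$
with equality if and only if $(\boldsymbol{X}^n)^\top v = 0$. The non-singularity (full row rank) hypothesis on $\boldsymbol{X}^n(\boldsymbol{\alpha})$ forces $v = 0$ in that case, so for any $v \neq 0$ the inequality is strict. Combining both pieces, $v^\top \mathcal{Q}^n v = v^\top \boldsymbol{D}^n v + n^{-1}\|(\boldsymbol{X}^n)^\top v\|_2^2 > 0$ whenever $v \neq 0$, which establishes that $\mathcal{Q}^n$ is positive definite.

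The only delicate step is verifying the non-negativity of the diagonal entries of $\boldsymbol{D}(\casc^c;\boldsymbol{\alpha})$ for infected nodes, because it uses all three ingredients simultaneously: log-concavity of $S$ to handle the $-y''$ piece, concavity of $H$ to handle the $-H''$ piece, and strict positivity of the cumulative hazard $h$ on feasible cascades to ensure the scaling factor $h^{-1}$ is well-defined and positive. Everything else is linear algebra: PSD plus PD equals PD.
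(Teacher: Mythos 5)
Your proof is correct and follows essentially the same route as the paper's: both use the decomposition of $\mathcal{Q}^n$ from Eq.~\ref{eq:consistency-hessian-matrix-factorization}, argue that $\boldsymbol{D}^n$ is positive semidefinite from log-concavity of $S$ and concavity of $H$, and that the outer-product term is positive definite from the full-row-rank assumption on $\boldsymbol{X}^n(\boldsymbol{\alpha})$. Your version simply spells out the sign check on the diagonal entries of $\boldsymbol{D}(\casc^c;\boldsymbol{\alpha})$ that the paper leaves as ``trivial.''
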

\begin{proof}
Using Eq.~\ref{eq:consistency-hessian-matrix-factorization}, the Hessian matrix can be expressed as a sum of two matrices, $\mathbf{D}^n(\boldsymbol{\alpha})$
and $\mathbf{X}^n(\boldsymbol{\alpha}) \mathbf{X}^n(\boldsymbol{\alpha})^\top $.
The matrix $\mathbf{D}^n(\boldsymbol{\alpha})$ is trivially positive semidefinite by log-concavity of the survival functions and concavity of the hazard functions.
The matrix $\mathbf{X}^n(\boldsymbol{\alpha}) \mathbf{X}^n(\boldsymbol{\alpha})^\top $ is positive definite matrix since $\mathbf{X}^n(\boldsymbol{\alpha})$ is
full rank by assumption.
Then, the Hessian matrix is positive definite since it is a sum a positive semidefinite matrix and a positive definite matrix.
\end{proof}

\section{Proof of Lemma~\ref{lem:nonsingularity}} \label{app:proof-nonsingularity}

\begin{lemma} \label{lem:nonsingularity}
If the source probability $\PP(s)$ is strictly positive for all $s \in \Rcal$, then, for an arbitrarily large number of cascades $n \rightarrow \infty$, there exists an ordering of the nodes and cascades within 
the cascade set such that the hazard matrix $\boldsymbol{X}^n(\boldsymbol{\alpha})$ is non-singular.
\end{lemma}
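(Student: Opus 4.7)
The plan is to show full row rank of $\boldsymbol{X}^n(\boldsymbol{\alpha})\in\mathbb{R}^{p\times n}$ by exhibiting $p$ cascades whose hazard vectors, after a suitable relabelling of rows and columns, form an upper-triangular $p\times p$ block with strictly positive diagonal. The matrix is then non-singular, which is exactly what Lemma~\ref{lem:hessian-positive-definite} needs in order to turn $\boldsymbol{X}^n(\boldsymbol{X}^n)^\top$ into a positive definite matrix.

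First I would fix a linear order $k_1,\dots,k_p$ on $\Vcal_i$: list the ancestors in $\Rcal_i$ in increasing shortest-path distance to $i$, followed by the nodes of $\Ucal_i\setminus\Rcal_i$ in increasing shortest-path distance from $\Rcal_i$. For each $k_\ell$ I define a ``path-realization'' event $E_{k_\ell}$ on a single cascade. If $k_\ell\in\Rcal_i$, fix a shortest path $k_\ell=v_0\to v_1\to\cdots\to v_d=i$ and let $E_{k_\ell}$ require (a) the source equals $k_\ell$, (b) the transmission times along this path arrive in the listed order and are all small, and (c) every other transmission out of $\{v_0,\dots,v_{d-1}\}$ into $\Vcal_i$ exceeds the realized $t_i$. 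If $k_\ell\in\Ucal_i\setminus\Rcal_i$, pick a closest node $s\in\Rcal_i$ from which $k_\ell$ is reachable, fix a shortest $s\to k_\ell$ path and a shortest $s\to i$ path, and let $E_{k_\ell}$ require the cascade to follow both paths quickly while all other transmissions are delayed past $t_i$. The chosen ordering guarantees that in every such event the only nodes of $\Vcal_i$ infected before $i$ are $k_\ell$ together with nodes appearing strictly earlier than $k_\ell$ in the ordering.

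Each $E_{k_\ell}$ has strictly positive probability: the source hits $k_\ell$ (respectively $s$) with probability at least $\PP(k_\ell)>0$ or $\PP(s)>0$ by hypothesis, and the event that finitely many independent continuous transmission times fall in a prescribed open region of $(0,\infty)^m$ has positive probability for the standard parametric families (exponential, Rayleigh, power-law) whose densities are strictly positive on $(0,\infty)$. Since cascades are i.i.d., a Borel--Cantelli argument shows that as $n\to\infty$ each $E_{k_\ell}$ is almost surely realized by at least one cascade $c_\ell\in C^n$. Form the $p\times p$ submatrix $M$ of $\boldsymbol{X}^n(\boldsymbol{\alpha})$ with rows ordered as $k_1,\dots,k_p$ and columns as $c_1,\dots,c_p$. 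By construction $M_{k_m,c_\ell}=0$ whenever $m>\ell$, while the diagonal entry $M_{k_\ell,c_\ell}=h(\bt^{c_\ell};\boldsymbol{\alpha})^{-1}\,\partial_{\alpha_{k_\ell}} H(t_i\mid t_{k_\ell};\alpha_{k_\ell})$ is strictly positive because $H$ is increasing in its rate parameter for every standard pairwise likelihood. Hence $\det M\neq 0$, so $M$ is non-singular and $\boldsymbol{X}^n(\boldsymbol{\alpha})$ has full row rank.

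The main obstacle is the case $k_\ell\in\Ucal_i\setminus\Rcal_i$, because such a node cannot itself act as a source that reaches $i$, and the surrogate path used to infect $k_\ell$ may a priori pass through other nodes of $\Vcal_i$. Ordering $\Ucal_i\setminus\Rcal_i$ by distance from $\Rcal_i$ and selecting a closest surrogate source $s$ forces every auxiliary node visited along the way to $k_\ell$ to already precede $k_\ell$ in the total order, preserving the upper-triangular pattern; the remaining technical point is to verify that all off-path transmissions in $E_{k_\ell}$ can be simultaneously delayed past $t_i$ with positive probability, which follows from the independence and full-support assumption on the transmission densities.
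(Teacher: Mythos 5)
Your proof is correct and rests on the same core strategy as the paper's: reorder the rows (nodes) and columns (cascades) of $\boldsymbol{X}^n(\boldsymbol{\alpha})$ so that it contains a $p\times p$ triangular block with nonzero diagonal, which forces full row rank. Where you differ is in how that ordering and the witnessing cascades are produced. The paper orders the nodes of $\Rcal$ (and then of $\Ucal$) \emph{a posteriori}, ranking each candidate source by the earliest position at which node $i$ got infected among the observed cascades it sourced, and simply asserts that, as $n\rightarrow\infty$, cascades realizing the required pattern are present; it never argues why such cascades occur. You fix the ordering \emph{a priori} by graph distance (distance to $i$ within $\Rcal$, then distance from $\Rcal$ within $\Ucal\setminus\Rcal$), define explicit path-realization events whose occurrence yields exactly the needed zero pattern below the diagonal, and justify their almost-sure occurrence via positivity of $\PP(s)$ on $\Rcal$, full support of the transmission densities, and Borel--Cantelli. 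This makes rigorous the existence step the paper leaves implicit, at the price of invoking properties (positive density on open regions, $\partial_{\alpha}H>0$) that the paper also uses tacitly but only states elsewhere for the standard exponential, Rayleigh and power-law families. Your handling of $\Ucal\setminus\Rcal$ through a closest surrogate source in $\Rcal$, and the observation that ties in distance are harmless because a shortest path to $i$ (or to $k_\ell$) cannot pass through an equidistant node, are both sound.
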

\begin{proof}
%
In this proof, we find a labeling of the nodes (row indices in $\mathbf{X}^n(\boldsymbol{\alpha})$) and ordering of the cascades (column indices in $\mathbf{X}^n(\boldsymbol{\alpha})$),
such that, for an arbitrary large number of cascades, we can express the matrix $\mathbf{X}^n(\boldsymbol{\alpha})$ as $\left[ T \, B \right]$, where $T \in \mathbb{R}^{p \times p}$ is an
upper triangular with nonzero diagonal elements and $B \in \mathbb{R}^{p \times n-p}$.
And, therefore, $\mathbf{X}^n(\boldsymbol{\alpha})$ has full rank (rank $p$). We proceed first by sorting nodes in $\Rcal$ and then continue by sorting nodes in $\Ucal$:
\begin{itemize}
\denselistA
\item {\bf Nodes in $\Rcal$}: 
For each node $u \in \Rcal$, consider the set of cascades $C_u$ in which $u$ was a source and $i$ got infected. Then, rank each node $u$ according to the earliest position in which 
node $i$ got infected across all cascades in $C_u$ in decreasing order, breaking ties at random.
For example, if a node $u$ was, at least once, the source of a cascade in which node $i$ got infected just after the source, but in contrast, node $v$ was never the source of a cascade
in which node $i$ got infected the second, then node $u$ will have a lower index than node $v$.
Then, assign row $k$ in the matrix $\mathbf{X}^n(\boldsymbol{\alpha})$ to node in position $k$ and assign the first $d$ columns to the corresponding cascades in which
node $i$ got infected earlier.
In such ordering, $\mathbf{X}^n(\boldsymbol{\alpha})_{mk} = 0$ for all $m<k$ and $\mathbf{X}^n(\boldsymbol{\alpha})_{kk} \neq 0$.

\item {\bf Nodes in $\Ucal$}: 
Similarly as in the first step, and assign them the rows $d+1$ to $p$. Moreover, we assign the columns $d+1$ to $p$ to the corresponding cascades in which node 
$i$ got infected earlier. Again, this ordering satisfies that $\mathbf{X}^n(\boldsymbol{\alpha})_{mk} = 0$ for all $m<k$ and $\mathbf{X}^n(\boldsymbol{\alpha})_{kk} \neq 0$.
Finally, the remaining columns $n-p$ can be assigned to the remaining cascades at random.
\vspace{-5mm}
\end{itemize}
This ordering leads to the desired structure $\left[T \, B \right]$, and thus it is non-singular.
\end{proof}

\section{Proof of Eq~\ref{eq:bounded_hazard}.}
\label{app:proof-bounded_hazard}

If the Hazard vector $\boldsymbol{X}(\casc^c ; \boldsymbol{\alpha})$ is Lipschitz continuous in the domain $\{ \boldsymbol{\alpha} : \boldsymbol{\alpha}_S \geq \frac{\alpha^*_{\min}}{2} \}$,
\begin{equation*}
	\|\boldsymbol{X}(\casc^c ; \boldsymbol{\beta}) - \boldsymbol{X}( \casc^c ; \boldsymbol{\alpha})\|_2 \leq k_1 \|\boldsymbol{\beta} - \boldsymbol{\alpha}\|_2,
\end{equation*}
where $k_1$ is some positive constant. Then, we can bound the spectral norm of the difference, $\frac{1}{\sqrt{n}} (\boldsymbol{X}^n(\boldsymbol{\beta}) - \boldsymbol{X}^n(\boldsymbol{\alpha}))$,
in the domain $\{ \boldsymbol{\alpha} : \boldsymbol{\alpha}_S \geq \frac{\alpha^*_{\min}}{2} \}$ as follows:
\begin{align*}
& |\|\frac{1}{\sqrt{n}}
\big(\mathbf{X}^n(\boldsymbol{\beta})
-\mathbf{X}^n(\boldsymbol{\alpha})\big)\| |_2 \\
& = \max_{\|\boldsymbol{u}\|_2=1 } \frac{1}{\sqrt{n}} \| \boldsymbol{u}
\big(\mathbf{X}^n(\boldsymbol{\beta})
-\mathbf{X}^n(\boldsymbol{\alpha})\big) \|_2 \\
& = \max_{\|\boldsymbol{u}\|_2=1 } \frac{1}{\sqrt{n}}
\sqrt{ \sum_{c=1}^n \langle \boldsymbol{u},
\mathbf{X}(\casc^c ; \boldsymbol{\beta}) -
\mathbf{X}(\casc^c ; \boldsymbol{\alpha})\rangle^2 } \\
& \leq \frac{1}{\sqrt{n}} \sqrt{ k_1^2 n \| \boldsymbol{u}
\|_2^2\|\boldsymbol{\beta} -
	\boldsymbol{\alpha}\|_2^2}\\
& \leq 	k_1 \|\boldsymbol{\beta} -\boldsymbol{\alpha}\|_2.
\end{align*}

\section{Proof of Lemma~\ref{lemma:uniqueness}} \label{app:uniqueness}
By Lagrangian duality, the regularized network inference problem defined in Eq.~\ref{eq:opt-problem-regularized-one-node} is equivalent to the following
constrained optimization problem:
\begin{equation}
	\label{eq:opt-problem-regularized-one-node-equivalent}
	\begin{array}{ll}
		\mbox{minimize$_{\boldsymbol{\alpha}_i}$} & \ell^{n}(\boldsymbol{\alpha}_i) \\
		\mbox{subject to} & \alpha_{ji} \geq 0,\, j=1,\ldots,N, i \neq j, \\
		& ||\boldsymbol{\alpha}_{i}||_1 \leq C(\lambda_n)
	\end{array}
\end{equation}
where $C(\lambda_n) < \infty$ is a positive constant. In this alternative formulation, $\lambda_n$ is the Lagrange multiplier for the second constraint. Since $\lambda_n$ is strictly
positive, the constraint is active at any optimal solution, and thus $||\boldsymbol{\alpha}_i||_{1}$ is constant across all optimal solutions.

Using that $\ell^{n}(\boldsymbol{\alpha}_i)$ is a differentiable convex function by assum\-ption and $\{ \boldsymbol{\alpha} : \alpha_{ji} \geq 0, ||\boldsymbol{\alpha}_{i}||_1 \leq C(\lambda_n) \}$
is a convex set, we have that $\nabla \ell^{n}(\boldsymbol{\alpha}_i)$ is constant across optimal primal solutions~\cite{mangasarian1988simple}.
Moreover, any optimal primal-dual solution in the original problem must satisfy the KKT conditions in the alternative formulation defined by Eq.~\ref{eq:opt-problem-regularized-one-node-equivalent},
in particular,
\begin{equation*}
\nabla \ell^{n}(\boldsymbol{\alpha}_i) = -\lambda_n \mathbf{z} + \boldsymbol{\mu},
\end{equation*}
where $\boldsymbol{\mu} \geq 0$ are the Lagrange multipliers associated to the non negativity constraints and $\mathbf{z}$ denotes the subgradient of the $\ell$1-norm.

Consider the solution $\boldsymbol{\hat{\alpha}}$ such that $||\mathbf{\hat{z}_{S^c}}||_{\infty} < 1$ and thus $\nabla_{{\alpha}_{S^c}} \ell^{n}(\boldsymbol{\hat{\alpha}}_i) = -\lambda_n
\mathbf{\hat{z}_{S^c}} + \boldsymbol{\hat{\mu}}_{S^c}$.
Now, assume there is an optimal primal solution $\boldsymbol{\tilde{\alpha}}$ such that $\tilde{\alpha}_{ji} > 0$ for some $j \in S^{c}$, then, using that the gradient must be constant across
optimal solutions, it should hold that $-\lambda_n \hat{z}_j + \hat{\mu}_j = -\lambda_n$, where $\tilde{\mu}_{ji} = 0$ by complementary slackness, which implies
$\hat{\mu}_j = -\lambda_n (1-\hat{z}_j) < 0$. Since $\hat{\mu}_j \geq 0$ by assumption, this leads to a contradiction.
Then, any primal solution $\boldsymbol{\tilde{\alpha}}$ must satisfy $\boldsymbol{\tilde{\alpha}}_{S^c} = 0$ for the gradient to be constant across optimal solutions.

Finally, since $\boldsymbol{\alpha}_{S^c} = 0$ for all optimal solutions, we can consider the restricted optimization problem defined in Eq.~\ref{eq:restricted-optimization}.
If the Hessian sub-matrix $[\nabla^2 L(\boldsymbol{\hat{\alpha}})]_{SS}$ is strictly positive definite, then this restricted optimization problem is strictly convex and the optimal
solution must be unique.

\section{Proof of Lemma~\ref{lemma:alphas-s}} \label{app:proof-balltrick}

To prove this lemma, we will first construct a function
\begin{multline*}
	G(\mathbf{u}_S) := \ell^n(\boldsymbol{\alpha^*}_S+\mathbf{u}_S) -
	\ell^n(\boldsymbol{\alpha^*}_S)\\
	 + \lambda_n	(\|\boldsymbol{\alpha^*}_S+\mathbf{u}_S\|_1 -
	 \|\boldsymbol{\alpha^*}_S\|_1).
\end{multline*}
whose domain is restricted to the convex set $\Ucal = \{ \mathbf{u}_S : \boldsymbol{\alpha}^*_S + \mathbf{u}_S \geq \mathbf{0} \}$. By construction, $G(\mathbf{u}_S)$ has the following properties
\begin{enumerate}[noitemsep,nolistsep]
  \item It is convex with respect to $\mathbf{u}_S$.
  \item Its minimum is obtained at $\boldsymbol{\hat{u}}_S  := \boldsymbol{\hat{\alpha}}_S - \boldsymbol{\alpha}^*_S$. That is $G(\boldsymbol{\hat{u}}_S) \leq G(\mathbf{u}_S)$, $\forall \mathbf{u}_S \neq \boldsymbol{\hat{u}}_S$.  
  \item $G(\boldsymbol{\hat{u}}_S) \leq G(\mathbf{0}) = 0$. 
\end{enumerate}
Based on property 1 and 3, we deduce that any point in the segment,  $\mathbb{L}:=\cbr{\boldsymbol{\tilde{u}}_S: \boldsymbol{\tilde{u}}_S=t \boldsymbol{\hat{u}}_S + (1-t)\mathbf{0}, t\in [0,1]}$, connecting $\boldsymbol{\hat{u}}_S$ and $\mathbf{0}$ has $G(\boldsymbol{\tilde{u}}_S)\leq 0$. That is 
\begin{align*}
    G(\boldsymbol{\tilde{u}}_S) 
    &= G(t \boldsymbol{\hat{u}}_S + (1-t)\mathbf{0})\\
    &\leq t G(\boldsymbol{\hat{u}}_S) + (1-t) G(\mathbf{0}) \leq 0. 
\end{align*}

Next, we will find a sphere centered at $\mathbf{0}$ with strictly po\-si\-tive radius $B$, $\mathbb{S}(B):=\cbr{\boldsymbol{u}_S: \nbr{\boldsymbol{u}_S}_2=B}$, such that function $G(\boldsymbol{u}_S)>0$ (strictly positive) on $\mathbb{S}(B)$. We note that this sphere $\mathbb{S}(B)$ can not intersect with the seg\-ment $\mathbb{L}$ since the two sets have strictly different function values. Furthermore, the only possible configuration is that the segment is contained inside the sphere entirely, leading us to conclude that the end point $\boldsymbol{\hat{u}}_S  := \boldsymbol{\hat{\alpha}}_S - \boldsymbol{\alpha}^*_S$ is also within the sphere. That is $\nbr{\boldsymbol{\hat{\alpha}}_S - \boldsymbol{\alpha}^*_S}_2 \leq B$. 

In the following, we will provide details on finding such a suitable $B$ which will be a function of the regularization parameter $\lambda_n$ and the neighborhood size $d$.
More spe\-ci\-fi\-ca\-lly, we will start by applying a Taylor series expansion and the mean value theorem,
\begin{multline} \label{equ:lemma6-main-equation-for-G}
G( \mathbf{u}_S) = \nabla_S \ell^n(\boldsymbol{\alpha}^*_S)^\top   \mathbf{u}_S \\
+  \mathbf{u}_S^\top  \nabla^2_{SS} \ell^n(\boldsymbol{\alpha}^*_S+b
\mathbf{u}_S )  \mathbf{u}_S\\
	 +\lambda_n(\|\boldsymbol{\alpha}^*_S +  \mathbf{u}_S\|_1 - \|\boldsymbol{\alpha}^*_S\|_1),
\end{multline}
where $b \in [0, 1]$. We will show that $G(\mathbf{u}_S)>0$ by boun\-ding below each term of above
equation separately. 

We bound the absolute value of the first term using the assum\-ption on the gradient, $\nabla_S\ell(\cdot)$,
\begin{multline} \label{equ:lemma6-bound-first-term}
 | \nabla_S\ell^n(\boldsymbol{\alpha}^*_S)^\top  \mathbf{u}_S | \leq \| \nabla_S
 \ell \|_{\infty} \|\mathbf{u}_S\|_1  \\
 \leq \| \nabla_S \ell \|_{\infty} \sqrt{d} \|\mathbf{u}_S \|_2 \\
 \leq 4^{-1}\lambda_n B \sqrt{d}.
\end{multline}

We bound the absolute value of the last term using the reverse triangle inequality.
\begin{equation} \label{equ:lemma6-bound-third-term}
 \lambda_n | \| \boldsymbol{\alpha}^*_S + \mathbf{u}_S \|_1 - \|
 \boldsymbol{\alpha}^*_S \|_1 | \leq \lambda_n \| \mathbf{u}_S \|_1 \leq
 \lambda_n \sqrt{d} \| \mathbf{u}_S \|_2.
\end{equation}

Bounding the remaining middle term is more challenging. We start by rewriting the Hessian as a sum of two matrices, using
Eq.~\ref{eq:consistency-hessian-matrix-factorization},
\begin{align*}
	q &= \min_{\mathbf{u}_S} \mathbf{u}_S^\top
	\mathbf{D}^n_{SS}(\boldsymbol{\alpha}^*_S+b \mathbf{u}_S) \mathbf{u}_S \\
	&+ n^{-1} \mathbf{u}_S^\top  \mathbf{X}^n_S(\boldsymbol{\alpha}^*_S+b \mathbf{u}_S)
	\mathbf{X}^n_S(\boldsymbol{\alpha}^*_S+b \mathbf{u}_S)^\top \mathbf{u}_S \\
	&= \min_{\mathbf{u}_S} \mathbf{u}_S^\top
	\mathbf{D}^n_{SS}(\boldsymbol{\alpha}^*_S+b \mathbf{u}_S) \mathbf{u}_S
	+ \| \mathbf{u}_S^\top  \mathbf{X}^n_S(\boldsymbol{\alpha}^*_S+b \mathbf{u}_S) \|^2_2.
\end{align*}

Now, we introduce two additional quantities,
\begin{align*}
	& \Delta \mathbf{D}^n_{SS} = \mathbf{D}^n_{SS}(\boldsymbol{\alpha}^*_S+b \mathbf{u}_S) -
	\mathbf{D}^n_{SS}(\boldsymbol{\alpha}^*_S)\\
	& \Delta \mathbf{X}^n_S = \mathbf{X}^n_S(\boldsymbol{\alpha}^*_S+b \mathbf{u}_S) -
	\mathbf{X}^n_S(\boldsymbol{\alpha}^*_S),
\end{align*}
and rewrite $q$ as
\begin{eqnarray*}
& q & = \min_{\mathbf{u}_S} \big[ \mathbf{u}_S^\top
	\mathbf{D}^n_{SS}(\boldsymbol{\alpha}^*_S) \mathbf{u}_S
	+ n^{-1} \| \mathbf{u}_S^\top  \mathbf{X}^n_S(\boldsymbol{\alpha}^*_S) \|^2_2\\
& & + n^{-1}\|\mathbf{u}_S^\top  \Delta \mathbf{X}^n_S \|^2_2 + \mathbf{u}_S^\top  \Delta
\mathbf{D}^n_{SS} \mathbf{u}_S\\
& & + 2 n^{-1} \langle \mathbf{u}_S^\top  \mathbf{X}^n_S(\boldsymbol{\alpha}^*_S),
\mathbf{u}_S^\top  \Delta \mathbf{X}^n_S \rangle \big].
\end{eqnarray*}

Next, we use dependency condition,
\begin{multline*} \label{equ:proof-of-lemma6-inequality-on-q}
	q \geq C_{\min} B^2 - \max_{\mathbf{u}_S} |\underbrace{\mathbf{u}_S^\top
	\Delta \mathbf{D}^n_{SS} \mathbf{u}_S}_{T_1}| \\
- \max_{\mathbf{u}_S} 2 |\underbrace{n^{-1}\langle \mathbf{u}_S^\top
\mathbf{X}^n_S(\boldsymbol{\alpha}^*_S),  \mathbf{u}_S^\top  \Delta \mathbf{X}^n_S
\rangle}_{T_2}|,
\end{multline*}
and proceed to bound $T_1$ and $T_2$ separately. First, we bound $T_1$ using the Lipschitz condition,
\begin{eqnarray*}
& | T_1 | & = | \sum_{k\in S} u_k^2 [ \mathbf{D}^n_k(\boldsymbol{\alpha}^*_S+b
 \mathbf{u}_S)-\mathbf{D}^n_k(\boldsymbol{\alpha}^*_S)] | \\
& & \leq \sum_{k\in S} u_k^2 k_2 \|b \mathbf{u}_S \|_2\\
& & \leq k_2 B^3.
\end{eqnarray*}

Then, we use the dependency condition, the Lipschitz condition and the Cauchy-Schwartz inequality to bound $T_2$,
\begin{eqnarray*}
& T_2 & \leq  \frac{1}{\sqrt{n}} \| \mathbf{u}_S^\top
\mathbf{X}^n_S(\boldsymbol{\alpha}^*_S) \|_2  \frac{1}{\sqrt{n}}\|\mathbf{u}_S^\top  \Delta
\mathbf{X}^n_S\|_2 \\
& & \leq \sqrt{C_{\max}} B  \frac{1}{\sqrt{n}} \|\mathbf{u}_S^\top  \Delta
\mathbf{X}^n_S\|_2 \\
& & \leq \sqrt{C_{\max}} B  \|\mathbf{u}_S\|_2 \frac{1}{\sqrt{n}} |\|\Delta
\mathbf{X}^n_S\||_2\\
& & \leq \sqrt{C_{\max}} B^2 k_1 \|b \mathbf{u}_S\|_2\\
& & \leq k_1 \sqrt{C_{\max}} B^3,
\end{eqnarray*}
where we note that applying the Lipschitz condition implies assuming $B < \frac{\alpha_{min}}{2}$. Next, we
incorporate the bounds of $T_1$ and $T_2$ to lower bound $q$,
\begin{equation}
 q \geq C_{\min} B^2 - (k_2 + 2 k_1 \sqrt{C_{\max}}) B^3.
\end{equation}

Now, we set $B = K \lambda_n \sqrt{d}$, where $K$ is a constant that we will set later in the proof, and select the regularization parameter $\lambda_n$ to satisfy
$\lambda_n \sqrt{d} \leq 0.5 C_{\min} / K(k_2 + 2 k_1 \sqrt{C_{\max})}$. Then,
\begin{eqnarray*}
& G(\mathbf{u}_S) & \geq - 4^{-1}\lambda_n \sqrt{d} B  + 0.5 C_{\min} B^2  -
\lambda_n \sqrt{d} B \\
& & \geq B(0.5 C_{\min} B - 1.25 \lambda_n \sqrt{d}) \\
& & \geq B(0.5 C_{\min} K \lambda_n \sqrt{d} - 1.25 \lambda_n \sqrt{d}).
\end{eqnarray*}

In the last step, we set the constant $K = 3 C_{\min}^{-1}$, and we have
\begin{equation*}
G(\mathbf{u}_S)\geq 0.25 \lambda_n \sqrt{d} > 0,
\end{equation*}
as long as
\begin{align*}
& \sqrt{d} \lambda_n \leq \frac{C_{\min}^2}{6 (k_2 + 2 k_1 \sqrt{C_{\max}})}\\
& \alpha^*_{\min} \geq \frac{6 \lambda_n \sqrt{d}}{C_{\min}}.
\end{align*}

Finally, convexity of $G(\mathbf{u}_S)$ yields
\begin{equation*}
	\| \boldsymbol{\hat{\alpha}}_S - \boldsymbol{\alpha}^*_S \|_2 \leq 3 \lambda_n
	\sqrt{d} / C_{\min}
	\leq
	\frac{\alpha^*_{\min}}{2}.
\end{equation*}
\begin{figure*}[t]
	\centering
	\subfigure[Chain ($d_i=1$)]{\includegraphics[width=0.32\textwidth]{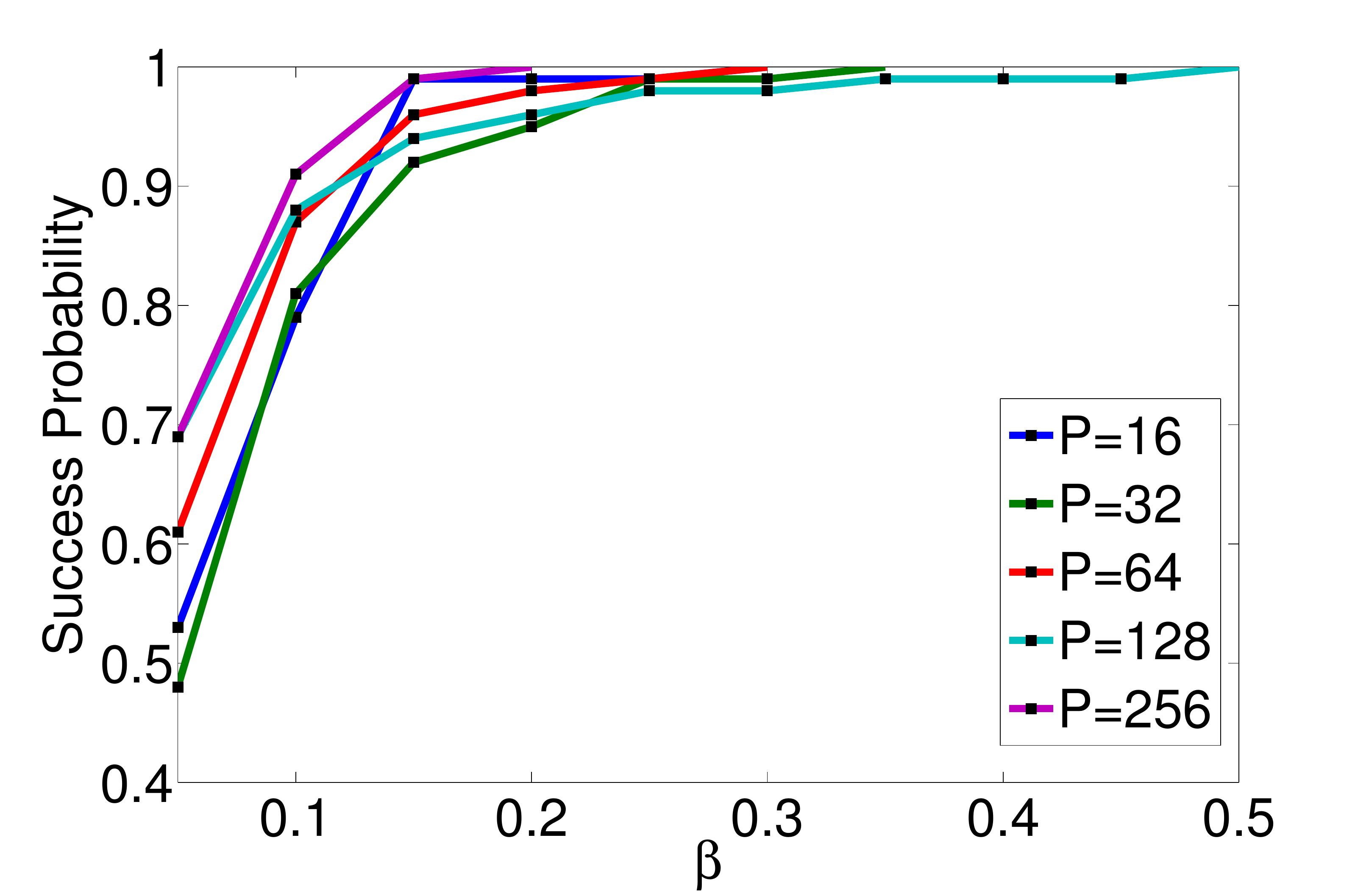}\label{fig:chain-p}}
	\subfigure[Stars with different \# of leaves ($d_i=1$)]{\includegraphics[width=0.32\textwidth]{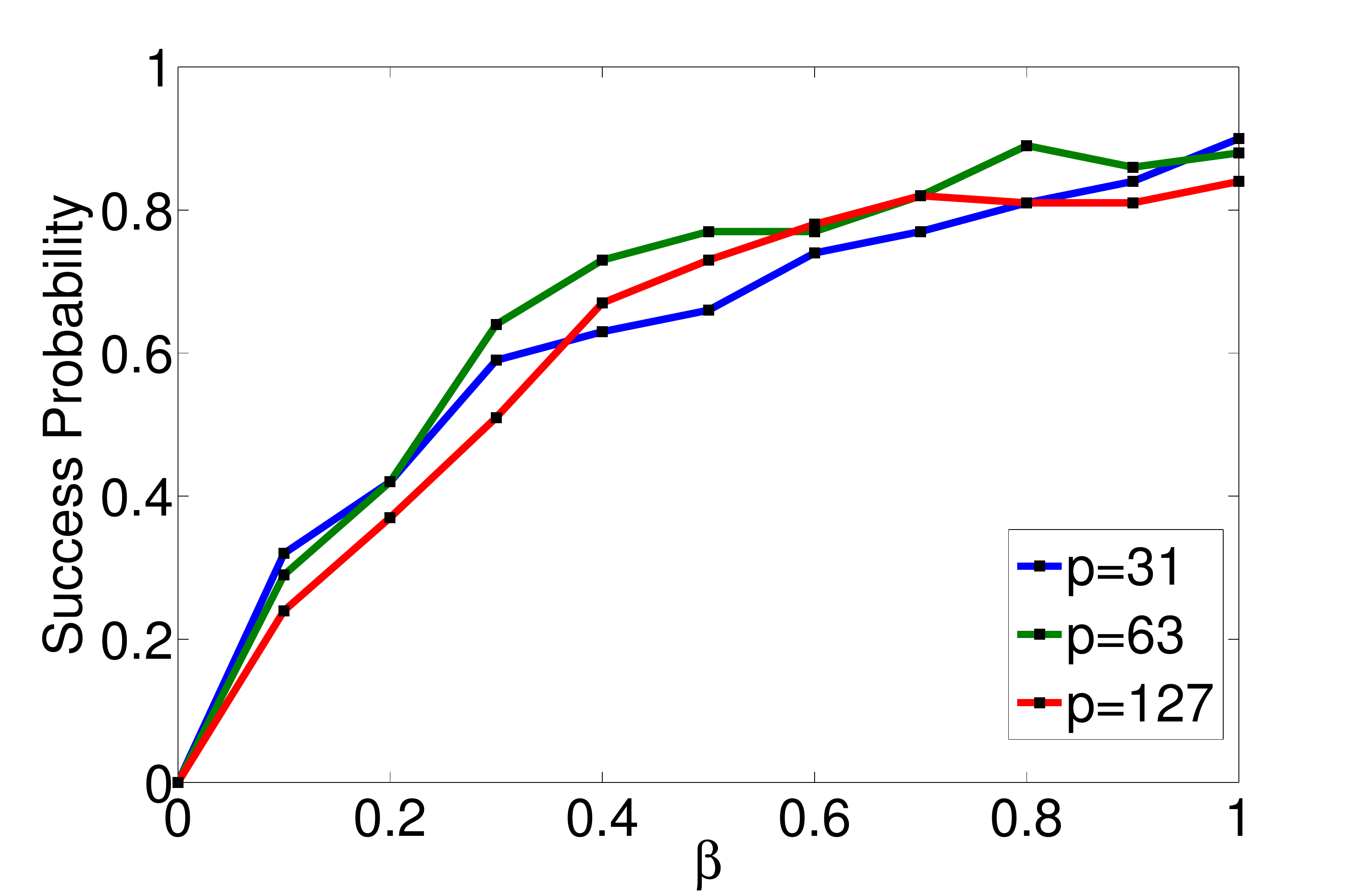}\label{fig:star-p}}
	\subfigure[Tree ($d_i=3$)]{\includegraphics[width=0.32\textwidth]{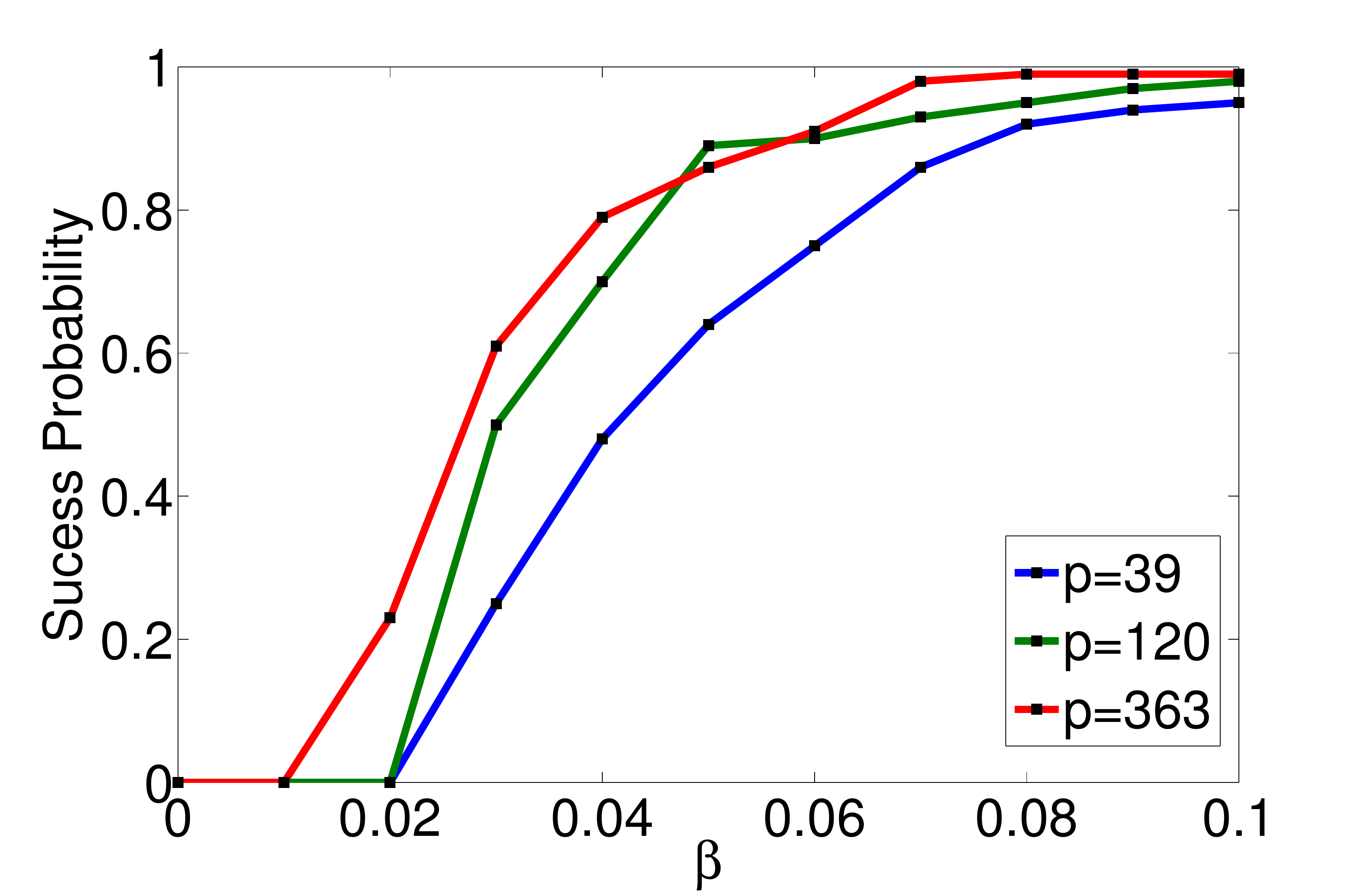}\label{fig:tree-p}}
 	\caption{Success probability vs. \# of cascades. Different super-neighborhood sizes $p_i$.} \label{fig:additional-p}
\end{figure*}

\section{Proof of Lemma~\ref{lemma:gradient}} \label{app:proof-gradient-hoeffding}
Define $z_j^c = [\nabla g(\mathbf{t}^c; \boldsymbol{\alpha}^*)]_j$ and $z_j = \frac{1}{n} \sum_c z_j^c$.
Now, using\- the KKT conditions and condition 4 (Boundedness), we have that $\mu^*_j = \mathbb{E}_c\{z_j^c\}$ and $|z_j^c| \leq k_3$,
respectively.
Thus, Hoeffding'{}s inequality yields
\begin{multline*}
	P(|z_j - \mu^*_j|>\frac{ \lambda_n \varepsilon}{4(2-\varepsilon)})  \\
	\leq 2\exp\left(- \frac{n \lambda^2_n
	\varepsilon^2}{32k^2_3\left(2-\varepsilon\right)^2}\right),
\end{multline*}
and then,
\begin{multline*}
	P(\|\boldsymbol{z} - \boldsymbol{\mu}^*\|_{\infty}>\frac{ \lambda_n
	\varepsilon}{4(2-\varepsilon)})
	\\
	\leq 2\exp\left(- \frac{n \lambda^2_n
	\varepsilon^2}{32k^2_3\left(2-\varepsilon)\right)^2}+ \log p \right).
\end{multline*}

\section{Proof of Lemma~\ref{lemma:taylor-error}} \label{app:proof-taylor-error}
We start by factorizing the Hessian matrix, using Eq.~\ref{eq:consistency-hessian-matrix-factorization},
\begin{equation*}
	R^n_j = \left[\nabla^2 \ell^n(\boldsymbol{\bar{\alpha}}_j) - \nabla^2
	\ell^n(\boldsymbol{\alpha^*}) \right]_j^\top  (\boldsymbol{\hat{\alpha}} - \boldsymbol{\alpha^*}) =  \omega^n_j + \delta^n_j,
\end{equation*}
where,
\begin{align*}
\omega^n_j   &= \big[\mathbf{D}^n(\boldsymbol{\bar{\alpha}}_j) -
	\mathbf{D}^n(\boldsymbol{\alpha^*})\big]_j^\top  (\boldsymbol{\hat{\alpha}} -
	\boldsymbol{\alpha^*}) \\
\delta^n_j &= \frac{1}{n} \boldsymbol{V}^n_j (\boldsymbol{\hat{\alpha}} -
	\boldsymbol{\alpha^*}) \\
\boldsymbol{V}^n_j &=  [\mathbf{X}^n(\boldsymbol{\bar{\alpha}}_j)]_j
	\mathbf{X}^n(\boldsymbol{\bar{\alpha}}_j)^\top - [\mathbf{X}^n(\boldsymbol{\alpha^*})]_j
	\mathbf{X}^n(\boldsymbol{\alpha^*})^\top.
\end{align*}

Next, we proceed to bound each term separately.
Since $[\boldsymbol{\bar{\alpha}}_j]_S = \theta_j \boldsymbol{\hat{\alpha}}_S + (1-\theta_j)\boldsymbol{\alpha}^{*}_S$ where $\theta_j \in [0,1]$, and
$\| \boldsymbol{\hat{\alpha}}_S - \boldsymbol{\alpha}^{*}_S  \|_{\infty} \leq \frac{\alpha^*_{\min}}{2}$ (Lemma~\ref{lemma:alphas-s}), it holds that
$[\boldsymbol{\bar{\alpha}}_j]_S \geq \frac{\alpha^*_{\min}}{2}$. Then, we can use condition 3 (Lipschitz Continuity) to bound $\omega^n_j$.
\begin{eqnarray} \label{equ:appendix-lemma8-proof-omega-bound}
	& |\omega^n_j| & \leq k_1 \| \boldsymbol{\bar{\alpha}}_j -
	\boldsymbol{\alpha^*} \|_2 \|\boldsymbol{\hat{\alpha}} -
	\boldsymbol{\alpha^*} \|_2 \nonumber \\
	& & \leq k_1 \theta_j \|\boldsymbol{\hat{\alpha}} -
	\boldsymbol{\alpha^*}\|_2^2 \nonumber \\
	& & \leq k_1  \|\boldsymbol{\hat{\alpha}} -
	\boldsymbol{\alpha^*}\|_2^2.
\end{eqnarray}

However, bounding term $\delta^n_j$ is more difficult. Let us start by rewriting $\delta^n_j$ as follows.
\begin{equation*}
	\delta^n_j = \left( \boldsymbol{\Lambda}_1 + \boldsymbol{\Lambda}_2 + \boldsymbol{\Lambda}_3 \right) (\boldsymbol{\hat{\alpha}} - \boldsymbol{\alpha^*}),
\end{equation*}
where,
\begin{align*}
   & \boldsymbol{\Lambda}_1 =  [\mathbf{X}^n(\boldsymbol{\alpha^*})]_j
	(\mathbf{X}^n(\boldsymbol{\bar{\alpha}}_j)^\top  -
	\mathbf{X}^n(\boldsymbol{\alpha^*})^\top ) \\
   & \boldsymbol{\Lambda}_2 = \{ [\mathbf{X}^n(\boldsymbol{\bar{\alpha}}_j)]_j
	- [\mathbf{X}^n(\boldsymbol{\alpha^*})]_j
	\}(\mathbf{X}^n(\boldsymbol{\bar{\alpha}}_j)^\top  -
	\mathbf{X}^n(\boldsymbol{\alpha^*})^\top )\\
   & \boldsymbol{\Lambda}_3 = \big(  [\mathbf{X}^n(\boldsymbol{\bar{\alpha}}_j)]_j -
    [\mathbf{X}^n(\boldsymbol{\alpha^*})]_j
	\big)\mathbf{X}^n(\boldsymbol{\alpha^*})^\top.
\end{align*}

Next, we bound each term separately. For the first term, we first apply Cauchy inequality,
\begin{multline*}
	 | \boldsymbol{\Lambda}_1(\boldsymbol{\hat{\alpha}} -
	\boldsymbol{\alpha^*})| 	\leq \|[\mathbf{X}^n(\boldsymbol{\alpha^*})]_j\|_2
	\\
	\times|\| \mathbf{X}^n(\boldsymbol{\bar{\alpha}}_j)^\top  -
	\mathbf{X}^n(\boldsymbol{\alpha^*})^\top  \||_2 \|\boldsymbol{\hat{\alpha}} -
	\boldsymbol{\alpha^*}\|_2,
\end{multline*}
and then use condition 3 (Lipschtiz Continuity) and 4 (Boundedness),
\begin{eqnarray*}
	& |\boldsymbol{\Lambda}_1(\boldsymbol{\hat{\alpha}} -
	\boldsymbol{\alpha^*}) |
	& \leq  n k_4 k_1 \|\boldsymbol{\bar{\alpha}}_j -
	\boldsymbol{\alpha^*} \|_2 \|\boldsymbol{\hat{\alpha}} -
	\boldsymbol{\alpha^*}\|_2 \\
	& & \leq n k_4 k_1\|\boldsymbol{\hat{\alpha}} -
	\boldsymbol{\alpha^*}\|_2^2.
\end{eqnarray*}

For the second term, we also start by applying Cauchy inequality,
\begin{multline*}
	| \boldsymbol{\Lambda}_2(\boldsymbol{\hat{\alpha}} -
	\boldsymbol{\alpha^*})| 	\leq \| [\mathbf{X}^n(\boldsymbol{\bar{\alpha}}_j)]_j
	- [\mathbf{X}^n(\boldsymbol{\alpha^*})]_j\|_2
	\\
	\times|\| \mathbf{X}^n(\boldsymbol{\bar{\alpha}}_j)^\top  -
	\mathbf{X}^n(\boldsymbol{\alpha^*})^\top  \||_2 \|\boldsymbol{\hat{\alpha}} -
	\boldsymbol{\alpha^*}\|_2,
\end{multline*}
and then use condition 3 (Lipschtiz Continuity),
\begin{equation*}
| \boldsymbol{\Lambda}_2(\boldsymbol{\hat{\alpha}} -
	\boldsymbol{\alpha^*})| \leq n k_1^2 \|\boldsymbol{\hat{\alpha}} -
	\boldsymbol{\alpha^*}\|_2^2.
\end{equation*}

Last, for third term, once more we start by applying Cauchy inequality,
\begin{multline*}
	| \boldsymbol{\Lambda}_3(\boldsymbol{\hat{\alpha}} -
	\boldsymbol{\alpha^*})| 	\leq \| [\mathbf{X}^n(\boldsymbol{\bar{\alpha}}_j)]_j
	- [\mathbf{X}^n(\boldsymbol{\alpha^*})]_j\|_2
	\\
	\times|\|
	\mathbf{X}^n(\boldsymbol{\alpha^*})^\top  \||_2 \|\boldsymbol{\hat{\alpha}} -
	\boldsymbol{\alpha^*}\|_2,
\end{multline*}
and then apply condition 1 (Dependency Condition) and condition 3 (Lipschitz Continuity),
\begin{equation*}
| \boldsymbol{\Lambda}_3(\boldsymbol{\hat{\alpha}} -
	\boldsymbol{\alpha^*})| \leq n k_1 \sqrt{C_{\max}} \|\boldsymbol{\hat{\alpha}}
	- \boldsymbol{\alpha^*}\|_2^2
\end{equation*}

Now, we combine the bounds,
\begin{equation*}
	\| \boldsymbol{R}^n \|_{\infty} \leq K \|\boldsymbol{\hat{\alpha}}
	- \boldsymbol{\alpha^*}\|_2^2,
\end{equation*}
where
\begin{equation*}
	K = k_1 + k_4 k_1 + k_1^2 + k_1 \sqrt{C_{\max}}.
\end{equation*}

Finally, using Lemma~\ref{lemma:alphas-s} and selecting the regularization parameter $\lambda_n$ to satisfy $\lambda_n d \leq C_{\min}^{2} \frac{\varepsilon}{36 K (2-\varepsilon)}$
yields:
\begin{eqnarray*}
	& \| \boldsymbol{R}^n \|_{\infty}/\lambda_n  & \leq 3 K \lambda_n d  /
	C_{\min}^2 \\
	& & \leq \frac{\varepsilon}{4(2-\varepsilon)}
\end{eqnarray*}

\section{Proof of Lemma~\ref{lemma:dependency-n}} \label{app:dependency-n}
We will first bound the difference in terms of nuclear norm between the population Fisher information matrix $\mathcal{Q}_{SS}$ and the sample mean cascade
log-likelihood $\mathcal{Q}^n_{SS}$.
Define $z_{jk}^c = [\nabla^2 g(\mathbf{t}^c; \boldsymbol{\alpha}^*) - \nabla^2 \ell^n(\boldsymbol{\alpha}^*)]_{jk}$ and $z_{jk} = \frac{1}{n}\sum_{c=1}^n z_{jk}^{c}$.
Then, we can express the difference between the population Fisher information matrix $\mathcal{Q}_{SS}$ and the sample mean cascade log-likelihood
$\mathcal{Q}^n_{SS}$ as:
\begin{multline*}
 |\| \mathcal{Q}^n_{SS}(\boldsymbol{\alpha^*}) -
\mathcal{Q}^{*}_{SS}(\boldsymbol{\alpha^*}) \||_2 \\
\leq |\| \mathcal{Q}^n_{SS}(\boldsymbol{\alpha^*}) -
\mathcal{Q}^{*}_{SS}(\boldsymbol{\alpha^*}) \||_F \\
= \sqrt{\sum_{j=1}^d \sum_{k=1}^d (z_{ik})^2}.
\end{multline*}

Since $|z_{jk}^{(c)}| \leq 2 k_5$ by condition 4, we can apply Hoeff\-ding'{}s inequality to each $z_{jk}$,
\begin{equation} \label{eq:appendix-lemma9-bound-on-z}
	P(|z_{jk}| \geq \beta ) \leq 2 \exp\left(-\frac{\beta^2 n }{ 8 k_5^2} \right),
\end{equation}
and further,
\begin{multline}
\label{eq:appendix-lemma9-bound-on-spectral-norm}
P(|\|\mathcal{Q}^n_{SS}(\boldsymbol{\alpha^*}) - \mathcal{Q}^{*}_{SS}(\boldsymbol{\alpha^*}) \||_2 \geq \delta)\\
\leq 2 \exp\big( -K \frac{\delta^2 n}{d^2} + 2 \log d   \big)
\end{multline}
where $\beta^2 = \delta^2 / d^2$.
Now, we bound the maximum eigenvalue of $\mathcal{Q}^n_{SS}$ as fo\-llows:
\begin{eqnarray*}
& \Lambda_{\max}(\mathcal{Q}^n_{SS}) & = \max_{\|x\|_2 = 1} x^\top
\mathcal{Q}^n_{SS} x\\
& & = \max_{\|x\|_2 = 1} \{ x^\top  \mathcal{Q}^*_{SS} x + x^\top  ( \mathcal{Q}^n_{SS}
- \mathcal{Q}^*_{SS}) x\}\\
& & \leq y^\top  \mathcal{Q}^*_{SS} y + y^\top  ( \mathcal{Q}^n_{SS}
- \mathcal{Q}^*_{SS}) y,
\end{eqnarray*}
where $y$ is unit-norm maximal eigenvector of $\mathcal{Q}^*_{SS}$. Therefore,
\begin{equation*}
	\Lambda_{\max}(\mathcal{Q}^n_{SS})\leq \Lambda_{\max}(\mathcal{Q}^*_{SS}) + |\| \mathcal{Q}^n_{SS} -
\mathcal{Q}^{*}_{SS} \||_2,
\end{equation*}
and thus,
\begin{multline*}
	P\big(\Lambda_{\max}(\mathcal{Q}^n_{SS})\geq C_{\max} + \delta \big) \\
	\leq \exp\left( -K \frac{\delta^2 n}{d^2} + 2 \log d  \right).
\end{multline*}

Reasoning in a similar way, we bound the minimum eigenvalue of  $\mathcal{Q}^n_{SS}$:
\begin{multline*}
	P\big(\Lambda_{\min}(\mathcal{Q}^n_{SS})\leq C_{\min} - \delta \big) \\
	\leq \exp\left( -K \frac{\delta^2 n}{d^2} + 2 \log d  \right)
\end{multline*}
\begin{figure}[t]
	\centering
	\includegraphics[width=0.3\textwidth]{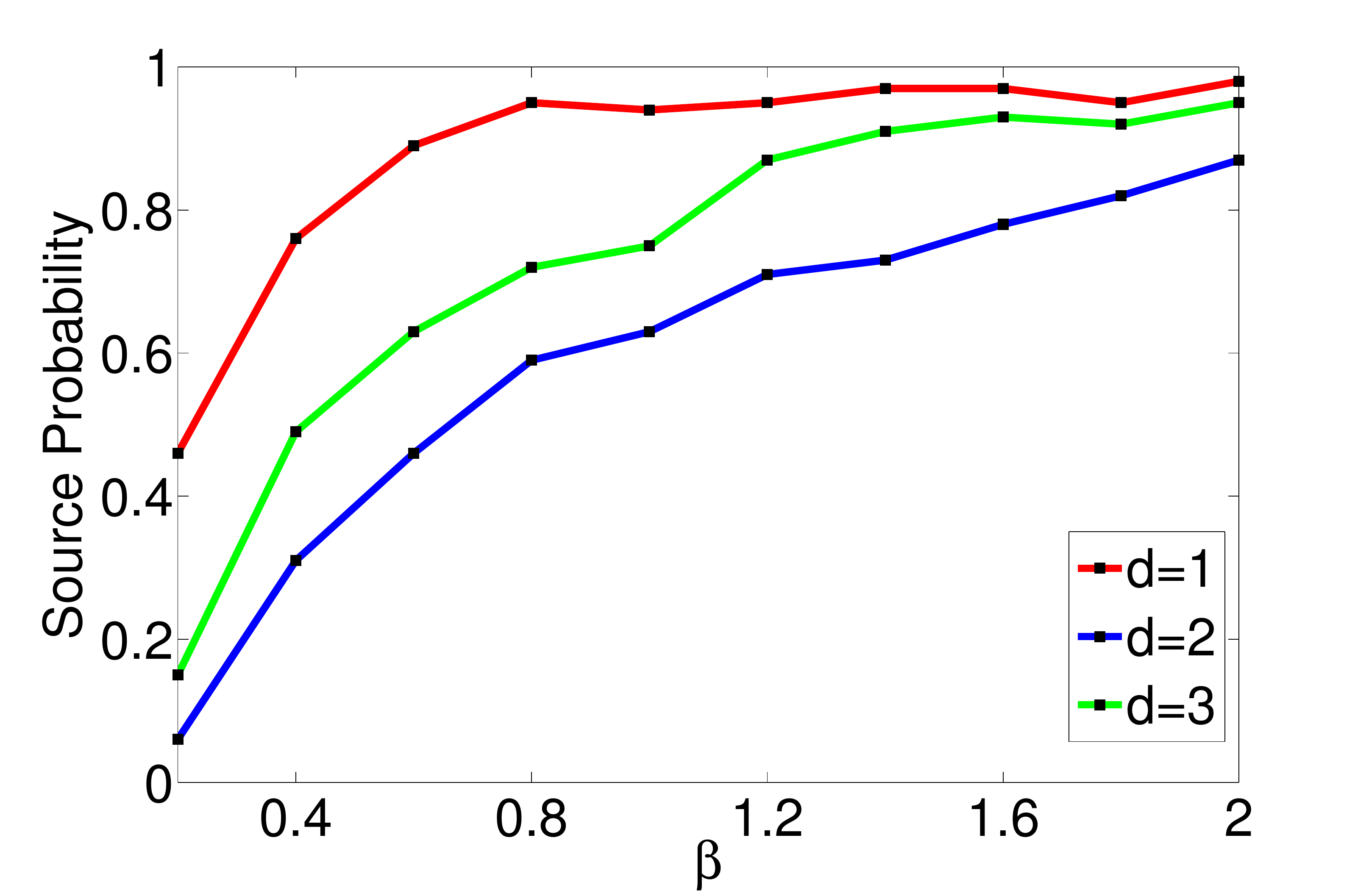}
 	\caption{Success probability vs. \# of cascades. Different in-degrees $d_i$.} \label{fig:additional-d}
\end{figure}

\section{Proof of Lemma~\ref{lemma:coherence-n}}  \label{app:coherence-n}
We start by decomposing $\mathcal{Q}^n_{S^cS}(\boldsymbol{\alpha^*}) (\mathcal{Q}^n_{S^cS}(\boldsymbol{\alpha^*}))^{-1}$ as fo\-llows:
\begin{equation*}
\mathcal{Q}^n_{S^cS}(\boldsymbol{\alpha^*}) (\mathcal{Q}^n_{S^cS}(\boldsymbol{\alpha^*}))^{-1} = A_1 + A_2 + A_3 + A_4,
\end{equation*}
where,
\begin{align*}
	& A_1 = \mathcal{Q}^*_{S^cS} [ (\mathcal{Q}^n_{S^cS})^{-1} -
	(\mathcal{Q}^*_{S^cS})^{-1} ], \\
	& A_2 = [\mathcal{Q}^n_{S^cS} - \mathcal{Q}^*_{S^cS}]
	[ (\mathcal{Q}^n_{S^cS})^{-1} -
	(\mathcal{Q}^*_{S^cS})^{-1} ]\\
	 & A_3 = [\mathcal{Q}^n_{S^cS} -
	\mathcal{Q}^*_{S^cS}](\mathcal{Q}^*_{SS})^{-1},\\
	& A_4 = \mathcal{Q}^*_{S^cS}(\mathcal{Q}^*_{SS})^{-1},
\end{align*}
$\mathcal{Q}^* = \mathcal{Q}^*(\boldsymbol{\alpha^*})$ and $\mathcal{Q}^n = \mathcal{Q}^n(\boldsymbol{\alpha^*})$.
Now, we bound each term separately. The fourth term, $A_4$, is the easiest to bound, using simply the incoherence condition:
\begin{equation*}
	|\| A_4 \||_{\infty} \leq 1- \varepsilon.
\end{equation*}
\begin{figure*}[t]
	\centering
	\subfigure[Kronecker hierarchical, \expo]{\makebox[4.2cm][c]{\includegraphics[width=0.23\textwidth]{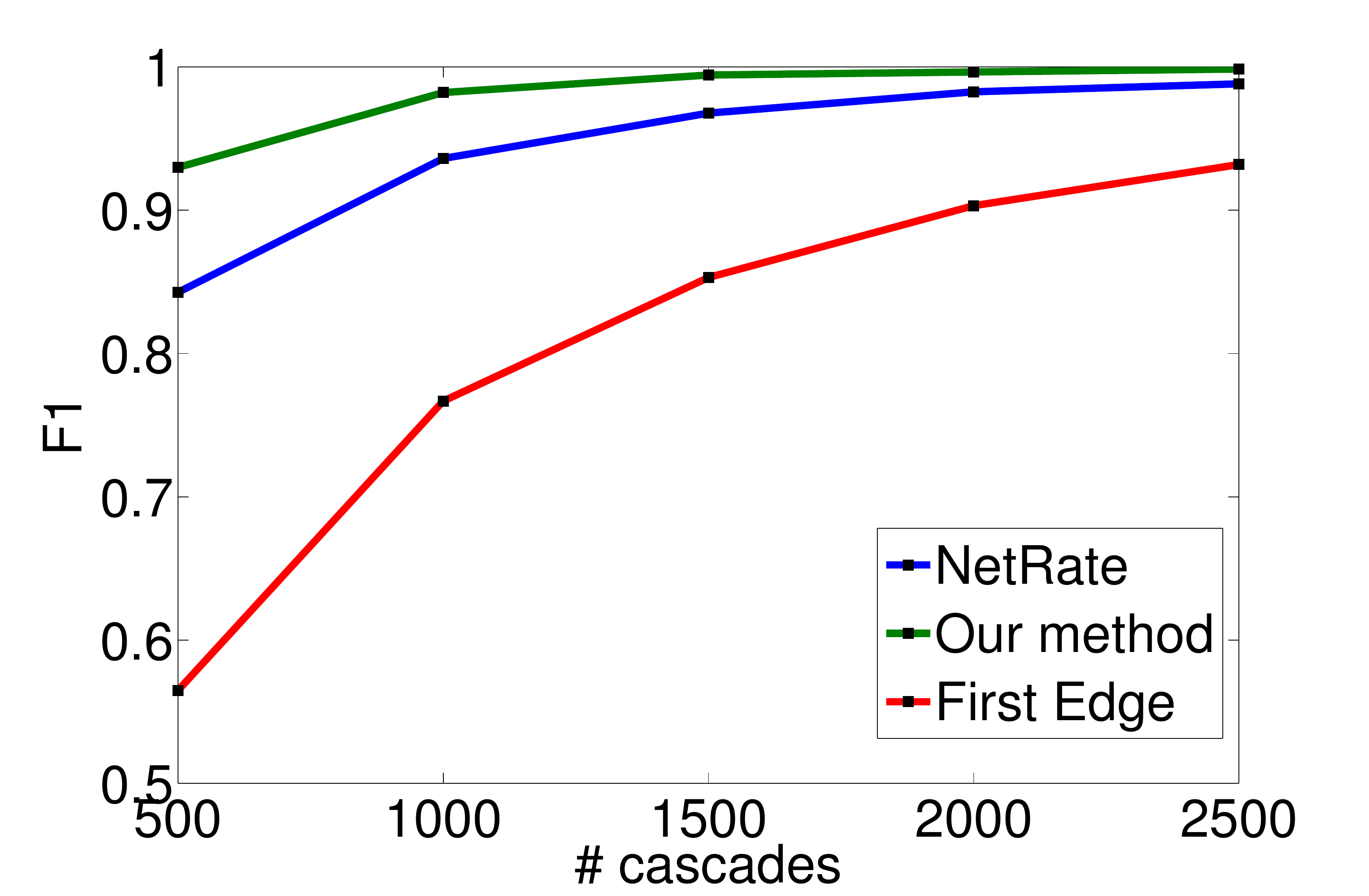}\label{fig:hierarchical-exp}}}
	\subfigure[Kronecker hierarchical, \ray]{\makebox[4.2cm][c]{\includegraphics[width=0.23\textwidth]{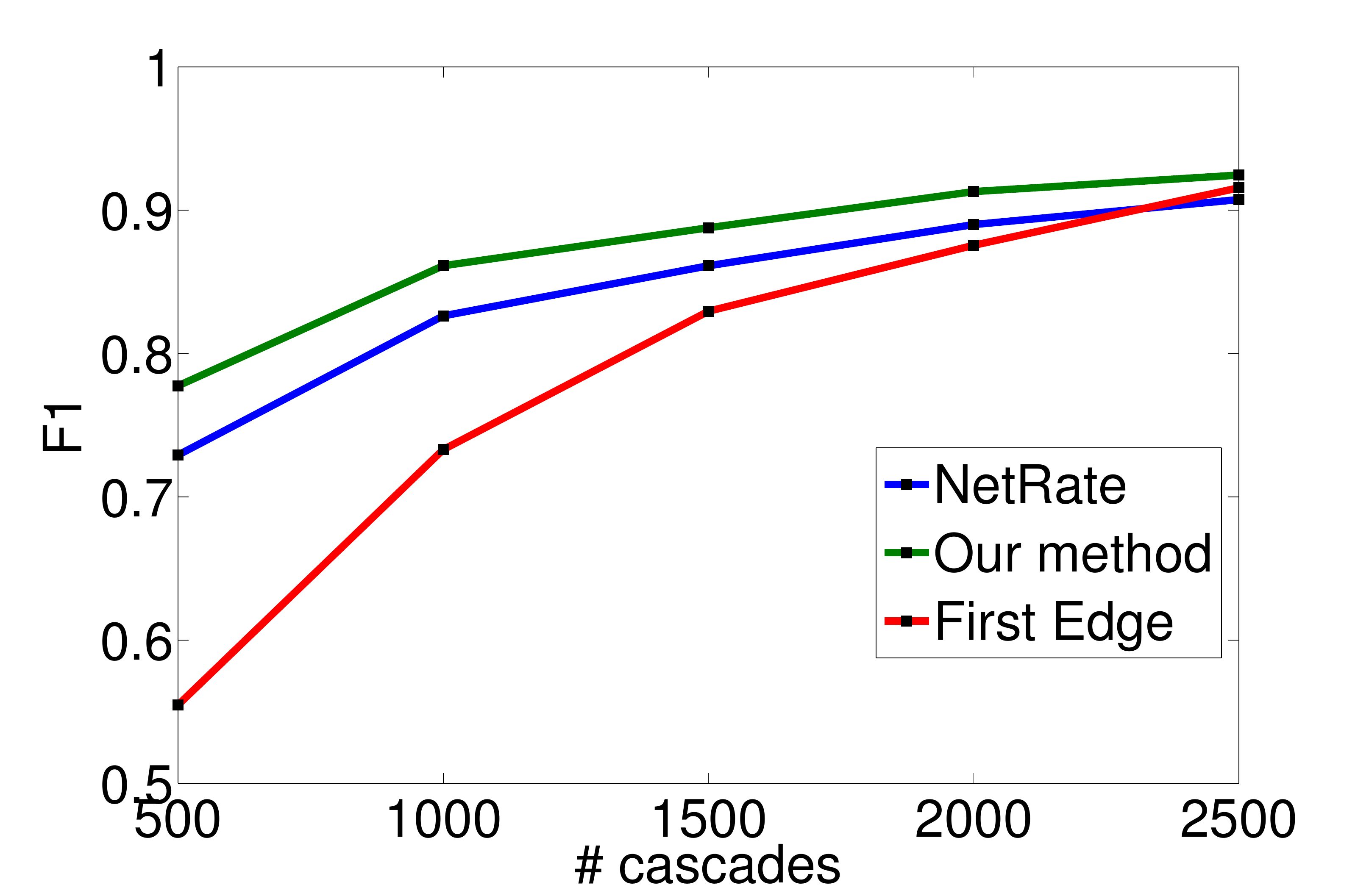}\label{fig:hierarchical-ray}}} 
	\subfigure[Forest Fire, \pow]{\includegraphics[width=0.23\textwidth]{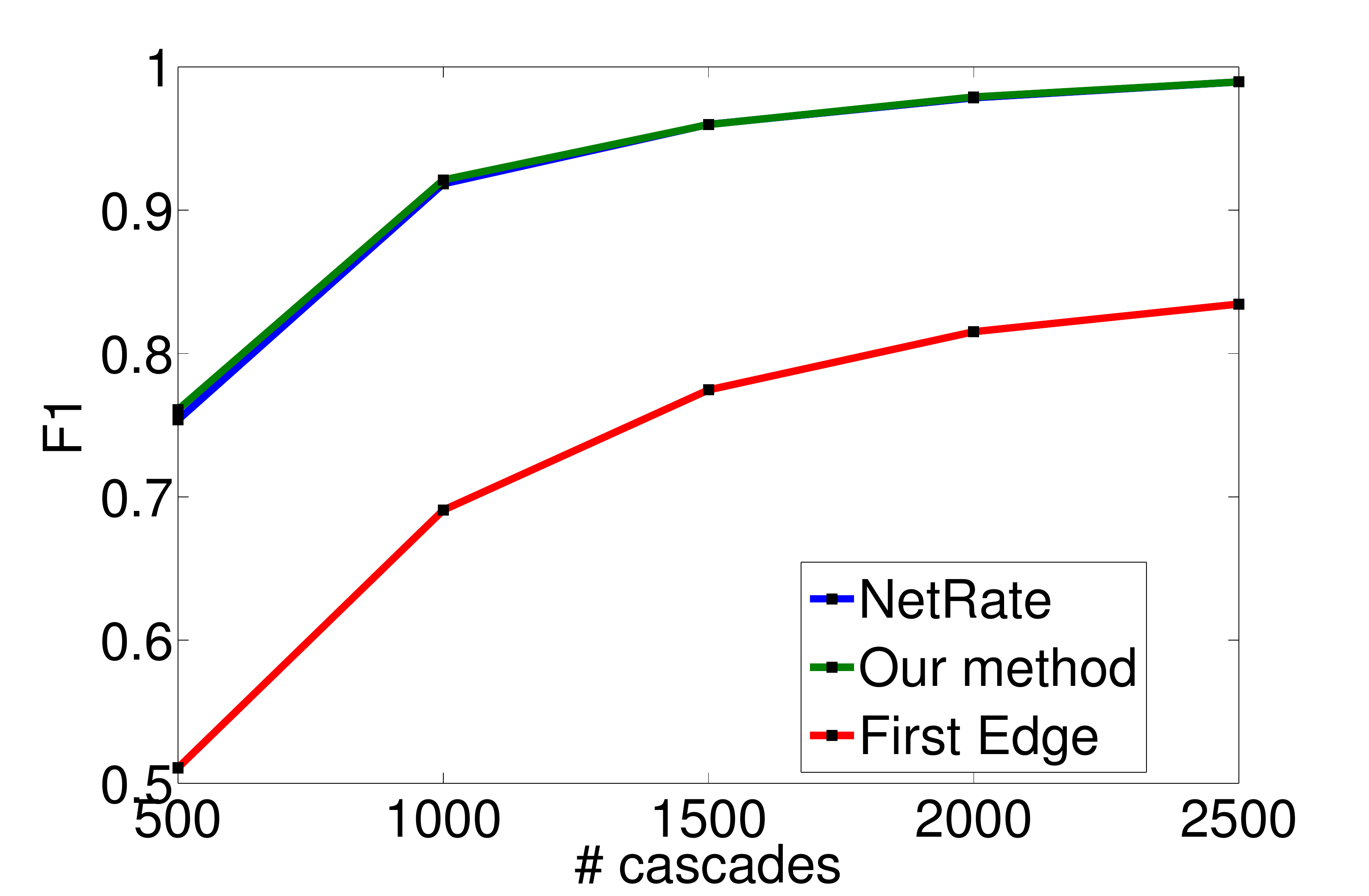}\label{fig:ff-pow}}
	\subfigure[Forest Fire, \ray]{\includegraphics[width=0.23\textwidth]{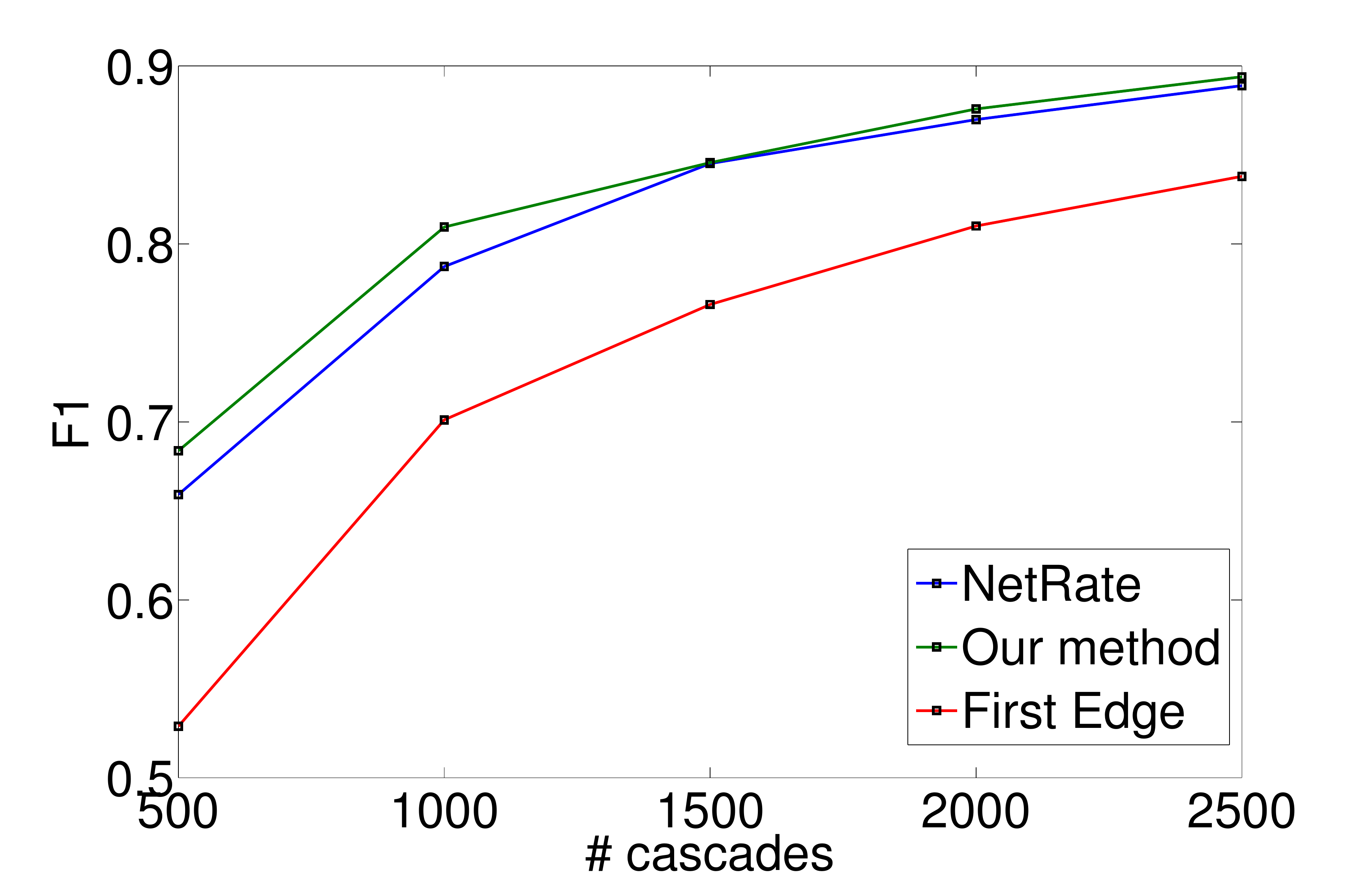}\label{fig:ff-rayleigh}}
 	\caption{$F_{1}$-score vs. \# of cascades.} \label{fig:additional-comparison}
\end{figure*}

To bound the other terms, we need the following lemma:
\begin{lemma}
	For any $\delta \geq 0$ and constants $K$ and $K^{\prime}$, the following bounds hold:
	\begin{multline} \label{eq:appendix-lemma12-first-inequality}
		 P[|\|\mathcal{Q}^n_{S^cS} - \mathcal{Q}^*_{S^cS}\||_{\infty} \geq \delta]\\
		\leq 2\exp\left(-K\frac{n\delta^2}{d^2}+\log d + log(p-d)\right)
	\end{multline}
	\begin{multline} \label{eq:appendix-lemma12-second-inequality}
		P[|\|\mathcal{Q}^n_{SS} - \mathcal{Q}^*_{SS}\||_{\infty} \geq \delta] \\
		\leq 2\exp\left(-K\frac{n\delta^2}{d^2}+2 \log d \right)
	\end{multline}
	\begin{multline} \label{eq:appendix-lemma12-third-inequality}
		P[|\|(\mathcal{Q}^n_{SS})^{-1} - (\mathcal{Q}^*_{SS})^{-1}\||_{\infty} \geq
		\delta]
		\\
		\leq 4 \exp\left(-K\frac{n\delta}{d^3}- K^{\prime} \log d \right)
	\end{multline}
\end{lemma}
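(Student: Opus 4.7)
\emph{Proof plan.} All three bounds concern how close the finite-sample Hessian $\mathcal{Q}^n \equiv \mathcal{Q}^n(\boldsymbol{\alpha}^*)$ is to the population Hessian $\mathcal{Q}^* \equiv \mathcal{Q}^*(\boldsymbol{\alpha}^*)$ in the operator $\infty$-norm, which is the maximum absolute row sum. The plan is a two-step strategy: start from an entry-wise Hoeffding bound (using the boundedness guaranteed by condition 4 and already exploited in Appendix~\ref{app:dependency-n}), upgrade it to an operator-norm bound via a union bound that accounts for the $d$ entries summed in each row, and finally reduce the inverse statement (3) to the direct statement (2) through the resolvent identity $A^{-1}-B^{-1}=A^{-1}(B-A)B^{-1}$.

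For (1) and (2), each entry $[\mathcal{Q}^n-\mathcal{Q}^*]_{jk}$ is a sample mean over $n$ cascades of centered random variables uniformly bounded by $2k_5$ (condition 4), exactly as in Eq.~\ref{eq:appendix-lemma9-bound-on-z}. Hoeffding yields $P(|[\mathcal{Q}^n-\mathcal{Q}^*]_{jk}|\geq\beta)\leq 2\exp(-\beta^2 n/(8k_5^2))$. Because $|||A|||_\infty\leq d\cdot\max_{jk}|A_{jk}|$ whenever $A$ has at most $d$ columns in the relevant block, forcing the operator $\infty$-norm to exceed $\delta$ means that some entry must exceed $\beta=\delta/d$. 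A union bound over the $d(p-d)$ entries of the $S^c\times S$ block then gives (1), and over the $d^2$ entries of the $S\times S$ block gives (2).

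For (3), the resolvent identity together with submultiplicativity of $|||\cdot|||_\infty$ bounds
\[
|||(\mathcal{Q}^n_{SS})^{-1}-(\mathcal{Q}^*_{SS})^{-1}|||_\infty \;\leq\; |||(\mathcal{Q}^n_{SS})^{-1}|||_\infty\cdot|||\mathcal{Q}^n_{SS}-\mathcal{Q}^*_{SS}|||_\infty\cdot|||(\mathcal{Q}^*_{SS})^{-1}|||_\infty.
\]
The dependency condition together with $|||\cdot|||_\infty\leq\sqrt{d}\,|||\cdot|||_2$ yields $|||(\mathcal{Q}^*_{SS})^{-1}|||_\infty\leq\sqrt{d}/C_{\min}$. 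For the finite-sample inverse, I invoke Lemma~\ref{lemma:dependency-n} to secure, on a good event of comparable or better probability, that $\Lambda_{\min}(\mathcal{Q}^n_{SS})\geq C_{\min}/2$, hence $|||(\mathcal{Q}^n_{SS})^{-1}|||_\infty\leq 2\sqrt{d}/C_{\min}$. On this event, $|||(\mathcal{Q}^n_{SS})^{-1}-(\mathcal{Q}^*_{SS})^{-1}|||_\infty\geq\delta$ forces the middle factor to exceed something of order $\delta C_{\min}^2/d$; feeding this threshold back into (2) and absorbing the Lemma~\ref{lemma:dependency-n} failure probability yields a tail of the form stated, modulo the simplification of polynomial factors in $d$ inside the exponent.

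The main obstacle is the bookkeeping: two independent $\sqrt{d}$ blowups from the $|||\cdot|||_\infty\leq\sqrt{d}|||\cdot|||_2$ conversions compound with the $d$ blowup from the union bound over row entries, so the threshold that must be fed into (2) is of order $\delta/d^2$, which is what drives the high polynomial exponent of $d$ in the denominator of the tail bound. One must also check that the high-probability event inherited from Lemma~\ref{lemma:dependency-n} contributes a tail that is dominated by, or at worst of the same order as, the Hoeffding tail coming from (2), so that the two can be combined into a single exponential of the stated form.
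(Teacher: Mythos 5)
Your proofs of the first two inequalities coincide with the paper's: an entrywise Hoeffding bound with threshold $\beta=\delta/d$, followed by a union bound over the $d(p-d)$ (resp.\ $d^2$) entries of the block, which is exactly how the paper derives Eqs.~\eqref{eq:appendix-lemma12-first-inequality} and~\eqref{eq:appendix-lemma12-second-inequality}.

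For the third inequality there is a genuine quantitative gap. You apply the resolvent identity and then submultiplicativity of $|||\cdot|||_{\infty}$, converting \emph{both} inverse factors from spectral to $\infty$-norm at a cost of $\sqrt{d}$ each. The resulting requirement is $|||\mathcal{Q}^n_{SS}-\mathcal{Q}^*_{SS}|||_{\infty}\gtrsim \delta C_{\min}^2/d$, and feeding this threshold into your inequality~\eqref{eq:appendix-lemma12-second-inequality} produces a tail of order $\exp(-Kn\delta^2/d^4)$, not the stated $d^3$ rate. This is not mere bookkeeping: the $d^3$ in~\eqref{eq:appendix-lemma12-third-inequality} is precisely what propagates into Lemma~\ref{lemma:coherence-n} and hence into the $O(d^3\log N)$ sample complexity of Theorem~\ref{th:main-result}; your route would only support $O(d^4\log N)$. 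The paper avoids the extra factor of $d$ by paying the $\infty$-to-spectral conversion \emph{once} for the whole product,
\begin{equation*}
|||(\mathcal{Q}^n_{SS})^{-1}-(\mathcal{Q}^*_{SS})^{-1}|||_{\infty}\;\leq\;\sqrt{d}\,|||(\mathcal{Q}^n_{SS})^{-1}|||_{2}\,|||\mathcal{Q}^n_{SS}-\mathcal{Q}^*_{SS}|||_{2}\,|||(\mathcal{Q}^*_{SS})^{-1}|||_{2},
\end{equation*}
where the two inverse factors are bounded by $2/C_{\min}$ and $1/C_{\min}$ in spectral norm (the latter directly from the dependency condition, the former on the high-probability event you correctly identify from Lemma~\ref{lemma:dependency-n}), with no $d$-dependence; the middle factor is then controlled through the Frobenius norm via Eq.~\ref{eq:appendix-lemma9-bound-on-spectral-norm}, which costs only one additional factor of $d$ at the entry level. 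You should replace your two $\infty$-norm conversions with this single spectral-norm estimate; otherwise your argument is sound but proves a strictly weaker statement than the lemma claims.
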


\begin{proof}
	We start by proving the first confidence interval. By definition of infinity norm of a matrix, we have:
	\begin{multline*}
	P[|\|\mathcal{Q}^n_{S^cS} - \mathcal{Q}^*_{S^cS}\||_{\infty} \geq \delta] \\
		= P\big[\max_{j\in S^c} \sum_{k\in S} |z_{jk}| \geq \delta \big]\\
		\leq (p-d) P\big[ \sum_{k\in S} |z_{jk}| \geq \delta \big],
	\end{multline*}
	where $z_{jk} = \left[\mathcal{Q}^n-\mathcal{Q}^*\right]_{jk}$ and, for the last inequality, we used the union bound and the fact that $|S^c| \leq p-d$.
	Furthermore,
	\begin{eqnarray*}
		&  P\big[ \sum_{k\in S} |z_{jk}| \geq \delta \big] & \leq P[\exists k \in S|
		|z_{jk}| \geq \delta/d]  \\
		& & \leq d P[|z_{jk}| \geq \delta/d].
	\end{eqnarray*}
	Thus,
	\begin{equation*}
		P[|\|\mathcal{Q}^n_{S^cS} - \mathcal{Q}^*_{S^cS}\||_{\infty} \geq \delta] \leq
		(p-d) d  P[|z_{jk}| \geq \delta/d].
	\end{equation*}

	At this point, we can obtain the first confidence bound by using Eq.~\ref{eq:appendix-lemma9-bound-on-z} with $\beta = \delta /d$ in the above equation. The proof of
	the second confidence bound is very similar and we omit it for brevity.
	To prove the last confidence bound, we proceed as follows:
	\begin{align*}
		& |\|(\mathcal{Q}^n_{SS})^{-1} - (\mathcal{Q}^*_{SS})^{-1}\||_{\infty} \\
		& = |\|(\mathcal{Q}^n_{SS})^{-1} [\mathcal{Q}^n_{SS} -
		\mathcal{Q}^*_{SS}](\mathcal{Q}^*_{SS})^{-1}\||_{\infty} \\
		& \leq \sqrt{d} |\|(\mathcal{Q}^n_{SS})^{-1} [\mathcal{Q}^n_{SS} -
		\mathcal{Q}^*_{SS}](\mathcal{Q}^*_{SS})^{-1}\||_{2}\\
		& \leq \sqrt{d} |\|(\mathcal{Q}^n_{SS})^{-1}\||_2 |\| \mathcal{Q}^n_{SS} -
		\mathcal{Q}^*_{SS}\||_{2} |\|(\mathcal{Q}^*_{SS})^{-1}\||_{2} \\
		& \leq \frac{\sqrt{d}}{C_{\min}} |\| \mathcal{Q}^n_{SS} -
		\mathcal{Q}^*_{SS}\||_{2} |\|(\mathcal{Q}^n_{SS})^{-1}\||_{2}.
	\end{align*}
	Next, we bound each term of the final expression in the above equation separately.
	The first term can be bounded using Eq.~\ref{eq:appendix-lemma9-bound-on-spectral-norm}:
	\begin{multline*}
		P\big[ |\| \mathcal{Q}^n_{SS} -
		\mathcal{Q}^*_{SS}\||_{2} \geq C_{min}^2 \delta/2 \sqrt{d} \big] \\
		\leq 2 \exp\big(-K\frac{n\delta^2}{d^3}+2 \log d \big),
	\end{multline*}
	The second term can be bounded using Lemma~\ref{lemma:taylor-error}:
	\begin{multline*}
		P\big[|\| (\mathcal{Q}^n_{SS})^{-1}\||_{2} \geq \frac{2}{C_{\min}}\big]\\
		= P\big[ \Lambda_{\min}(\mathcal{Q}^n_{SS}) \leq \frac{C_{\min}}{2}\big] \\
		\leq \exp \left(-K \frac{n}{d^2} + B \log d \right).
	\end{multline*}

	Then, the third confidence bound follows.
\end{proof}

\textit{Control of $A_1$.}
We start by rewriting the term $A_1$ as
\begin{equation*}
A_1 = \mathcal{Q}^*_{S^cS} (\mathcal{Q}^*_{SS})^{-1}[
(\mathcal{Q}^*_{SS})- (\mathcal{Q}^n_{SS}) ](\mathcal{Q}^n_{SS})^{-1},
\end{equation*}
and further,
\begin{multline*}
	 |\| A_1 \||_{\infty}  \leq |\| \mathcal{Q}^*_{S^cS}
	 (\mathcal{Q}^*_{SS})^{-1} \||_{\infty} \\
	 \times |\|(\mathcal{Q}^*_{SS})- (\mathcal{Q}^n_{SS}) \||_{\infty} |\| (\mathcal{Q}^n_{SS})^{-1}
	 \||_{\infty}.
\end{multline*}

Next, using the incoherence condition easily yields:
\begin{multline*}
	 |\| A_1 \||_{\infty}  \leq  (1 - \varepsilon)  |\|(\mathcal{Q}^*_{SS})-
	 (\mathcal{Q}^n_{SS}) \||_{\infty}\\
	 \times \sqrt{d} |\| (\mathcal{Q}^n_{SS})^{-1} \||_{2}
\end{multline*}

Now, we apply Lemma~\ref{lemma:taylor-error} with $\delta = C_{\min}/2$ to have that $|\| (\mathcal{Q}^n_{SS})^{-1} \||_{2} \leq \frac{2}{C_{\min}}$ with probability greater
than $1- \exp(-Kn/d^2 + K^{\prime} \log d)$, and then use Eq.~\ref{eq:appendix-lemma12-third-inequality} with $\delta = \frac{\varepsilon C_{\min}}{12 \sqrt{d}}$ to conclude
that
\begin{equation*}
	P\big[ |\| A_1 \||_{\infty} \geq \frac{\varepsilon}{6} \big] \leq 2 \exp\left(-K
	\frac{n}{d^3} + K^{\prime} \log d\right).
\end{equation*}

\textit{Control of $A_2$.}
We rewrite the term $A_2$ as
\begin{equation*}
	|\|A_2 \||_{\infty} \leq  |\|\mathcal{Q}^n_{S^cS} - \mathcal{Q}^*_{S^cS}\||_{\infty}
	|\| (\mathcal{Q}^n_{SS})^{-1} -
	(\mathcal{Q}^*_{SS})^{-1} \||_{\infty},
\end{equation*}
and then use Eqs.~\ref{eq:appendix-lemma12-first-inequality} and~\ref{eq:appendix-lemma12-second-inequality} with $\delta = \sqrt{\varepsilon/6}$ to conclude that
\begin{multline*}
	P \big[ |\|A_2 \||_{\infty} \geq \frac{\varepsilon}{6} \big] \leq \\
	4 \exp\left(-K \frac{n}{d^3} + \log(p-d) + K^{\prime}\log p \right).
\end{multline*}

\textit{Control of $A_3$.}
We rewrite the term $A_3$ as
\begin{eqnarray*}
	& |\|A_3\||_{\infty} & = \sqrt{d} |\|(\mathcal{Q}^*_{SS})^{-1}\||_{2}
	|\|\mathcal{Q}^n_{S^cS} - \mathcal{Q}^*_{S^cS}\||_{\infty}\\
	& & \leq \frac{\sqrt{d}}{C_{\min}} |\|\mathcal{Q}^n_{S^cS} -
	\mathcal{Q}^*_{S^cS}\||_{\infty}.
\end{eqnarray*}

We then apply Eq.~\ref{eq:appendix-lemma12-first-inequality} with $\delta = \frac{\varepsilon C_{\min}}{6 \sqrt{d}}$ to conclude that
\begin{equation*}
	P\big[ |\|A_3\||_{\infty}\geq \frac{\varepsilon}{6} \big] \leq \exp\left(-K
	\frac{n}{d^3} + \log(p-d) \right),
\end{equation*}
and thus,
\begin{multline*}
	P\big[ |\|\mathcal{Q}^n_{S^cS}
(\mathcal{Q}^n_{SS})^{-1} \||_{\infty} \geq 1- \frac{\varepsilon}{2}  \big]\\
= \mathcal{O}\left(\exp(-K \frac{n}{d^3}+ \log p)\right).
\end{multline*}

\section{Additional experiments} \label{app:experiments}
\xhdr{Parameters $(n, p, d)$} 
Figure~\ref{fig:additional-p} shows the success probability at inferring the incoming links of nodes on the same type of canonical networks as depicted in Fig.~\ref{fig:canonical-networks}. We
choose nodes the same in-degree but different super-neighboorhod set sizes $p_i$ and experiment with different scalings $\beta$ of the number of cascades $n = 10 \beta d \log p$. We set
the re\-gu\-la\-ri\-zation parameter $\lambda_n$ as a constant factor of $\sqrt{\log (p)/ n}$ as suggested by Theorem~\ref{th:main-result} and, for each node, we used cascades which contained at 
least one node in the super-neighborhood of the node under study. 
We used an exponential transmission model and time window $T = 10$.
As predicted by Theorem~\ref{th:main-result}, very different $p$ values lead to curves that line up with each other quite well.

Figure~\ref{fig:additional-d} shows the success probability at inferring the incoming links of nodes of a hierarchical Kronecker network with equal super neighborhood size ($p_i = 70$) but 
different in-degree ($d_i$) under different scalings $\beta$ of the number of cascades $n = 10 \beta d \log p$ and choose the regularization parameter $\lambda_n$ as a constant factor 
of $\sqrt{\log (p)/ n}$ as suggested by Theorem~\ref{th:main-result}.
We used an exponential transmission model and time window $T = 5$.
As predicted by Theorem~\ref{th:main-result}, in this case, different $d$ values lead to noticeably different curves.

\xhdr{Comparison with \netrate and First-Edge} Figure~\ref{fig:additional-comparison} compares the accuracy of our algorithm, \netrate and First-Edge against number of cascades for 
different type of networks and transmission models.
Our method typically outperforms both competitive methods. We find especially striking the competitive advantage with respect to First-Edge, however, this may be explained
by comparing the sample complexity results for both methods: First-Edge needs $O(N d \log N)$ cascades to achieve a probability of success approaching $1$ in a rate polynomial in the 
number of cascades while our method needs $O(d^3 \log N)$ to achieve a probability of success approaching $1$ in a rate exponential in the number of cascades.

\end{appendix}

\end{document}